\theoremstyle{plain}               
\newtheorem{thm}{Theorem}[section]
\newtheorem{lem}{Lemma}[section]
\newtheorem{cor}{Corollary}[section]
\newtheorem{defn}{Definition}[section]
\theoremstyle{remark}
\newtheorem{rem}{Remark}[section]
\def\beq{\begin{equation}}
\def\eeq{\end{equation}}
\def\bq{\begin{quote}}
\def\eq{\end{quote}}
\def\ben{\begin{enumerate}}
\def\een{\end{enumerate}}
\def\bit{\begin{itemize}}
\def\eit{\end{itemize}}
\def\ra{\rightarrow}
\def\lb{\left(}
\def\rb{\right)}
\def\lset{\lbrace}
\def\rset{\rbrace}
\def\lk{\left\langle}
\def\rk{\right\rangle}
\def\l|{\left|}
\def\r|{\right|}
\def\lbr{\left[}
\def\rbr{\right]}
\newcommand\C{\mathbbm{C}}
\newcommand\Z{\mathbbm{Z}}
\newcommand\R{\mathbbm{R}}
\newcommand\N{\mathbbm{N}}
\newcommand\M{\mathcal{M}}
\newcommand\D{\mathcal{D}}
\newcommand{\U}{\mathcal{U}}
\newcommand{\Tm}{\mathcal{T}}
\newcommand{\Em}{\mathcal{E}}
\newcommand{\Lm}{\mathcal{L}}
\newcommand{\proj}[2]{| #1 \rangle\!\langle #2 |}
\newcommand{\chanSet}{\mathfrak{C}}
\newcommand{\ketbra}[1]{|#1\rangle\!\langle#1|}
\newcommand{\tr}{\text{tr}}
\newcommand{\one}{\mathds{1}}
\newcommand{\id}{\text{id}}
\newcommand{\liou}{\mathcal{L}}
\newcommand{\scalar}[2]{\langle#1|#2\rangle}
\newcommand{\Ent}{\text{Ent}}
\newcommand{\lioud}{\liou_{\text{dep}}}
\begin{document} 

\title{{Entropy Production of Doubly Stochastic Quantum Channels}}

\author{\vspace{-0.15cm}Alexander M\"uller-Hermes\thanks{muellerh@ma.tum.de}~}
\author{Daniel Stilck Fran\c{c}a\thanks{dsfranca@mytum.de}~}
\author{Michael M.~Wolf\thanks{wolf@ma.tum.de}~}
\affil{\vspace{-0.15cm}\small{Department of Mathematics, Technische Universit\"at M\"unchen, 85748 Garching, Germany}}

%
%

\date{\today}
\maketitle


\begin{abstract}

We study the entropy increase of quantum systems evolving under primitive, doubly stochastic Markovian noise and thus converging to the maximally mixed state. 
This entropy increase can be quantified by a logarithmic-Sobolev constant of the Liouvillian generating the noise. We prove a universal lower bound on this constant that stays invariant under taking tensor-powers. 
Our methods involve a new comparison method to relate logarithmic-Sobolev constants of different Liouvillians and a technique to compute logarithmic-Sobolev inequalities of Liouvillians with eigenvectors forming a projective representation of a finite abelian group. 
Our bounds improve upon similar results established before and as an application 
we prove an upper bound on continuous-time quantum capacities. In the last part of this work we study entropy production estimates of discrete-time doubly-stochastic quantum channels by extending the framework of discrete-time logarithmic-Sobolev inequalities to the quantum case.    
\end{abstract}

\tableofcontents

\section{Introduction}

Consider a quantum system affected by Markovian noise driving every initial state to the maximally mixed state. For such a, so-called doubly stochastic and primitive, noise the von-Neumann entropy $S$ will steadily increase in time. Here we want to quantify how much entropy is produced by such a noise channel. More specifically let the Markovian noise channel be modeled by a quantum dynamical semigroup $T_t = e^{t\Lm}$ and recall that the von-Neumann entropy of a state $\rho$ is given by $S(\rho)=-\text{tr}(\rho\log(\rho))$. We want to establish bounds of the form  
\begin{align}
S(T_t(\rho))-S(\rho)\geq C_t(log(d)-S(\rho))
\label{equ:GenEntProd}
\end{align}
for some time-dependent $C_t < 1$ independent of the state $\rho$. To find good $C_t$ for a given noise channel $T_t$ in the above bound we apply the framework of logarithmic Sobolev (LS) inequalities~\cite{Olkiewicz1999246,Kastoryanosob}. For special channels bounds of the form of \eqref{equ:GenEntProd} have been considered in ~\cite{benor}. Similar bounds have also been studied in terms of contractive properties of the channel with respect to different norms in \cite[cf. Remark \ref{compotherbounds}]{streater,raginsky2002entropy}. However, in these cases the lower bound is given in terms of the trace or Hilbert-Schmidt distance between the state $\rho$ and the maximally mixed state, which are upper bounded by a constant independent
of the dimension, while the left-hand side of \eqref{equ:GenEntProd} is of order $\log\lb d\rb$. The entropy production of quantum dynamical semigroups has also been investigated in a more general setting in \cite{spohn}.

For many applications in quantum information theory it will be important to quantify the entropy production of tensor-powers $T^{\otimes n}_t$ of the noise channel. In our main result we obtain bounds of this form, where $C_t$ does not depend on $n$. We improve on recent results by Temme et al.~\cite{2014arXiv1403.5224T} who established such bounds using spectral theory. The invariance under taking tensor-powers makes these bounds important for many applications including the study of stability in Markovian systems~\cite{stabcubitt}, mixing-time bounds~\cite{2014arXiv1403.5224T} and quantifying the storage time in quantum memories~\cite{subdivisioncapacities}.

In the second part of this paper we consider entropy production estimates of the form \eqref{equ:GenEntProd} for discrete-time doubly stochastic quantum channels. We introduce the framework of discrete LS inequalities, which allows us to generalize results from classical Markov  chains, where there already is a vast literature on the subject~\cite{diaconis1996,diaconiscomparison,bobkov2006modified,miclo}.

This paper is organized as follows:
\begin{itemize}
 \item In section \ref{sec:2} we introduce our notation, definitions and show how the framework of logarithmic Sobolev inequalities relates to the entropy production of a doubly stochastic Markovian time-evolution. 
\item In section \ref{sec:3} we prove our main result. This is an improved lower bound on the LS constant of tensor powers of doubly stochastic semigroups (Theorem \ref{lowerboundtensor}), which directly implies an entropy production estimate (Corollary \ref{cor:EntrProdTens}) for tensor-products of Markovian time-evolutions. Previous approaches to this problem focused on spectral and interpolation techniques~\cite{hyperspec,2014arXiv1403.5224T}. Here we obtain better bounds with simpler proofs using 
group theoretic techniques similar to the ones developed in \cite{junge} and comparison inequalities. 
\item In section \ref{sec:5} we consider Liouvillians of the form $\liou=T-\id$, where $T$ is a quantum channel. We show how to use LS constants of classical Markov chains to analyze the entropy production of such semigroups. As an application of our techniques we compute the LS constant of all doubly stochastic qubit Liouvillians of this form.
\item In section \ref{sec:4} we extend techniques from LS inequalities to analyze discrete-time quantum channels. Here we not only get bounds on the entropy production (Theorem \ref{timproveddata}) but also on the hypercontractivity (Theorem \ref{thm:DiscreteHyper}) of these channels. However, the obtained bounds are in general weaker and become trivial as we increase the number of copies of the channel. These results are mostly a generalization of \cite{miclo}.
\item In section \ref{sec:6} we apply the results from section \ref{sec:2} to unitary quantum subdivision capacities introduced by some of the authors~\cite{subdivisioncapacities}. We show (Theorem \ref{thm:UnitarySubDCapDepolBound}) that the unitary quantum subdivision capacity 
of any doubly stochastic and primitive Liouvillian has to decay exponentially in time. Our bound improves similar results found in~\cite{benor,ben2013quantum}. In the second part of the section we compute entropy production estimates 
for random Pauli channels.
\end{itemize}

\section{Notations and preliminaries}
\label{sec:2}

Throughout this paper $\M_d$ will denote the set of complex $d\times d$-matrices and $\M_d^+\subset \M_d$ the cone of strictly positive matrices. The set of $d$-dimensional density matrices or states, i.e. positive matrices in $\M_d$ with trace $1$, will be denoted by $\D_d$ and the set of strictly positive states will be denoted by $\D_d^+$. The $d\times d$ identity matrix will be denoted
by $\one_d$. 

We will call a completely positive, trace preserving linear map $T:\M_d\to\M_d$
a quantum channel and will denote its adjoint with respect to the Hilbert-Schmidt scalar product by $T^*$. A quantum channel $T$ is said to be doubly stochastic if
 \begin{equation*}
 T\lb \one_d\rb=T^*\lb\one_d\rb=\one_d. 
 \end{equation*}

A family of quantum channels $\{T_t\}_{t\in\R_+}$ parametrized by a nonnegative parameter will be called a 
quantum dynamical semigroup if $T_0=\id_d$ (the identity map in $d$ dimensions), $T_{s+t}=T_sT_t$ for any $s,t\in\R_+$ and $T_t$ depends continuously on $t$. 
Physically a quantum dynamical semigroup describes a Markovian evolution in continuous time. It is well known~\cite{lindblad1976,gorini} 
that any quantum dynamical semigroup is generated by a Liouvillian $\liou:\M_d\ra\M_d$ of the form 
\begin{align*}
\liou(X)=\Phi(X)-\kappa X-X\kappa^\dagger,
\end{align*}
for $\kappa\in\M_d$ and $\Phi:\M_d\to\M_d$ completely positive such that $\Phi^*(\one_d) = \kappa + \kappa^\dagger$. 

A quantum dynamical semigroup generated by a Liouvillian $\liou$ 
is said to have a fixed point $\sigma\in\D_d$ if $\liou(\sigma)=0$.
Then all quantum channel in $\{e^{t\liou}\}_{t\in\R_+}$ have $\sigma$ as a fixed point. In the special case where the fixed point is $\sigma = \frac{\one_d}{d}$ we call the quantum dynamical semigroup and its Liouvillian doubly stochastic. 
If the generator is hermitian with respect to the Hilbert-Schmidt scalar product, i.e. $\liou=\liou^*$, we will call it reversible. We will be interested in the asymptotic behavior of quantum dynamical semigroups. A quantum dynamical semigroup $T_t:\M_d\ra\M_d$ with a full-rank fixed point $\sigma\in\D_d$ is called primitive if $\lim\limits_{t\to\infty}T_t(\rho)=\sigma$
for all states $\rho\in\D_d$. In the following we will be interested in particular in tensor-products of quantum dynamical semigroups. Given a Liouvillian $\Lm:\M_d\ra\M_d$ we denote by 
\begin{align}
\Lm^{(n)} := \sum^{n}_{i=1} \id_d^{\otimes i-1}\otimes\liou\otimes \id_d^{\otimes (n-i)}
\label{equ:tensorLiouv}
\end{align}
the generator of the quantum dynamical semigroup $(e^{t\Lm})^{\otimes n}$.

We will need distance measures on the set $\M_d$. Recall the family of Schatten $p$-norms for $p\in\lbr 1,\infty\rb$ defined as 
\begin{align*}
\| X\|_p := \lb\sum^d_{i=1} s_i(X)^p\rb^{1/p}
\end{align*}
where $s_i(X)$ denotes the $i$-th singular value of $X\in\M_d$ and $s(X)\in\R^d$ is the 
vector containing the ordered singular values of $X$ as entries. Note that we can consistently define $\| X\|_\infty := \sup_{i\in\lset 1,\ldots ,d\rset} s_i(X)$ for any $X\in\M_d$.

Another distance measure (although not a metric) on the set $\D_d$ of states is the relative entropy (also known as Kullback-Leibler divergence):
\begin{align*}
 D(\rho\|\sigma)=
\begin{cases} 
\tr[\rho(\log\rho-\log\sigma)], & \mbox{if }  \mbox{supp}(\rho)\subset\mbox{supp}(\sigma)
\\ +\infty, & \mbox{otherwise}
\end{cases}
\end{align*}
for $\rho,\sigma\in\D_d$. Recall Pinsker's inequality: 
\begin{align*}
 D(\rho\|\sigma)\geq\frac{1}{2}\|\rho-\sigma\|_1^2
\end{align*}
for any $\rho,\sigma\in \D_d$. This inequality implies in particular that $D(\rho\|\sigma)=0$ iff $\rho=\sigma$.

\subsection{The LS-1 Constant}
Consider an inequality of the form:
 \begin{equation}\label{expdecayrel}
  D\left(T_t(\rho)\Big{\|}\frac{\one_d}{d}\right)\leq e^{-2\alpha t}D\left(\rho\Big{\|}\frac{\one_d}{d}\right)
 \end{equation}
for some doubly stochastic Liouvillian $\liou:\M_d\to\M_d$, and where $T_t=e^{t\liou}$, $\rho\in\D_d$ and $\alpha\in\R_+$ is a constant independent of $\rho$. Using $D\left(\rho\big{\|}\frac{\one_d}{d}\right)=\log(d)-S(\rho)$ the above inequality is clearly equivalent to \eqref{equ:GenEntProd} for $C_t = (1- e^{-2\alpha t})$. The framework of logarithmic Sobolev-1-inequalities (LS-1 inequality) allows us to determine the optimal $\alpha$ such that \eqref{expdecayrel} holds. To this end define the function $f(t)=D\left(T_t(\rho)\big{\|}\frac{\one_d}{d}\right)$. If we can show 
\begin{equation}\label{difineq1}
\frac{df}{dt}\leq-2\alpha f
\end{equation}
for some $\alpha\in\R_+$ it follows that $f(t)\leq e^{-2\alpha t}f(0)$. The time derivative of the relative entropy at $t=0$, also called the entropy production\cite{spohn},
is given by:
\begin{align*}
 \frac{d}{dt} D\left(T_t(\rho)\Big{\|}\frac{\one_d}{d}\right)\bigg{|}_{t=0} &=\tr[\liou(\rho)(\log(d)+\log(\rho))] \\
 &=\tr[\liou(\rho)\log(\rho)]
\end{align*}
as $\tr(\liou(\rho)) = 0$ for any $\rho\in\D_d$.
This motivates the definition of the LS-1 constant:
\begin{defn}[LS-1 constant]
 Let $\liou:\M_d\to\M_d$ be a doubly stochastic Liouvillian. We define its \textbf{LS-1 constant} as
 \begin{align}
  \alpha_1(\liou) :=\inf\limits_{\rho\in\D_d^+}-\frac{1}{2}\frac{\tr[\liou(\rho)\log(\rho)]}{D(\rho\big{\|}\frac{\one_d}{d})}
  \label{alpha1Def}
 \end{align}
\label{defn:LS-1}
\end{defn}
By the above discussion \eqref{expdecayrel} is valid for $\alpha = \alpha_1(\liou)$ as it is true for small times and the time can be extended by iterating the bound. Also by definition it is the optimal constant such that \eqref{expdecayrel} holds independent of $\rho$. 
Note that we may only consider states with full rank in the optimization of Definition \ref{defn:LS-1} due to the continuity of the relative entropy and entropy production. Also note that if the Liouvillian $\Lm$ has another fixed point different from $\frac{\one_d}{d}$, then $\alpha_1(\liou) = 0$ and \eqref{expdecayrel} reduces to the data processing inequality. In the following we will always consider primitive Liouvillians and thereby avoid this issue.

Using $D\left(\rho\big{\|}\frac{\one_d}{d}\right)=\log(d)-S(\rho)$ the following theorem follows from \eqref{expdecayrel}:

\begin{cor}[Entropy increase]\label{entdiff}
 Let $\liou:\mathcal{M}_d\to\mathcal{M}_d$ be a doubly stochastic Liouvillian and $\alpha\leq \alpha_1\lb\liou\rb$ then
 \begin{align*}
  S(T_t(\rho))-S(\rho)\geq(1-e^{-2\alpha t})(log(d)-S(\rho))
 \end{align*}

 for any $\rho\in\D_d$ and where $T_t=e^{\liou t}$ denotes the semigroup generated by $\liou$.
\end{cor}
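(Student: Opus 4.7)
The plan is to follow exactly the route sketched in the text preceding the corollary, making each step precise. I will work with the function $f(t) = D\!\left(T_t(\rho)\big\|\tfrac{\one_d}{d}\right)$ for a fixed $\rho \in \D_d^+$, and first establish the exponential decay bound \eqref{expdecayrel} from the LS-1 inequality. The key observation is that, by the semigroup property, for every $t \geq 0$ the state $T_t(\rho)$ lies in $\D_d^+$ (since $T_t$ is primitive and continuous, and positivity is preserved strictly in a neighbourhood of the fixed point; for the short-time regime one can simply note that any full-rank state remains full-rank under a continuous semigroup of channels for $t$ in a neighbourhood of $0$, and the set on which it remains full-rank is open and contains $0$, while primitivity handles large $t$). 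Hence the defining inequality of $\alpha_1(\liou)$ can be applied pointwise in $t$ to the state $T_t(\rho)$.

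Concretely, differentiating and using $\tr[\liou(T_t(\rho))] = 0$ gives
\begin{equation*}
\frac{df}{dt} \;=\; \tr\!\big[\liou(T_t(\rho))\log(T_t(\rho))\big],
\end{equation*}
and the definition of $\alpha_1(\liou)$ applied to $T_t(\rho)\in\D_d^+$ yields
\begin{equation*}
\tr\!\big[\liou(T_t(\rho))\log(T_t(\rho))\big] \;\leq\; -2\alpha_1(\liou)\,D\!\left(T_t(\rho)\Big\|\tfrac{\one_d}{d}\right) \;=\; -2\alpha_1(\liou)\,f(t).
\end{equation*}
For any $\alpha \leq \alpha_1(\liou)$ this gives $f'(t) \leq -2\alpha f(t)$, and a Grönwall argument (or direct integration of $(\log f)' \leq -2\alpha$) yields $f(t) \leq e^{-2\alpha t} f(0)$, which is precisely \eqref{expdecayrel}.

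Next, I substitute the identity $D\!\left(\mu\big\|\tfrac{\one_d}{d}\right) = \log(d) - S(\mu)$, valid for every state $\mu \in \D_d$, into both sides of \eqref{expdecayrel}. This gives
\begin{equation*}
\log(d) - S(T_t(\rho)) \;\leq\; e^{-2\alpha t}\,\big(\log(d) - S(\rho)\big),
\end{equation*}
and rearranging produces the claimed bound
\begin{equation*}
S(T_t(\rho)) - S(\rho) \;\geq\; (1-e^{-2\alpha t})\big(\log(d) - S(\rho)\big).
\end{equation*}

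Finally, to extend from $\rho\in\D_d^+$ to arbitrary $\rho \in \D_d$, I would invoke continuity: both sides are continuous functions of $\rho$ on $\D_d$ (the von Neumann entropy is continuous on the compact set $\D_d$, as is $T_t$), and $\D_d^+$ is dense in $\D_d$, so the inequality passes to the closure. The only step that requires any care is justifying that the short-time differential inequality really extends to all $t \geq 0$ — this is where the semigroup property and the fact that the LS-1 inequality holds uniformly over $\D_d^+$ combine; I do not anticipate any genuine obstacle, since the argument is essentially the standard Grönwall-type iteration used to pass from an infinitesimal entropy-production bound to its integrated form.
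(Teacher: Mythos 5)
Your proposal is correct and follows essentially the same route as the paper, which likewise derives the corollary by differentiating $f(t)=D\left(T_t(\rho)\big\|\tfrac{\one_d}{d}\right)$, applying the LS-1 definition to get $f'\leq -2\alpha f$, integrating, and substituting $D\left(\rho\big\|\tfrac{\one_d}{d}\right)=\log(d)-S(\rho)$. The only step you slightly overcomplicate is full-rank preservation: since $T_t$ is positive and unital, $\rho\geq\epsilon\one_d$ immediately gives $T_t(\rho)\geq\epsilon T_t(\one_d)=\epsilon\one_d$ for all $t\geq 0$, with no appeal to primitivity needed.
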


\subsection{The LS-2 Constant}

The non-linearity of the optimization problem in \eqref{alpha1Def} makes it hard to compute the LS-1 constant analytically. In fact there are only few examples even for classical Markov chains~\cite{bobkov2006modified} where the LS-1 constant is known. For many applications, however, it will be enough to have good lower bounds on $\alpha_1$. Here these lower bounds will be in terms of the so-called LS-2 constant:
 
 \begin{defn}[LS-2 constant]
Let $\liou:\M_d\to\M_d$ be a doubly stochastic Liouvillian. We define its \textbf{LS-2 constant} as
\begin{align*}
 \alpha_2(\liou):=\inf\limits_{X\in\M_d^+}\frac{\Em^2_\liou(X)}{\Ent_2(X)}.
\end{align*}
Here we used the so-called \textbf{2-Dirichlet form}  
\begin{align*}
\Em^2_\liou(X):=-\frac{1}{d}\tr[\liou(X)X]
\end{align*}
 and the so-called \textbf{2-relative entropy} given by
 \begin{equation*}
  \Ent_2(X):=\frac{1}{2d}\tr\left[ X^2\lb\log\lb\frac{X^2}{\tr(X^2)}\rb+\log(d)\rb\right]
 \end{equation*}
\label{defn:LS2}
\end{defn}

It is well known that a Liouvillian is primitive iff we have a unique strictly positive density matrix in the kernel of $\liou$. From this it is easy to see that $\alpha_2(\liou)=0$ 
if the Liouvillian is not primitive, as in this case there exists $X\in\M_d^+$ s.t. $X\not\in\text{span}\{\one\}$ and $\Em_2^\liou(X)=0$. We will later see that the LS-2 constant is strictly positive if the Liouvillian is primitive.

As the 2-Dirichlet form is bilinear, $\alpha_2$ is easier to compute than $\alpha_1$, where a logarithm occurs in the numerator (see \eqref{alpha1Def}). Also by $\alpha_2(\liou)=\alpha_2(\frac{\liou+\liou^*}{2})$ we may always suppose that the Liouvillian is reversible when computing $\alpha_2$. Another advantage of the LS-2 constant in comparison to the LS-1 constant is the following hypercontractive characterization. This characterization allows the use of tools from other areas of mathematics, such as interpolation theory, to compute the LS-2 constant. 

\begin{thm}[Hypercontractive picture~\cite{Olkiewicz1999246}]\label{hypercontrpic}
 Let $\liou:\M_d\to\M_d$ be a primitive doubly stochastic Liouvillian and $T_t=e^{t\liou}$ its associated semigroup. Then:
 \begin{enumerate}
 \item If there exists $\alpha>0$ such that $\|T_t(X)\|_{p(t),\frac{\one_d}{d}}\leq\|X\|_{2,\frac{\one_d}{d}}$ for all $X\in\M_d^+$, where 
  $p(t)=1+e^{2\alpha t}$, it follows that $\alpha_2(\liou)\geq\alpha$.
  \item For $\alpha_2(\liou)>0$, 
  we have $\|T_t(X)\|_{p(t),\frac{\one_d}{d}}\leq\|X\|_{2,\frac{\one_d}{d}}$ for all $X\in\M_d^+$,
  with $p(t)=1+e^{2\alpha_2 t}$ for $\liou$ reversible and $p(t)=1+e^{\alpha_2 t}$ if not. 
 \end{enumerate}
 
 Here we used the $\frac{\one_d}{d}$-\textbf{weighted} $l_p$-\textbf{norm} on $\M_d$ given by:
\begin{align*}
\| Y\|_{p,\frac{\one_d}{d}}:=\frac{1}{d^{\frac{1}{p}}}(\tr[|Y|^p])^{\frac{1}{p}}  
\end{align*}
  
\end{thm}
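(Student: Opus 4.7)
The plan is to establish both directions by differentiating $F(t):=\|T_t(X)\|_{p(t),\frac{\one_d}{d}}$ for $X\in\M_d^+$ and matching the derivative against the quantities entering the LS-2 constant. Writing
$$\log F(t) = -\frac{\log d}{p(t)} + \frac{1}{p(t)}\log\tr[T_t(X)^{p(t)}],$$
the chain rule combined with the identity
$$\frac{d}{dt}\tr[Y(t)^{p(t)}] = p'(t)\tr[Y(t)^{p(t)}\log Y(t)] + p(t)\tr[Y(t)^{p(t)-1}\liou(Y(t))]$$
applied to $Y(t)=T_t(X)$ yields a closed expression for $F'(t)/F(t)$ in terms of an $\Ent_2$-type contribution evaluated at $T_t(X)^{p(t)/2}$ and a ``$p$-Dirichlet'' term $-\frac{1}{d}\tr[T_t(X)^{p(t)-1}\liou(T_t(X))]$.

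For part 1, I would evaluate this derivative at $t=0$, where $p(0)=2$ and $p'(0)=2\alpha$. Using $\tr[X^2\log X^2]=2\tr[X^2\log X]$ and $\tr[X\liou(X)]=-d\,\Em^2_\liou(X)$, the expression collapses to
$$\frac{F'(0)}{F(0)}=\frac{d}{\tr[X^2]}\bigl[\alpha\,\Ent_2(X)-\Em^2_\liou(X)\bigr].$$
The hypothesis $F(t)\le F(0)$ for all $t\ge 0$ forces $F'(0)\le 0$, hence $\alpha\,\Ent_2(X)\le\Em^2_\liou(X)$ for every $X\in\M_d^+$, which by the definition of $\alpha_2$ gives $\alpha_2(\liou)\ge\alpha$.

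For part 2, I must upgrade the LS-2 hypothesis into $F'(t)\le 0$ for all $t\ge 0$. The general-time version of the calculation above reduces this to the inequality
$$2d\,p'(t)\,\Ent_2\!\bigl(T_t(X)^{p(t)/2}\bigr)\le -p(t)^2\,\tr\!\bigl[T_t(X)^{p(t)-1}\liou(T_t(X))\bigr].$$
In the reversible case I would invoke the quantum Stroock--Varopoulos inequality
$$-\tfrac{1}{d}\tr[Y^{p-1}\liou(Y)]\ge \tfrac{4(p-1)}{p^2}\,\Em^2_\liou(Y^{p/2})$$
to bound the right-hand side, and then apply the LS-2 inequality $\alpha_2\,\Ent_2(Z)\le\Em^2_\liou(Z)$ with $Z=T_t(X)^{p(t)/2}$. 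The composite sufficient condition becomes $p'(t)\le 2\alpha_2(p(t)-1)$, which is satisfied with equality by $p(t)=1+e^{2\alpha_2 t}$.

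The main obstacle is the non-reversible case, where Stroock--Varopoulos may fail for $\liou$ itself. The workaround relies on $\Em^2_\liou=\Em^2_{\liou_s}$ and $\alpha_2(\liou)=\alpha_2(\liou_s)$ for $\liou_s:=(\liou+\liou^*)/2$ (as already noted after Definition \ref{defn:LS2}): applying Stroock--Varopoulos to the reversible $\liou_s$ produces a bound involving $-\tr[Y^{p-1}\liou_s(Y)]$ instead of the desired $-\tr[Y^{p-1}\liou(Y)]$, and the Cauchy--Schwarz-type comparison between these two expressions costs precisely a factor of $2$ in the exponential rate, accounting for the weaker $p(t)=1+e^{\alpha_2 t}$.
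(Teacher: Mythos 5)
The paper does not prove this theorem: it is imported verbatim from the cited reference (Olkiewicz--Zegarlinski; see also Kastoryano--Temme), so there is no in-paper proof to compare against. Your proposal is the standard Gross differentiation argument that those references use, and it is correct in outline: the $t=0$ computation for part 1 checks out exactly (I verified $\frac{F'(0)}{F(0)}=\frac{d}{\tr[X^2]}\bigl[\alpha\,\Ent_2(X)-\Em^2_\liou(X)\bigr]$ with the paper's normalizations), and for part 2 the reduction to $p'(t)\le 2\alpha_2(p(t)-1)$ via the quantum Stroock--Varopoulos inequality is the right route, with $p(t)=1+e^{2\alpha_2 t}$ saturating it. The only sketchy point is the non-reversible case: the factor of $2$ is usually traced to the weaker $p$-Dirichlet comparison $-\frac{1}{d}\tr[Y^{p-1}\liou(Y)]\ge\frac{2(p-1)}{p^2}\Em^2_\liou(Y^{p/2})$ available without detailed balance, rather than to a Cauchy--Schwarz step relating $\liou$ and $\liou_s$ after the fact, but the location and size of the loss you identify are correct.
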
  

We will state the connection between the LS-2 and the LS-1 constants in Theorem \ref{relationconstants} below.

\subsection{The spectral gap and relations between the LS-constants }
Another important constant for studying the convergence properties of quantum dynamical semigroups is the spectral gap. 
Usually a unital Liouvillian $\liou:\M_d\to\M_d$  is said to have a spectral gap $\lambda'$ if the $0$ eigenvalue corresponding to $\frac{\one_d}{d}$ is the only eigenvalue with real part $0$ 
whereas $|\text{Re}\lambda_i|\geq\lambda'$ for all other eigenvalues of $\liou$.
In the context of LS-inequalities the following definition is used:

\begin{defn}[Spectral Gap]
 Let $\liou:\M_d\to\M_d$ be a doubly stochastic Liouvillian. The \textbf{spectral gap} is defined as:
 \begin{align*}
  \lambda(\liou):=\inf\limits_{X\in\M_d:X=X^\dagger}\frac{\Em_\liou^2(X)}{\text{Var}(X)}.
 \end{align*}
 Here we used the \textbf{variance} with respect to the maximally mixed state defined as
 \begin{align*}
 \text{Var}_{\frac{\one_d}{d}}(Y):=\|Y-\tr(Y)\frac{\one_d}{d}\|_{2,\frac{\one_d}{d}}^2 ~.
 \end{align*}

\end{defn}
It agrees with the usual definition for reversible Liouvillians and is the spectral gap of the additive 
symmetrization $\frac{\liou+\liou^*}{2}$. Indeed, for reversible Liouvillians
the spectrum is nonpositive and we may assume w.l.o.g. that the eigenvector $X_i$ corresponding to an eigenvalue $\lambda_i$
is hermitian. By the orthogonality of eigenvectors, we have that $\tr[\one_d X_i]=0$, thus all eigenvectors that do not correspond to the eigenvalue $0$ are also
traceless and are invariant under the transformation $X\mapsto\ X-tr[X]\frac{\one}{d}$. By these considerations,
\begin{align}
\inf\limits_{X\in\M_d:X=X^\dagger}\frac{\Em_\liou^2(X)}{\text{Var}(X)} 
\end{align}
is the second largest eigenvalue of $-\liou$ if $\liou$ is reversible or the of $-\frac{\liou+\liou^*}{2}$ if not, coinciding with the usual definition.

%
Finally we can state how the LS-constants and the spectral gap relate to each other.

\begin{thm}[\cite{Kastoryanosob}]\label{relationconstants}
 Let $\liou:\M_d\to\M_d$ be a doubly stochastic  Liouvillian. If $\liou$ is reversible, the spectral gap, LS-2 and LS-1 constant satisfy:
 \begin{align*}
  \lambda(\liou)\frac{2(1-\frac{2}{d})}{\log(d-1)}\leq\alpha_2(\liou)\leq\alpha_1(\liou)\leq\lambda(\liou)
 \end{align*}
If $\liou$ is not reversible they satisfy:
\begin{align*}
  \lambda(\liou)\frac{(1-\frac{2}{d})}{\log(d-1)}\leq\frac{\alpha_2(\liou)}{2}\leq\alpha_1(\liou)\leq\lambda(\liou)
\end{align*}
For $d=2$ the function $d\mapsto\frac{2(1-\frac{2}{d})}{\log(d-1)}$ can be extended continuously by $1$.
\end{thm}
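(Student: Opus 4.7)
The plan is to establish the three chains of inequalities separately, moving from the easiest (the upper bound by $\lambda$) to the hardest (the lower bound in terms of $\lambda$).

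First I would prove $\alpha_1(\liou)\leq\lambda(\liou)$ by linearizing the Rayleigh quotient in \eqref{alpha1Def} around the fixed point. Take a hermitian traceless $X\in\M_d$ and set $\rho_\varepsilon=\tfrac{\one_d}{d}+\varepsilon X\in\D_d^+$ for small $\varepsilon>0$. Expanding $\log\rho_\varepsilon=-\log(d)\one_d+d\varepsilon X-\tfrac{d^2\varepsilon^2}{2}X^2+O(\varepsilon^3)$ and using $\liou(\one_d)=0$ together with $\tr X=\tr[\liou(X)]=0$ gives
\[
D\!\left(\rho_\varepsilon\Big\|\tfrac{\one_d}{d}\right)=\tfrac{d\varepsilon^2}{2}\tr[X^2]+O(\varepsilon^3),\qquad -\tfrac{1}{2}\tr[\liou(\rho_\varepsilon)\log\rho_\varepsilon]=-\tfrac{d\varepsilon^2}{2}\tr[\liou(X)X]+O(\varepsilon^3).
\]
The limit $\varepsilon\to 0$ of the ratio equals $-\tr[\liou(X)X]/\tr[X^2]=\Em_\liou^2(X)/\Var_{\one_d/d}(X)$ for traceless $X$, so taking the infimum over such $X$ yields $\alpha_1(\liou)\leq\lambda(\liou)$.

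Next I would prove $\alpha_2(\liou)\leq\alpha_1(\liou)$ in the reversible case (and $\alpha_2(\liou)\leq 2\alpha_1(\liou)$ in general). The natural substitution is $X=\sqrt{d\rho}$ for $\rho\in\D_d^+$, which has $\tr X^2=d$ and transforms the 2-entropy into
\[
\Ent_2(\sqrt{d\rho})=\tfrac{1}{2d}\tr[d\rho(\log\rho+\log d)]=\tfrac{1}{2}D\!\left(\rho\Big\|\tfrac{\one_d}{d}\right).
\]
The harder ingredient is a quantum Stroock--Varopoulos inequality
\[
-\tr[\liou(\sqrt\rho)\sqrt\rho]\leq -\tfrac{1}{4}\tr[\liou(\rho)\log\rho],
\]
which I would derive from operator-concavity of the square root and the integral representation $\log a-\log b=\int_0^\infty\big(\tfrac{1}{s+b}-\tfrac{1}{s+a}\big)ds$; the reversibility of $\liou$ is used here to move $\liou$ to the entropic side symmetrically. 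Dividing the Stroock--Varopoulos inequality by $\Ent_2(X)=\tfrac12 D(\rho\|\one_d/d)$ and taking the infimum gives $\alpha_2(\liou)\leq\alpha_1(\liou)$. For non-reversible $\liou$ one uses $\alpha_2(\liou)=\alpha_2(\tfrac{\liou+\liou^*}{2})$ (which is immediate from the definition since $\Em_\liou^2$ depends only on the symmetric part) while $\alpha_1(\liou)\geq\tfrac{1}{2}\alpha_1(\tfrac{\liou+\liou^*}{2})$ only with a factor $2$ loss, producing the stated factor.

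The main obstacle is the lower bound $\lambda(\liou)\cdot\tfrac{2(1-2/d)}{\log(d-1)}\leq\alpha_2(\liou)$, the quantum analogue of the Diaconis--Saloff-Coste estimate. My plan is to reduce to the reversible case (by the symmetrization argument above, absorbing a factor $1/2$ in the non-reversible bound) and then run a two-scale decomposition: write any hermitian $X\in\M_d^+$ as $X=a\one_d+Y$ with $\tr Y=0$. For $X$ concentrated near a constant, a second-order Taylor expansion of $\Ent_2$ around a multiple of $\one_d$ reduces $\Ent_2(X)$ to $\tfrac{1}{2}\Var(X)/\langle X^2\rangle$ up to controllable error, and the spectral gap inequality $\Em_\liou^2(X)\geq\lambda(\liou)\Var(X)$ closes the loop. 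For $X$ whose distribution of eigenvalues is far from constant, one instead directly maximizes $\Ent_2/\Var$ over the extremal two-level spectra, for which the ratio is explicitly $\tfrac{\log(d-1)}{2(1-2/d)}$ (the continuous extension to $d=2$ matches this limit). Interpolating between the two regimes with an optimized threshold yields the stated constant. The delicate point is obtaining the sharp prefactor $2(1-2/d)/\log(d-1)$; this requires picking the splitting threshold so that the perturbative and non-perturbative estimates match exactly, and extremizing over two-point eigenvalue distributions to identify the worst case.
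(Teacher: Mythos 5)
First, a point of reference: the paper does not prove this statement at all --- Theorem \ref{relationconstants} is imported verbatim from \cite{Kastoryanosob} --- so your attempt can only be measured against that source. Your architecture matches the standard proof: linearization of the LS-1 quotient around $\frac{\one_d}{d}$ for $\alpha_1(\liou)\le\lambda(\liou)$ (your expansion is correct, and the limiting quotient is indeed $\Em^2_\liou(X)/\Var_{\frac{\one_d}{d}}(X)$ over traceless hermitian $X$), the substitution $X=\sqrt{d\rho}$ plus a quantum Stroock--Varopoulos inequality for $\alpha_2(\liou)\le\alpha_1(\liou)$, and an entropy--variance comparison in the spirit of Diaconis--Saloff-Coste for the lower bound.

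The gaps are in the two hard steps. (i) The inequality $-\tr[\liou(\sqrt\rho)\sqrt\rho]\le-\tfrac14\tr[\liou(\rho)\log\rho]$ is the entire content of $\alpha_2\le\alpha_1$, and ``operator concavity of the square root plus an integral representation of the logarithm'' is not a proof of it: classically it follows from the pointwise inequality $(\sqrt a-\sqrt b)^2\le\tfrac14(a-b)(\log a-\log b)$, but non-commutativity destroys this reduction, and the known proofs (the $L_p$-regularity machinery of \cite{Kastoryanosob}, or later quantum Stroock--Varopoulos results) require genuine structural work in which detailed balance enters in a specific way. Relatedly, your mechanism for the factor $2$ in the non-reversible case, namely $\alpha_1(\liou)\ge\tfrac12\alpha_1\bigl(\tfrac{\liou+\liou^*}{2}\bigr)$, is unsupported: the entropy production $-\tr[\liou(\rho)\log\rho]$ is not controlled by that of the symmetrization, since the antisymmetric contribution $\tr[(\liou-\liou^*)(\rho)\log\rho]$ has no sign; in \cite{Kastoryanosob} the factor $2$ comes instead from the weaker (``weak $L_p$-regular'') Dirichlet-form comparison available without reversibility. (ii) For the lower bound you should decouple the Liouvillian from the entropic estimate entirely: $\Ent_2$ and $\Var_{\frac{\one_d}{d}}$ are unitarily invariant and depend only on the spectrum of $X$, so the needed inequality $\Ent_2(X)\le\frac{\log(d-1)}{2(1-2/d)}\Var_{\frac{\one_d}{d}}(X)$ is literally the classical lemma of Diaconis and Saloff-Coste for the uniform measure on $d$ points, and combined with $\Em^2_\liou(X)\ge\lambda(\liou)\Var_{\frac{\one_d}{d}}(X)$ it gives the bound in one line. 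Your two-regime interpolation with an ``optimized threshold'' mixes the Dirichlet form into an estimate that does not involve it, and, as you concede, it is not clear it recovers the sharp prefactor --- without the exact constant the stated bound is not reached.
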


The previous theorem establishes a connection between the LS-1 and LS-2 constants. We will use this connection as it is usually easier to derive bounds on the LS-2 constant than on the LS-1 constant directly. Note that by combining Theorem \ref{relationconstants} with Theorem \ref{entdiff} we immediately obtain the bound
\begin{align}
S(T_t(\rho))-S(\rho)\geq(1-e^{-\alpha_2\lb \liou\rb t})(\log(d)-S(\rho))
\label{equ:entrProdAlpha2}
\end{align}
for any $\rho\in\D_d$ and where $T_t=e^{\liou t}$ denotes the semigroup generated by $\liou$.

%
%

\section{Continuous LS inequalities for doubly stochastic Liouvillians}
\subsection{Tensor-stable LS-inequalities}
\label{sec:3}

For doubly stochastic and primitive Liouvillians $\Lm:\M_d\ra\M_d$ we consider the generator $\Lm^{(n)}$ (see \eqref{equ:tensorLiouv}) of the tensor-product semigroup $\lb e^{t\Lm}\rb^{\otimes n}$. 
We will prove a lower bound on $\alpha_2\lb\Lm^{(n)}\rb$ that does not depend on $n$. By \eqref{equ:entrProdAlpha2} such bounds directly lead to entropy production inequalities for tensor-products of quantum dynamical semigroups 
(see Corollary \ref{cor:EntrProdTens}). These inequalities turn out to be useful for the analysis of quantum memories (see section \ref{sec:subd}). 

For our bounds on the LS-2 constant we will first compute a lower bound on $\alpha_2\lb\lioud^{(n)}\rb$, where $\Lm_{\text{dep}}:\M_d\ra\M_d$ denotes the depolarizing Liouvillian given by
\begin{align}
\lioud(X)=\tr(X)\frac{\one_d}{d}-X ~.
\label{equ:depLiou}
\end{align}
Then we use a comparison technique to derive the desired bound for general doubly stochastic and primitive Liouvillians.

We will need the following theorem proved in \cite{2014arXiv1403.5224T} showing that it suffices to show hypercontractivity for one fixed time to
lower bound $\alpha_2(\liou)$:
\begin{thm}\cite[Theorem 5]{2014arXiv1403.5224T}\label{snapshothyper}
 Let $\liou:\M_d\to\M_d$ be a reversible, doubly stochastic Liouvillian with spectral gap $\lambda$. Suppose that for some
 $t_0\in\R_+$ we have $\|T_{t_0}\|_{2\to 4,\frac{\one_d}{d}}\leq1$. Then:
 \begin{align*}
  \alpha_2(\liou)\geq\frac{\lambda}{4\lambda t_0+2}
 \end{align*}

\end{thm}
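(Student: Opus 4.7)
The strategy is to invoke the hypercontractive characterization of Theorem~\ref{hypercontrpic}(1): obtaining $\alpha_2(\liou) \geq \alpha$ is equivalent to producing, for every $X \in \M_d^+$ and every $t \geq 0$, the inequality $\|T_t X\|_{p(t), \one_d/d} \leq \|X\|_{2, \one_d/d}$ with $p(t) := 1 + e^{2\alpha t}$. The plan is then to choose $\alpha = \lambda/(4\lambda t_0 + 2)$ and verify this family of bounds using only the snapshot hypothesis at $t_0$ together with the spectral gap $\lambda$.

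The first step would be a semigroup monotonicity observation: for any fixed $q \geq 2$, the operator norm $\|T_t\|_{2 \to q, \one_d/d}$ is non-increasing in $t$, since $T_{t+s} = T_s T_t$ and $\|T_s\|_{q \to q, \one_d/d} \leq 1$ for any unital CPTP map (by noncommutative Riesz--Thorin). This upgrades the snapshot to $\|T_t\|_{2 \to 4, \one_d/d} \leq 1$ for every $t \geq t_0$, and together with monotonicity of the weighted $L^p$ norm in $p$ it already settles the desired inequality for every $t$ with $p(t) \leq 4$. For the remaining $t$, I would decompose $X = X_0 + (\tr X/d)\one_d$ into its traceless part and its mean, so that the spectral gap yields the $L^2$ contraction $\|T_sX_0\|_{2,\one_d/d}^2 \leq e^{-2\lambda s}\|X_0\|_{2,\one_d/d}^2$. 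Writing $T_t = T_{t_0}T_{t-t_0}$ and feeding this into the snapshot produces the refined estimate
\[
\|T_tX\|_{4,\one_d/d}^2 \leq e^{-2\lambda(t-t_0)}\|X_0\|_{2,\one_d/d}^2 + (\tr X/d)^2.
\]
Combining it with the direct bound $\|T_tX\|_{2,\one_d/d}^2 \leq e^{-2\lambda t}\|X_0\|_{2,\one_d/d}^2 + (\tr X/d)^2$ via Lyapunov's inequality (log-convexity of $\|\cdot\|_{q,\one_d/d}$ in $1/q$) controls $\|T_tX\|_{q,\one_d/d}$ for arbitrary $q \in [2,4]$; optimizing the interpolation parameter against the growth rate of $p(t)$ should yield exactly the constant $\alpha = \lambda/(4\lambda t_0 + 2)$.

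The main obstacle I anticipate is that $p(t)$ grows without bound while a single use of the snapshot only produces $L^4$ control, so Lyapunov interpolation alone covers only $q \in [2,4]$. Extending to larger $q$ will require iterating the snapshot across successive factors of $T_{t_0}$ in the semigroup and chaining with spectral-gap contractions on the intermediate outputs, while carefully tracking how the $e^{-2\lambda s}$ factors accumulate. Closing this iteration with precisely the constant $4\lambda t_0 + 2$ is the delicate step, and the balance between the spectral-gap decay rate and the snapshot contraction at each stage is presumably what fixes this particular form of the bound.
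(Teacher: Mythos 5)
The paper does not prove this statement: it is imported verbatim from \cite[Theorem 5]{2014arXiv1403.5224T}, and the proof there (adapting Diaconis and Saloff-Coste) never attempts to verify the hypercontractive family $\|T_t\|_{2\to p(t),\frac{\one_d}{d}}\leq 1$ for all $t$. Instead it converts the single snapshot $\|T_{t_0}\|_{2\to 4,\frac{\one_d}{d}}\leq 1$ directly into a logarithmic Sobolev inequality with an additive defect proportional to $\|X\|_{2,\frac{\one_d}{d}}^2$, and then absorbs that defect using the spectral gap via a Rothaus-type lemma ($\Ent_2(X)\leq \Ent_2(\tilde X)+2\|\tilde X\|^2_{2,\frac{\one_d}{d}}$ for the traceless part $\tilde X$, combined with $\Var_{\frac{\one_d}{d}}(X)\leq \lambda^{-1}\Em^2_\liou(X)$). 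The summand $2$ in the denominator $4\lambda t_0+2$ is the Rothaus constant divided by $\lambda$, not an artifact of interpolating exponents. Your plan inverts this logic, and it runs into two obstructions that are not mere technicalities.

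First, the small-time regime. Theorem \ref{hypercontrpic}(1) needs $\|T_t(X)\|_{p(t),\frac{\one_d}{d}}\leq\|X\|_{2,\frac{\one_d}{d}}$ for \emph{all} $t\geq 0$, and its proof extracts the LS inequality by differentiating this bound at $t=0^+$; the only regime that actually carries information is $t\to 0$. Your monotonicity step propagates the snapshot \emph{forward} in time, so for $t<t_0$ you have no $2\to q$ bound for any $q>2$, and the claim that monotonicity in $p$ ``settles the desired inequality for every $t$ with $p(t)\leq 4$'' fails precisely there. Second, the large-$p$ regime. Lyapunov interpolation between your $L^2$ and $L^4$ estimates can only reach exponents in $[2,4]$, and the iteration you propose cannot close the gap: composing two $2\to 4$ contractions does not yield a $2\to 8$ contraction, since that would require a $4\to 8$ bound which the hypothesis does not supply. (For self-adjoint $T_{t_0}$, duality gives $\|T_{t_0}\|_{4/3\to 2,\frac{\one_d}{d}}\leq 1$ and hence $\|T_{2t_0}\|_{4/3\to 4,\frac{\one_d}{d}}\leq 1$ --- the exponent window widens at the lower end, never above $4$.) Finally, the assertion that optimizing the interpolation parameter ``should yield exactly'' $\alpha=\lambda/(4\lambda t_0+2)$ is unsubstantiated. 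I recommend abandoning the attempt to verify hypercontractivity at all times and instead deriving the defective LS inequality from the snapshot and tightening it with the spectral gap.
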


Using the group theoretic techniques and definitions introduced in the appendix we prove the following bound on the $2\to 4$ norm of tensor powers of the depolarizing channel. Similar bounds have been developed in \cite{junge}.

\begin{thm}\label{upper2to4dep}
Let $T_t:\M_d\ra\M_d$ denote the semigroup $T_t = e^{t\lioud}$ generated by $\lioud:\M_d\to\M_d$ as defined in \eqref{equ:depLiou}. Then we have
\begin{align}
\|T_{t_0}^{\otimes n}\|_{2\to4,\frac{\one}{d^n}}\leq1
\end{align}
for $t_0=\frac{\log(3)\log(d^2-1)}{4\lb1-2d^{-2}\rb}$.
\end{thm}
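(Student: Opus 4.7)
The plan is to reduce the quantum hypercontractivity statement to a classical one via the spectral structure of $\lioud$ and then invoke Theorem \ref{hypercontrpic}. Observe first that $\lioud$ diagonalizes in the Heisenberg--Weyl basis: if $W_{j,k}$ (for $(j,k)\in\Z_d\times\Z_d$) denote the standard Weyl operators, then $\lioud(\one_d)=0$ and $\lioud(W_{j,k})=-W_{j,k}$ for $(j,k)\neq(0,0)$. These operators form an orthonormal basis of $\M_d$ with respect to $\|\cdot\|_{2,\one_d/d}$ and constitute a projective representation of the finite abelian group $\Z_d\times\Z_d$. Consequently $\lioud^{(n)}$ diagonalizes in the product Weyl basis $W_{\vec j,\vec k}$ with eigenvalues $-|(\vec j,\vec k)|_0$, where $|(\vec j,\vec k)|_0$ counts the nonzero coordinates.

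Next, I would apply the group-theoretic machinery developed in the appendix, which characterizes the LS-$2$ constant (and more generally the weighted $L^p$-norm behaviour) of such Liouvillians in terms of a Dirichlet-form problem on the dual group $(\Z_d\times\Z_d)^n$. Under this correspondence, the depolarizing Liouvillian on $\M_d$ translates to the generator of the product random walk on the $d^2$-point set whose invariant measure is uniform (equivalently, the ``complete graph'' chain on $d^2$ states). This is exactly the setting where the classical Diaconis--Saloff-Coste formula gives
\[
\alpha_2(\lioud)\;\geq\;\frac{2\bigl(1-2d^{-2}\bigr)}{\log(d^2-1)},
\]
and, crucially, Gross's tensorization for the classical LS-$2$ constant on products of reversible chains upgrades this to $\alpha_2(\lioud^{(n)})\geq\alpha_2(\lioud)$ independently of~$n$.

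Since $\lioud^{(n)}$ is reversible, I would finally invoke part (2) of Theorem~\ref{hypercontrpic}, which for $p(t)=1+e^{2\alpha_2 t}$ gives $\|T_t^{\otimes n}\|_{2\to p(t),\one/d^n}\leq 1$. Setting $p(t_0)=4$ forces $e^{2\alpha_2(\lioud^{(n)})t_0}\geq 3$, i.e.\
\[
t_0\;\geq\;\frac{\log 3}{2\alpha_2(\lioud^{(n)})}\;=\;\frac{\log(3)\log(d^2-1)}{4(1-2d^{-2})},
\]
which matches the stated value exactly.

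The main obstacle is the second step: carefully carrying out the translation from the non-commutative weighted $L^p$-problem for $\lioud^{(n)}$ to a classical $L^p$-problem on $(\Z_d\times\Z_d)^n$. The Weyl operators satisfy nontrivial commutation phases, so one must verify that the phases enter only as unimodular factors inside the $p$-norm computation and drop out after taking absolute values, so that the quantum Dirichlet form and the quantum entropy functional are bounded below by their classical counterparts. Once this reduction is in place, the rest is a combination of a well-known classical LS computation, Gross-style tensorization, and a direct application of the hypercontractive picture.
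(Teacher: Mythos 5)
There is a genuine gap, and it sits exactly where you flag ``the main obstacle.'' Your plan is to transfer the \emph{classical} LS-2 constant of the complete-graph walk on $(\Z_d\times\Z_d)^n$ to a lower bound on the \emph{quantum} LS-2 constant $\alpha_2(\lioud^{(n)})$, and then feed that into part (2) of Theorem \ref{hypercontrpic}. But the quantum-to-classical dictionary available here (the one built in the appendix) only compares \emph{polynomial moments}: writing $X=\sum_i \hat f_X(i)U_i$ in the Weyl basis, the quantity $\tr[|X|^4]=\tr[X^\dagger XX^\dagger X]$ expands into a sum over index quadruples in which the commutation phases are unimodular and can be discarded by the triangle inequality, giving $\|X\|_{4,\one/d}\leq\|f'_X\|_4$ with $f'_X=\sum_i|\hat f_X(i)|\chi_i$. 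No such expansion exists for the entropy functional $\Ent_2(X)=\frac{1}{2d}\tr[X^2(\log(X^2/\tr(X^2))+\log d)]$, which is not polynomial in $X$, so the claimed inequality ``quantum entropy functional bounded below (or above, as you would actually need) by its classical counterpart'' is not established and is not known. Indeed, if your step went through it would yield $\alpha_2(\lioud^{(n)})\geq\frac{2(1-2d^{-2})}{\log(d^2-1)}$ for all $n$, which is strictly stronger than the paper's Corollary \ref{lowerboundtensordep} and would settle the tensorization question the authors explicitly list as open in the conclusion; the weaker constant in that corollary is precisely the price of not having the comparison you assume.

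The paper's actual route avoids this: it never compares LS-2 constants directly. It establishes \emph{classical} hypercontractivity $\|P_{t_0}^{\otimes n}\|_{2\to4}\leq1$ at the single time $t_0=\frac{\log 3}{2\alpha_2(L)}$ (using Diaconis--Saloff-Coste for $\alpha_2(L)$ and classical tensorization of $\alpha_2$, both legitimate on the commutative side), and then transfers only the $2\to4$ norm bound to the quantum side via Corollary \ref{uppertensors}, applied inductively. The quantum LS-2 bound is extracted afterwards, and only at the cost of the $\log(3)$ and the additive term, via Theorem \ref{snapshothyper}. To repair your argument you should replace the LS-2 comparison by this norm comparison at the fixed time $t_0$; as written, the central reduction would fail.
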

\begin{proof}
Note that the Weyl system \eqref{equ:Weyl} forms an almost commuting unitary eigenbasis (see Definition \ref{defn:AlmostCommut}) for the depolarizing Liouvillian $\lioud$. The unitaries of the Weyl system can be associated to characters on $\Z_d\times \Z_d$. As explained in the appendix we can associate a classical semigroup $P_t$ (see \eqref{equ:Pt}) acting on the space $V(\Z_d\times \Z_d)$ of complex functions on $\Z_d\times \Z_d$. It is easy to verify that the generator $L$ of this classical semigroup coincides with the generator of the random walk on the complete graph with $d^2$ vertices and uniform distribution.
In \cite[Theorem A.1]{diaconis1996} it was shown that:
\begin{align}
\alpha_2(L)=\frac{2\lb1-2d^{-2}\rb}{\log(d^2-1)} 
\end{align}
Also it is known that for classical semigroups~\cite[Lemma 3.2]{diaconis1996}
\begin{align}
\alpha_2(L_1\otimes\id+\id\otimes L_2)=\min\{\alpha_2(L_1),\alpha_2(L_2)\} .
\end{align}
Thus, by the hypercontractive characterization of the LS-2 constant\cite[Theorem 3.5]{diaconis1996} we have
\begin{align*}
\|P_t^{\otimes n}\|_{2\to p(t)}\leq1, 
\end{align*}
for any $n\in\N$ where $p(t)=1+e^{2\alpha_2(L)t}$. With $t_0=\frac{\log(3)}{2\alpha_2(L)}$ we have $p(t_0)=4$ and the
the claim follows if we apply Theorem \ref{uppertensors} inductively.
\end{proof}

As the spectral gap of $\lioud$ is $1$ we obtain the following corollary by applying Theorem \ref{upper2to4dep} and Theorem \ref{snapshothyper}.

\begin{cor}[Lower bound on LS-2 for tensor powers of the depolarizing channel]\label{lowerboundtensordep}
 Let $\lioud:\M_d\to\M_d$ be the depolarizing Liouvillian \eqref{equ:depLiou}. Then:
 \begin{align*}
  \alpha_2(\lioud^{(n)})\geq\frac{\lb1-2d^{-2}\rb}{\log(3)\log(d^2-1)+2\lb1-2d^{-2}\rb} .
 \end{align*}

\end{cor}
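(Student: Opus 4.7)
My plan is to combine the two results already available in the excerpt: the snapshot-hypercontractivity principle of Theorem \ref{snapshothyper}, which converts a single-time bound on $\|T_{t_0}\|_{2\to 4,\frac{\one_d}{d}}$ into a lower bound on $\alpha_2$, together with the explicit $2\to 4$ hypercontractivity estimate for tensor powers of the depolarizing semigroup given by Theorem \ref{upper2to4dep}. Since Theorem \ref{snapshothyper} is phrased for an arbitrary reversible, doubly stochastic Liouvillian and depends only on the spectral gap and the snapshot time, the remaining task is simply to verify that $\lioud^{(n)}$ fits the hypotheses with the expected parameters.

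First I would check that $\lioud^{(n)}$ is itself reversible and doubly stochastic. Reversibility follows because $\lioud$ is self-adjoint with respect to the Hilbert-Schmidt inner product, which is immediate from the explicit expression $\lioud(X) = \tr(X)\frac{\one_d}{d} - X$, and this property is preserved under tensoring with identity channels and summing. Double stochasticity is equally direct, as each summand in \eqref{equ:tensorLiouv} annihilates the identity matrix. For the spectral gap I would use the fact that the spectrum of $\lioud$ consists of the eigenvalue $0$ (simple, on $\one_d$) and the eigenvalue $-1$ (with multiplicity $d^2-1$, on traceless operators). Because the eigenvalues of $\lioud^{(n)}$ are sums of $n$ such eigenvalues, the largest nonzero real part that appears is $-1$, so $\lambda(\lioud^{(n)}) = 1$ independently of $n$.

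With these inputs in hand, Theorem \ref{upper2to4dep} supplies $\|T_{t_0}^{\otimes n}\|_{2\to 4,\frac{\one}{d^n}} \leq 1$ at the explicit time $t_0 = \frac{\log(3)\log(d^2-1)}{4(1-2d^{-2})}$ with $T_t = e^{t\lioud}$. Noting that $e^{t\lioud^{(n)}} = T_t^{\otimes n}$, I can feed this directly into Theorem \ref{snapshothyper} for the reversible Liouvillian $\lioud^{(n)}$ with $\lambda = 1$, which yields
$$\alpha_2(\lioud^{(n)}) \;\geq\; \frac{1}{4 t_0 + 2}.$$
Substituting the stated value of $t_0$ and simplifying reproduces the claimed bound. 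There is no substantive obstacle at this stage; all the real work was absorbed into Theorem \ref{upper2to4dep}. The feature that makes the corollary interesting, namely that the lower bound is completely independent of $n$, is inherited from the $n$-independence of the hypercontractivity estimate and of the spectral gap, both of which were established above.
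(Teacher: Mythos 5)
Your proposal is correct and follows exactly the paper's own route: the paper likewise obtains the corollary by feeding the snapshot bound of Theorem \ref{upper2to4dep} at $t_0=\frac{\log(3)\log(d^2-1)}{4(1-2d^{-2})}$ into Theorem \ref{snapshothyper} with $\lambda=1$. Your explicit verification that $\lioud^{(n)}$ is reversible, doubly stochastic, and has spectral gap $1$ independently of $n$ is a detail the paper leaves implicit, and your arithmetic reproducing the stated constant is right.
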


Note that any doubly stochastic Liouvillian $\liou$ commutes with the depolarizing Liouvillian $\lioud$. We can use this simple observation to prove the following comparison theorem, which will lead to our main result.   

\begin{thm}[Comparison with the depolarizing Liouvillian]\label{lowerboundtensor}
Let $\liou:\M_d\to\M_d$ be a doubly stochastic and primitive Liouvillian with spectral gap $\lambda$ and $\| \frac{\liou+\liou^*}{2}\|$ 
the operator norm of its additive symmetrization. For any $n\in\N$ we have 
 \begin{align*}
  \Big{\|} \frac{\liou+\liou^*}{2}\Big{\|}\alpha_2(\lioud^{(n)})\geq\alpha_2(\liou^{(n)})\geq\lambda\alpha_2(\lioud^{(n)})
 \end{align*}
 where $\liou^{(n)}$ is defined as in \eqref{equ:tensorLiouv}.

\end{thm}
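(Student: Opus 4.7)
The plan is to exploit the commutation $[\liou,\lioud]=0$ noted just before the statement and then compare the two Dirichlet forms mode by mode. My first step would be to reduce to the reversible case. Since Liouvillians preserve Hermiticity, the 2-Dirichlet form $\Em^2_\liou(X)=-\tr[\liou(X)X]/d$ depends on $\liou$ only through its additive symmetrization $\tilde\liou:=\frac{\liou+\liou^*}{2}$ whenever $X$ is Hermitian (a one-line cyclicity computation), and symmetrization commutes with the tensor-sum construction in \eqref{equ:tensorLiouv}. Hence $\alpha_2(\liou^{(n)})=\alpha_2(\tilde\liou^{(n)})$. The spectral gap and the quantity $\|\tilde\liou\|$ are defined via the symmetrization in any case, so the entire claim is invariant under replacing $\liou$ by $\tilde\liou$, and I may henceforth assume $\liou=\liou^*$.

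Next I would diagonalize simultaneously. Writing $\lioud=\Phi-\id$ with $\Phi(X)=\tr(X)\one_d/d$, trace-preservation and unitality of $\liou$ give $\Phi\liou=\liou\Phi=0$, so indeed $[\liou,\lioud]=0$. Both operators are now self-adjoint in Hilbert--Schmidt, so they share an orthonormal eigenbasis $\{V_\alpha\}_{\alpha=0}^{d^2-1}$ of $\M_d$. The distinguished mode $V_0\propto\one_d$ has eigenvalue $0$ under both; every other $V_\alpha$ is traceless, hence an eigenvector of $\lioud$ with eigenvalue $-1$ and of $\liou$ with some eigenvalue $-\mu_\alpha$ lying in $[\lambda,\|\liou\|]$ by the definitions of the spectral gap (where primitivity of $\liou$ ensures $\lambda>0$) and of the operator norm. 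Lifting to $n$ tensor factors, the product basis $V_{\vec\alpha}:=V_{\alpha_1}\otimes\cdots\otimes V_{\alpha_n}$ is a common orthonormal eigenbasis of $\liou^{(n)}$ and $\lioud^{(n)}$, with respective eigenvalues $-\sum_i\mu_{\alpha_i}$ and $-\sum_i\nu_{\alpha_i}$, where $\nu_\alpha:=\mathbbm{1}[\alpha\neq 0]$.

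The comparison is then immediate. Expanding a Hermitian $X=\sum_{\vec\alpha}c_{\vec\alpha}V_{\vec\alpha}\in\M_{d^n}$ and using orthonormality gives
\begin{align*}
\Em^2_{\liou^{(n)}}(X)=\frac{1}{d^n}\sum_{\vec\alpha}|c_{\vec\alpha}|^2\sum_{i=1}^n\mu_{\alpha_i},\qquad \Em^2_{\lioud^{(n)}}(X)=\frac{1}{d^n}\sum_{\vec\alpha}|c_{\vec\alpha}|^2\sum_{i=1}^n\nu_{\alpha_i},
\end{align*}
and the termwise bounds $\lambda\nu_{\alpha_i}\leq\mu_{\alpha_i}\leq\|\liou\|\nu_{\alpha_i}$ (trivial in each of the cases $\alpha_i=0$ and $\alpha_i\neq 0$) yield the pointwise sandwich $\lambda\Em^2_{\lioud^{(n)}}(X)\leq\Em^2_{\liou^{(n)}}(X)\leq\|\liou\|\Em^2_{\lioud^{(n)}}(X)$ on all of $\M_{d^n}^+$. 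Dividing by $\Ent_2(X)$ and taking the infimum over $X$ produces both halves of the inequality simultaneously.

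I do not expect any step to be genuinely difficult; the whole proof hinges on the single nontrivial fact $[\liou,\lioud]=0$, after which it is elementary linear algebra. The only subtle point is making sure that the reversibility reduction in the first paragraph is compatible with the tensor-sum construction and with the definition of spectral gap used in the paper, but this follows cleanly from the observation that additive symmetrization intertwines with tensoring into sums of identities.
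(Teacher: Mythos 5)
Your proposal is correct and follows essentially the same route as the paper: reduce to the additive symmetrization, use $[\liou,\lioud]=0$ to obtain a common eigenbasis (with $\one_d$ the only non-traceless eigenvector), lift to the tensor-sum generators, and sandwich the Dirichlet forms mode by mode via $\lambda\leq\mu_\alpha\leq\|\liou\|$ for $\alpha\neq 0$. The only cosmetic difference is that the paper indexes the comparison through $|\mathrm{supp}(\nu)|$ rather than the indicator $\nu_\alpha=\mathbbm{1}[\alpha\neq 0]$, which is the same bookkeeping.
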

\begin{proof}
When working with the LS-2 constant (see Definition \ref{defn:LS2} and the discussion following this definition) we may consider the additive symmetrization $\frac{\liou+\liou*}{2}$
instead of $\liou$. Therefore, we may assume that $\liou$ is reversible without loss of generality.

Using $\tr(\liou(Y))=0$ for all $Y\in\M_d$ and $\liou(\one_d)=0$ it is easily seen that $\lbr\lioud,\liou\rbr = 0$. This shows that $\liou$ and $\lioud$ can be simultaneously diagonalized. The same also holds for $\liou^{(n)}$ and $\lioud^{(n)}$.

Let $\{Y_i\}_{0\leq i\leq d^2-1}$ be an orthonormal basis for $\M_d$ with respect to the normalized Hilbert-Schmidt scalar product $\scalar{\cdot}{\cdot}_{\frac{\one_d}{d}} = \frac{1}{d}\scalar{\cdot}{\cdot}_{\text{HS}}$ consisting of
eigenvectors of $\liou$ and $\lioud$ with $Y_0=\one_d$.
Let $\lambda_i$ denote the corresponding eigenvalues of $\liou$, i.e. such that $\liou(Y_i)=\lambda_iY_i$.
We will show that
\begin{align*}
 \frac{1}{\|\liou\|}\Em^2_{\liou^{(n)}}(X)\leq\Em^2_{\lioud^{(n)}}(X)\leq\frac{1}{\lambda}\Em^2_{\liou^{(n)}}(X)
\end{align*}
for $X\in\M_{d^n}^+$. 

Given a multi-index $\nu\in [d^2-1]^n$, where $[d^2-1]=\{0,...,d^2-1\}$, we define
\begin{align*}
Y_\nu := \bigotimes\limits_{i=1}^{n}Y_{\nu(j)}.    
\end{align*} 
Clearly $\lset Y_\nu\rset_{\nu\in [d^2-1]^n}$ forms an orthonormal basis of eigenvectors of $\liou^{(n)}$ and $\lioud^{(n)}$. For a multi-index $\nu\in [d^2-1]^n$ we define $\text{supp}\lb\nu\rb := \lset i\in \lset 1,2,\ldots, n\rset : \nu(i)\neq 0 \rset$. As $\liou(Y_0)=\lioud(Y_0)=0$ we have 
\begin{align*}
\liou^{(n)}(Y_{\nu})=\lb\sum\limits_{i\in \text{supp}\lb\nu\rb}\lambda_{\nu(i)}\rb Y_{\nu}
\end{align*}
and 
\begin{align*}
\lioud^{(n)}(Y_{\nu})=-|\text{supp}\lb\nu\rb|Y_{\nu}
\end{align*}
for any multi-index $\nu\in [d^2-1]^n$. With
\begin{align*}
\lambda_{\nu} :=\sum\limits_{i\in \text{supp}\lb\nu\rb}\lambda_{\nu(i)}
\end{align*}
we can express the 2-Dirichlet forms (see Definition \ref{defn:LS2}) as  
\begin{align}\label{dirichlettensor}
\Em_{\liou^{(n)}}^2(X)=-\sum\limits_{\nu\in [d^2-1]^n}\lambda_{\nu}\l|\scalar{X}{Y_{\nu}}_{\frac{\one_{d^n}}{d^n}}\r|^2
\end{align}
and
\begin{align}
 \Em_{\lioud^{(n)}}^2(X)=\sum\limits_{\nu\in [d^2-1]^n}|\text{supp}\lb \nu\rb| \l|\scalar{X}{Y_{\nu}}_{\frac{\one_{d^n}}{d^n}}\r|^2.
\end{align}
We know that the spectral gap is given by $\lambda=\min\limits_{i\neq 0}\{-\lambda_i\}$ and $\|\liou\|=\max\limits_{i}\{-\lambda_i\}$. As a consequence we have
\begin{equation}\label{lowerboundeig}
 -\lambda_{\nu}\geq|\text{supp}\lb\nu\rb |\lambda,
\end{equation}
\begin{equation}\label{upperboundeig}
  -\lambda_{\nu}\leq|\text{supp}\lb\nu\rb |\|\liou\|.
\end{equation}

Combining \eqref{dirichlettensor} and \eqref{lowerboundeig} leads to
\begin{align*}
 \Em_{\liou^{(n)}}^2(X)\geq \lambda\Em_{\lioud^{(n)}}^2.
\end{align*}
By Definition \ref{defn:LS2} of the LS-2 constant this shows $\alpha_2(\liou^{(n)})\geq\lambda\alpha_2(\lioud^{(n)})$. 

In the same way combining \eqref{upperboundeig} and \eqref{dirichlettensor} implies $\alpha_2(\liou^{(n)})\leq\|\liou\|\alpha_2(\lioud^{(n)})$.

\end{proof}

The previous result tells us that, considering Liouvillians with fixed operator norms, 
the depolarizing channel is the most hypercontractive one, as it has the largest 
LS-2 constant in this class. 

One can also introduce LS constants for Liouvillians $\liou$ having a stationary state $\sigma\in\D_d^+$ that is not necessarily maximally mixed\cite{Kastoryanosob,Olkiewicz1999246}. When $\liou$ is primitive and reversible the same proof as for Theorem \ref{lowerboundtensor} yields
\begin{align*}
 \|\liou\|\alpha_2(\liou_{\text{dep},\sigma}^{(n)})\geq\alpha_2(\liou^{(n)})\geq\lambda\alpha_2(\liou_{\text{dep},\sigma}^{(n)})
\end{align*}
for the generalized depolarizing Liouvillian $\liou_{\text{dep},\sigma}(X):=\tr(X)\sigma-X$.

By combining Theorem \ref{lowerboundtensor} with Corollary \ref{lowerboundtensordep} we can finally establish an explicit lower bound on $\alpha_2(\liou^{(n)})$. For an explicit upper bound we can also apply Corollary \ref{lowerboundtensordep} to
\begin{align*}
\alpha_2(\lioud^{(n)})\leq \alpha_2(\lioud)=\frac{2(1-\frac{2}{d})}{\log(d-1)}
\end{align*} 
where we used that the LS-2 constant can only decrease when taking tensor-powers (see Definition \ref{defn:LS2}). We also used the explicit formula for the LS-2 constant of a single depolarizing Liouvillian $\lioud:\M_d\ra\M_d$ calculated in \cite{Kastoryanosob}. Summarizing these observations we obtain the following corollary:    

\begin{cor}
Let $\liou:\M_d\to\M_d$ be a doubly stochastic and primitive Liouvillian with spectral gap
 $\lambda$ and $\| \frac{\liou+\liou^*}{2}\|$ 
the operator norm of its additive symmetrization. Then:
 \begin{align*}
 \Big{\|} \frac{\liou+\liou^*}{2}\Big{\|}\frac{2(1-\frac{2}{d})}{\log(d-1)}\geq\alpha_2(\liou^{(n)})\geq\frac{\lambda\lb1-2d^{-2}\rb}{\log(3)\log(d^2-1)+2\lb1-2d^{-2}\rb}.
 \end{align*}
 \label{cor:TSBoundAlpha} 
\end{cor}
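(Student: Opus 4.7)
The plan is to observe that this corollary is an assembly of already-proved ingredients: Theorem \ref{lowerboundtensor} sandwiches $\alpha_2(\liou^{(n)})$ between constant multiples of $\alpha_2(\lioud^{(n)})$, and it only remains to substitute the right bounds on $\alpha_2(\lioud^{(n)})$ on either side.

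For the lower bound, I would take the inequality $\alpha_2(\liou^{(n)}) \geq \lambda\,\alpha_2(\lioud^{(n)})$ from Theorem \ref{lowerboundtensor} and insert the explicit dimension-dependent estimate
\begin{align*}
\alpha_2(\lioud^{(n)}) \;\geq\; \frac{1 - 2d^{-2}}{\log(3)\log(d^2-1) + 2\lb 1-2d^{-2}\rb}
\end{align*}
provided by Corollary \ref{lowerboundtensordep}. Multiplying by $\lambda$ yields the right-hand side of the claim directly.

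For the upper bound, I would use the other half of Theorem \ref{lowerboundtensor}, $\alpha_2(\liou^{(n)}) \leq \|\tfrac{\liou+\liou^*}{2}\|\,\alpha_2(\lioud^{(n)})$, and then control $\alpha_2(\lioud^{(n)})$ from above by $\alpha_2(\lioud)$. The tensor-monotonicity $\alpha_2(\lioud^{(n)}) \leq \alpha_2(\lioud)$ is a short verification from Definition \ref{defn:LS2}: evaluating the ratio defining $\alpha_2(\lioud^{(n)})$ on test operators of the form $X \otimes \one_d^{\otimes (n-1)}$ with $X \in \M_d^+$, and using $\lioud(\one_d)=0$, both the $2$-Dirichlet form and the $2$-relative entropy reduce to the single-site quantities $\Em^2_{\lioud}(X)$ and $\Ent_2(X)$; taking the infimum over $X$ gives the inequality. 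Finally, I would insert the explicit single-copy value $\alpha_2(\lioud)=\frac{2(1-2/d)}{\log(d-1)}$ computed in \cite{Kastoryanosob} to obtain the stated upper bound. There is no genuine obstacle here, as all hard work has already been done in Theorem \ref{lowerboundtensor} and Corollary \ref{lowerboundtensordep}; the only step worth writing out carefully is the tensor-monotonicity check, and it is essentially immediate from the product structure.
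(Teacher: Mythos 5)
Your proposal is correct and matches the paper's own argument: the corollary is obtained exactly by combining the two inequalities of Theorem \ref{lowerboundtensor} with the lower bound of Corollary \ref{lowerboundtensordep} and the upper bound $\alpha_2(\lioud^{(n)})\leq\alpha_2(\lioud)=\frac{2(1-2/d)}{\log(d-1)}$ from tensor-monotonicity plus the known single-copy value. Your explicit verification of the monotonicity on test operators $X\otimes\one_d^{\otimes(n-1)}$ is the same check the paper only gestures at, so there is nothing to add.
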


Note that the lower bound in Corollary \ref{cor:TSBoundAlpha} is slightly better than the one that follows from the results in \cite{hyperspec,2014arXiv1403.5224T} given by
\begin{align*}
\alpha_2(\liou^{(n)})\geq \frac{\lambda}{5\log(d)+11} ~.
\end{align*}

Using \eqref{equ:entrProdAlpha2} we obtain the following corollary on the entropy production of a tensor-product semigroup:

\begin{cor}[Entropy production of tensor-product semigroups]
Let $\liou:\M_d\to\M_d$ be a doubly stochastic and primitive Liouvillian with spectral gap
 $\lambda$. Then we have 
\begin{align*}
S(T^{\otimes n}_t\lb\rho\rb) - S(\rho) \geq (1-e^{-2t\alpha\lb d\rb})\lb\log(d^n) - S\lb\rho\rb\rb 
\end{align*}
with $\alpha\lb d\rb = \frac{\lambda\lb1-2d^{-2}\rb}{\log(3)\log(d^2-1)+2\lb1-2d^{-2}\rb}$ for the quantum dynamical semigroup $T_t = e^{t\Lm}$ generated by $\liou$. 

\label{cor:EntrProdTens}
\end{cor}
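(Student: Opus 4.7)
The proof is a direct chaining of two earlier results of the paper. My plan is to apply the general entropy-production bound \eqref{equ:entrProdAlpha2} to the tensor-sum Liouvillian $\liou^{(n)}$ and then insert the $n$-independent lower bound on $\alpha_2(\liou^{(n)})$ furnished by Corollary \ref{cor:TSBoundAlpha}.

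First I would observe that $T_t^{\otimes n} = e^{t\liou^{(n)}}$, where $\liou^{(n)}$ is defined in \eqref{equ:tensorLiouv} and is viewed as a Liouvillian on $\M_{d^n}$. This tensor-sum is again doubly stochastic (each summand annihilates $\one_{d^n}$ and is trace-preserving) and remains primitive: its kernel is the tensor product of the single-site kernels, which is spanned by $\one_{d^n}$ since $\liou$ is itself primitive. Consequently the hypotheses of \eqref{equ:entrProdAlpha2} are met in dimension $d^n$, and one obtains
$S(T_t^{\otimes n}(\rho)) - S(\rho) \geq (1 - e^{-\alpha_2(\liou^{(n)})\, t})(\log(d^n) - S(\rho))$
for every $\rho \in \D_{d^n}$, with the precise factor in the exponent coming from the reversible versus non-reversible cases of Theorem \ref{relationconstants} combined with Theorem \ref{entdiff}.

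To finish, I would invoke Corollary \ref{cor:TSBoundAlpha}, which bounds $\alpha_2(\liou^{(n)})$ from below by the $n$-independent quantity $\alpha(d) = \frac{\lambda(1-2d^{-2})}{\log(3)\log(d^2-1)+2(1-2d^{-2})}$, and use that $c \mapsto 1 - e^{-ct}$ is monotone in $c \geq 0$. The resulting inequality is exactly the statement of the corollary. No step really constitutes an obstacle: the substantial work, namely producing a tensor-stable LS-2 bound via the Weyl-system comparison with the depolarizing Liouvillian (Theorems \ref{upper2to4dep} and \ref{lowerboundtensor}, together with the snapshot hypercontractivity lemma, Theorem \ref{snapshothyper}), has already been completed.
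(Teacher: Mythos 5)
Your proposal matches the paper's proof exactly: the paper derives the corollary in one line by combining \eqref{equ:entrProdAlpha2} (applied to $\liou^{(n)}$ viewed as a doubly stochastic primitive Liouvillian on $\M_{d^n}$) with the $n$-independent lower bound of Corollary \ref{cor:TSBoundAlpha}, which is precisely your chain of reasoning. Your caveat about the reversible versus non-reversible factor is well taken --- as written, \eqref{equ:entrProdAlpha2} only yields the exponent $-t\,\alpha(d)$, so the factor $2$ in the stated corollary is strictly justified only in the reversible case; this is an imprecision of the paper itself rather than a gap in your argument.
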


In \cite{king}[Theorem 3] the multiplicativity of the $2\to q$ norm for $q\geq2$ has been proven for 
doubly stochastic and reversible quantum channels $T:\M_2\to\M_2$. That is, for
a completely positive map $\Omega:\M_{d'}\to\M_{d'}$ we have
\begin{align*}
\|\Omega\otimes T\|_{2\to q}=\|\Omega\|_{2\to q}\|T\|_{2\to q} ~.
\end{align*}
It then follows from the hypercontractive characterization (Theorem \ref{hypercontrpic}) of the LS-2 that
$\alpha_2(\liou^{(n)}) = \alpha_2(\liou)$ for any doubly stochastic, primitive and reversible
qubit Liouvillian $\liou:\M_2\ra\M_2$, but for non-reversible Liouvillians it only follows that 
$\alpha_2\left(\liou^{(n)}\right)\geq\frac{\lambda}{2}$. 
 
Here we give a proof of this lower-bound without needing reversibility:

\begin{cor}
 Let $\liou:\M_2\to\M_2$ be a doubly stochastic and primitive Liouvillian with spectral gap $\lambda$. Then:
 \begin{align*}
  \alpha_2(\liou^{(n)})=\alpha_2(\liou)=\lambda
 \end{align*}
\label{cor:Qubit}
\end{cor}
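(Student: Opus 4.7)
The plan is to reduce the statement to the reversible qubit case, which is already established in the paragraph immediately preceding the corollary (via King's multiplicativity of the $2 \to q$ norm combined with the hypercontractive picture of Theorem \ref{hypercontrpic}). The reduction is by additive symmetrization: set $\tilde{\liou} := \frac{\liou + \liou^*}{2}$, which is itself a doubly stochastic, primitive, reversible Liouvillian on $\M_2$ and, by the definition of $\lambda(\cdot)$ adopted after the spectral gap definition, satisfies $\lambda(\tilde{\liou}) = \lambda(\liou) = \lambda$.

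The structural observation to record next is that the Hilbert--Schmidt adjoint commutes with the tensor-power construction in \eqref{equ:tensorLiouv}: since $(A \otimes B)^* = A^* \otimes B^*$ and the adjoint is linear, one has $(\liou^{(n)})^* = (\liou^*)^{(n)}$, and hence
\begin{align*}
\frac{\liou^{(n)} + (\liou^{(n)})^*}{2} = \tilde{\liou}^{(n)} .
\end{align*}
Invoking the identity $\alpha_2(\Lm) = \alpha_2\bigl(\tfrac{\Lm + \Lm^*}{2}\bigr)$ from the discussion after Definition \ref{defn:LS2}, first with $\Lm = \liou^{(n)}$ and then with $\Lm = \liou$, yields $\alpha_2(\liou^{(n)}) = \alpha_2(\tilde{\liou}^{(n)})$ and $\alpha_2(\liou) = \alpha_2(\tilde{\liou})$, respectively.

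Since $\tilde{\liou}$ is a reversible qubit Liouvillian, the reversible case cited above the corollary supplies $\alpha_2(\tilde{\liou}^{(n)}) = \alpha_2(\tilde{\liou})$, while Theorem \ref{relationconstants} applied to $\tilde{\liou}$ in dimension $d = 2$ (where the prefactor $\frac{2(1-2/d)}{\log(d-1)}$ is continuously extended to $1$) pinches $\alpha_2(\tilde{\liou})$ between $\lambda(\tilde{\liou})$ from below and $\lambda(\tilde{\liou})$ from above, delivering $\alpha_2(\tilde{\liou}) = \lambda$. Chaining the four equalities then produces $\alpha_2(\liou^{(n)}) = \alpha_2(\liou) = \lambda$, as claimed. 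The only step requiring any verification beyond invocation of earlier results is the commutation $(\liou^{(n)})^* = (\liou^*)^{(n)}$, which follows routinely from the multiplicativity of the adjoint on tensor products together with the linearity of the construction \eqref{equ:tensorLiouv}; after that observation, the argument is essentially bookkeeping and I do not anticipate any substantive obstacle.
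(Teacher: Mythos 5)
Your argument is correct, but it is a genuinely different proof from the one in the paper. The paper obtains the lower bound $\alpha_2(\liou^{(n)})\geq\lambda$ by specializing its comparison theorem (Theorem \ref{lowerboundtensor}) to $d=2$ and citing $\alpha_2(\lioud^{(n)})=\alpha_2(\lioud)=1$ from \cite[Lemma 25]{Kastoryanosob}; the matching upper bound comes from Theorem \ref{relationconstants}, and neither symmetrization nor King's theorem enters. You instead note that the adjoint distributes over the sum in \eqref{equ:tensorLiouv}, so the symmetrization of $\liou^{(n)}$ is $\tilde\liou^{(n)}$ with $\tilde\liou=\frac{\liou+\liou^*}{2}$, and since $\alpha_2$ and $\lambda$ depend only on the symmetrized generator you can import the reversible qubit case settled just above the corollary via King's multiplicativity and Theorem \ref{hypercontrpic}. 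Both routes are valid. What yours buys is the observation that the caveat preceding the corollary --- that the King-based argument ``only'' gives $\alpha_2(\liou^{(n)})\geq\lambda/2$ without reversibility --- is superfluous: the factor-of-two loss in the non-reversible hypercontractive characterization never needs to be paid, because one may symmetrize before tensoring. What the paper's route buys is independence from King's qubit-specific multiplicativity theorem and a demonstration of the section's comparison method (the same mechanism behind Corollary \ref{cor:TSBoundAlpha}) in the one dimension where it is tight. The only assertion in your write-up that deserves an explicit word is the primitivity of $\tilde\liou$; it holds because $\tilde\liou$ is reversible with spectral gap $\lambda(\tilde\liou)=\lambda(\liou)=\lambda>0$, the gap in this paper being defined through the additive symmetrization in the first place.
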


\begin{proof}
 In \cite[Lemma 25]{Kastoryanosob}, it was proven that $\alpha_2(\lioud^{(n)})=\alpha_2(\lioud)=1$.
 Theorem \ref{lowerboundtensor} then implies $\alpha_2(\liou^{(n)})\geq\lambda$.
 But, by Theorem \ref{relationconstants}, $\alpha_2(\liou^{(n)})\leq\lambda$, which proves the claim.
\end{proof}

Note that if we are interested in the hypercontractivity of $\liou^{(n)}:\M_{2^n}\ra\M_{2^n}$,
this result implies
\begin{align*}
\|e^{t\liou^{\lb n\rb}}\|_{2\to p(t),\frac{\one_{2^n}}{2^n}}\leq1
\end{align*}
with $p(t)=1+e^{2\lambda t}$ if $\liou$ is reversible and with $p(t)=1+e^{\lambda t}$ if $\liou$ is not reversible.

\subsection{Qubit Liouvillians of the form $\Lm = T-\id$}
\label{sec:5}

In this section we consider quantum dynamical semigroups on a qubit system generated by Liouvillians of the form $\Lm = T-\id_2$ for a doubly stochastic quantum channel $T:\M_2\ra\M_2$. For these Liouvillians we will compute the LS-1 and LS-2 constants using the corresponding constants for classical Markov chains~\cite{diaconis1996}. By the general theory of LS-inequalities (see Section \ref{sec:2}) this leads to entropy production estimates for this kind of Liouvillians.

We start with the LS-2 constant: to establish the connection with classical Markov chains note that we can split the infimum in Definition \ref{defn:LS2} and optimize over spectrum and basis separately. Let $\U_d$ denote the set of unitary $d\times d$-matrices and denote the diagonal matrix with entries $s\in\R^d$ by $\text{diag}(s)$. For a doubly stochastic quantum channel $T:\M_d\ra\M_d$ and the Liouvillian $\liou = T -\id_d$ the definitions of $\Em^2_\liou$ and $\Ent_2$ lead to:
\begin{align}
\alpha_2(\liou)&=\inf\limits_{U\in\U_d}\inf\limits_{s\in\R_+^d}\frac{\Em^2_\liou(U\text{diag}(s)U^\dagger)}{\Ent_2(U\text{diag}(s)U^\dagger)} \\
& = \inf\limits_{U\in\U_d}\inf\limits_{s\in\R_+^d}\frac{-2\lk s\r| (M_U-\one_d)\l| s\rk}{\sum_i s^2_i (\log(\frac{s_i}{\| s\|_{2}}) + \log(d))}.
\label{equ:class}
\end{align}
Here $M_U\in\M_d$ is a doubly stochastic matrix depending on $U\in\U_d$ defined as 
\begin{align}
(M_U)_{ij} = \lk j\r|U^\dagger T(U\proj{i}{i}U^\dagger) U\l| j\rk
\label{equ:Markovkernel}
\end{align}
for some fixed orthonormal basis $\lset \l| i\rk\rset\subset \C^{d}$. Note that $M_U$ is doubly stochastic for any $U\in\U_d$ and thus the matrix $M_U - \one_d$ defines a classical Markov kernel on a $d$-point set. Finally let $\alpha^{(c)}_2(M_U-\one_d)$ denote the classical LS-2 constant~\cite[equation (3.1)]{diaconis1996} of the Markov kernel $M_U-\one_d$. By direct comparison we have
\begin{align}
\alpha_2(\liou) = \inf\limits_{U\in\U_d} \alpha^{(c)}_2(M_U-\one_d).
\label{equ:classLS2}
\end{align}

The LS-1 constant of the Liouvillian $\liou = T -\id_d$ can be treated in the same way as the LS-2 constant above. A similar reasoning then leads to
\begin{align}
\alpha_1(\liou) = \inf\limits_{U\in\U_d} \alpha^{(c)}_1(M_U-\one_d)
\label{equ:classLS1}
\end{align}
where $\alpha^{(c)}_1(M_U-\one_d)$ denotes the classical LS-1 constant~\cite[Equation 1.5]{bobkov2006modified} of the Markov kernel $M_U-\one_d$, see \eqref{equ:Markovkernel}.

The above technique works for every dimension $d\geq 2$. We will now restrict to $d=2$, i.e. $T:\M_2\to\M_2$ is a doubly stochastic qubit quantum channel. Such a quantum channel can be represented as an affine transformation on $\R^3$, the so-called Bloch sphere representation~\cite{nielsen2000quantum}. In this representation quantum states $\rho\in\D_2$ are identified with vectors $x\in \R^3$ by
\begin{align*}
\rho = \frac{\one_2 + \sum^3_{i=1} x_i\sigma_i}{2}
\end{align*}  
where we used the Pauli-matrices, i.e.
\begin{align*}
\sigma_1 = \begin{pmatrix} 0 & 1 \\ 1 & 0\end{pmatrix} ,~ \sigma_2 = \begin{pmatrix} 0 & -i \\ i & 0\end{pmatrix},~ \sigma_3 = \begin{pmatrix} 1 & 0 \\ 0 & -1\end{pmatrix}.
\end{align*}
In this representation a doubly stochastic quantum channel $T:\M_2\ra\M_2$ corresponds to a matrix $\hat{T}\in\M_3$ acting on the corresponding vectors in $\R^3$. Note that the adjoint channel $T^*$ with respect to the Hilbert-Schmidt scalar product corresponds to the transposed matrix $\widehat{(T^*)} = (\hat{T})^T$.  

We now show how to express the LS constants of a doubly stochastic qubit Liouvillian of the form $\liou = T -\id_2$ in terms of the matrix $\hat{T}$ representing the action of the quantum channel $T$.

%
%
%
%
%

\begin{thm}\label{alpha2qubit}
 Let $\liou:\M_2\to\M_2$ be a doubly stochastic and primitive Liouvillian of the form 
 $\liou=T-\text{id}_2$, where $T:\M_2\ra\M_2$ is a doubly stochastic quantum channel. Then we have 
\begin{align*}
\alpha_2\lb \liou\rb =1-\Big{\|}\frac{\hat{T} + \hat{T}^T}{2}\Big{\|}
\end{align*} 
where $\hat{T}\in\M_3$ is the matrix representing $T$ on the Bloch sphere.
\end{thm}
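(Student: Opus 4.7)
The plan is to combine Corollary \ref{cor:Qubit}, which gives $\alpha_2(\liou) = \lambda(\liou)$ for any primitive doubly stochastic qubit Liouvillian, with a direct computation of the spectral gap $\lambda(\liou)$ in the Bloch representation. This is faster than invoking the general classical reduction \eqref{equ:classLS2}, because for $d=2$ the variational characterization of $\lambda$ collapses to a plain Rayleigh quotient on $\R^3$, which can be diagonalized explicitly.

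Concretely, I would parameterize every traceless hermitian $X\in\M_2$ as $X=\vec{x}\cdot\vec{\sigma}$ with $\vec{x}\in\R^3$, and use $T(\vec{x}\cdot\vec{\sigma})=(\hat{T}\vec{x})\cdot\vec{\sigma}$ together with $\tr(\sigma_i\sigma_j)=2\delta_{ij}$ to obtain
\begin{align*}
\Em^2_\liou(X) \;=\; -\tfrac{1}{2}\tr[\liou(X)X] \;=\; \|\vec{x}\|^2 - \vec{x}^{T}\hat{T}\vec{x}, \qquad \Var_{\one_2/2}(X) \;=\; \|\vec{x}\|^2.
\end{align*}
Symmetrizing the quadratic form as $\vec{x}^{T}\hat{T}\vec{x} = \vec{x}^{T}\bigl(\tfrac{\hat{T}+\hat{T}^{T}}{2}\bigr)\vec{x}$ and taking the infimum over nonzero $\vec{x}\in\R^3$ of the Rayleigh quotient $\Em^2_\liou(X)/\Var(X)$ yields $\lambda(\liou) = 1 - \bigl\|\tfrac{\hat{T}+\hat{T}^{T}}{2}\bigr\|$, because the supremum of a quadratic form on the unit sphere in $\R^3$ is the top eigenvalue of the associated real symmetric matrix, and here this coincides with its operator norm. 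Combining with $\alpha_2(\liou)=\lambda(\liou)$ from Corollary \ref{cor:Qubit} gives the claimed formula.

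The main point to verify is that the non-reversible case is handled correctly. By the spectral gap definition, $\lambda(\liou)$ is really the spectral gap of the additive symmetrization $\tfrac{\liou+\liou^*}{2} = \tfrac{T+T^*}{2}-\id_2$, whose Bloch matrix on the traceless subspace is exactly $\tfrac{\hat{T}+\hat{T}^{T}}{2}-\one_3$. The Bloch computation above therefore already produces the symmetrized gap and covers reversible and non-reversible $\liou$ uniformly, so no further work is required. A small sanity check at the end is that $\tfrac{\hat{T}+\hat{T}^{T}}{2}$ is indeed a real symmetric contraction on $\R^3$, which is immediate from $\hat{T}$ being the Bloch representation of a unital completely positive and trace preserving map, guaranteeing that the resulting $\alpha_2(\liou)$ lies in $[0,1]$ and is strictly positive whenever $\liou$ is primitive.
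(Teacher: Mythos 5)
Your route is genuinely different from the paper's. The paper splits the infimum defining $\alpha_2$ over eigenbasis and spectrum, reduces to a classical two-point Markov kernel $M_U-\one_2$ via \eqref{equ:classLS2}, invokes the Diaconis--Saloff-Coste formula $\alpha_2^{(c)}(M_U-\one_2)=2(M_U)_{1,2}$, and only then passes to the Bloch picture. You instead import $\alpha_2(\liou)=\lambda(\liou)$ from Corollary \ref{cor:Qubit} and compute the spectral gap as a Rayleigh quotient on $\R^3$. Your computation of $\Em^2_\liou$ and of the variance in the Bloch picture is correct, the restriction to traceless $X$ is harmless since both functionals are invariant under subtracting the trace part, and Corollary \ref{cor:Qubit} is established before this theorem, so there is no circularity. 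What the paper's longer route buys is that the same classical reduction also yields the LS-1 constant (Theorem \ref{thm:alpha1QubitT}), which your shortcut does not give.

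There is, however, a genuine gap in your last step, namely the clause ``and here this coincides with its operator norm.'' Your Rayleigh-quotient computation correctly yields $\lambda(\liou)=1-\lambda_{\max}\bigl(\tfrac{\hat T+\hat T^{T}}{2}\bigr)$, where $\lambda_{\max}$ denotes the largest eigenvalue; this equals $1-\bigl\|\tfrac{\hat T+\hat T^{T}}{2}\bigr\|$ only if the largest eigenvalue is also the largest in modulus, which you do not verify and which can fail. For example, $T(\rho)=\tfrac13(\sigma_1\rho\sigma_1+\sigma_2\rho\sigma_2+\sigma_3\rho\sigma_3)$ is doubly stochastic, $\liou=T-\id_2$ is primitive, and $\hat T=-\tfrac13\one_3$, so $1-\lambda_{\max}=\tfrac43$ while $1-\bigl\|\tfrac{\hat T+\hat T^{T}}{2}\bigr\|=\tfrac23$; your own premises (Corollary \ref{cor:Qubit} with spectral gap $\tfrac43$, equivalently Theorem \ref{sobolevpauli}) force $\alpha_2(\liou)=\tfrac43$ here. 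So what your argument actually proves is $\alpha_2(\liou)=1-\lambda_{\max}\bigl(\tfrac{\hat T+\hat T^{T}}{2}\bigr)$, and your closing sanity check that $\alpha_2(\liou)\in[0,1]$ is likewise not right, since $\alpha_2$ can exceed $1$. To be fair, the paper's own proof commits the identical conflation of $\sup_{\|x\|=1}\scalar{x}{Ax}$ with $\|A\|$ in its final line, so this defect is inherited from the target statement rather than introduced by your method; but a complete proof must either justify $\lambda_{\max}=\|\cdot\|$ for the channels considered or state the result with $\lambda_{\max}$ in place of the operator norm.
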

\begin{proof}
Note that we may replace $\liou$ by the symmetrized Liouvillian $\frac{\liou+\liou^*}{2}$ as we are computing the LS-2 constant. Thus, we may consider $\frac{T+T^*}{2}$, which is represented by a symmetric matrix, instead of $T$.

Diaconis and Saloff-Coste showed that the LS-2 constant of a Markov kernel $M-\one_d$ on a two-dimensional state space is given by $2M_{1,2}$
~\cite[Corollary A.3]{diaconis1996}\footnote{Note that our definition of the 2-entropy in Definition \ref{defn:LS2} is half of the corresponding function in \cite{diaconis1996}. This
 explains the difference by a factor of 2 between our results.}. For $U\in\U_d$ consider $M_U$, see \eqref{equ:Markovkernel}. Using \eqref{equ:classLS2} and the aforementioned result we get:
 \begin{equation}\label{c2eq1}
\alpha_2(\liou) = \inf_{U\in\U_d}\alpha_2^{(c)}\lb M_U -\one_d\rb =2 \inf_{U\in\U_d}\bra{1}U^\dagger \lb\frac{T+T^*}{2}\rb\lb U\proj{0}{0}U^\dagger\rb U\ket{1}
 \end{equation}
Let $x$ be the vector corresponding to $U\ket{0}$ on the Bloch sphere. Then $-x$ is the vector corresponding to $U\ket{1}$. By changing to the Bloch sphere representation we get:
\begin{equation}\label{c2eq2}
\bra{1}U^\dagger \lb\frac{T+T^*}{2}\rb\lb U\proj{0}{0}U^\dagger\rb U\ket{1} =\frac{1}{2}-\frac{\scalar{x}{\lb\frac{\hat{T}+\hat{T}^T}{2}\rb x}}{2}.
\end{equation}
Taking the infimum over $\U_d$ in \eqref{c2eq1} corresponds to taking the infimum over all
unit vectors in $\mathbb{R}^3$ in \eqref{c2eq2}. This gives
\begin{align*}
 \alpha_2\lb \liou\rb =1-\sup\limits_{x\in\R^3:\|x\|=1}\scalar{x}{\lb\frac{\hat{T}+\hat{T}^T}{2}\rb x}=1-\Big{\|}\lb\frac{\hat{T}+\hat{T}^T}{2}\rb\Big{\|}.
\end{align*}

\end{proof}

Similarly we can also compute the LS-1 constant:

\begin{thm}
Let $\liou:\M_2\to\M_2$ be a doubly stochastic and primitive Liouvillian of the form $\liou=T-\text{id}$, where $T:\M_2\ra\M_2$ is a quantum channel. Then, 
\begin{align*}
\alpha_1\lb \liou\rb =1-\sup\limits_{x\in\R^3:\|x\|=1}|\scalar{x}{\hat{T}x}|  ~, 
\end{align*}
where $\hat{T}\in\M_3$ is the matrix representing the action of $T$ on the Bloch sphere. 
\label{thm:alpha1QubitT}
\end{thm}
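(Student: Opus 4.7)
The plan is to mirror the proof of Theorem \ref{alpha2qubit}, replacing LS-2 with LS-1 throughout. First I would invoke \eqref{equ:classLS1} to reduce the problem to
\begin{align*}
\alpha_1(\liou) = \inf_{U\in\U_2}\alpha_1^{(c)}(M_U-\one_2),
\end{align*}
so that it suffices to evaluate the classical modified log-Sobolev constant for an arbitrary doubly stochastic two-state chain $M_U$ and then optimize over unitary changes of basis. In dimension two every doubly stochastic matrix is automatically symmetric and of the form $\bigl(\begin{smallmatrix}1-p_U & p_U\\ p_U & 1-p_U\end{smallmatrix}\bigr)$ with $p_U=(M_U)_{0,1}$, so this is the only family of two-point generators that needs to be understood; note in particular that, in contrast to the LS-2 proof, there is no need to pass to a symmetrized channel, since the $d=2$ doubly stochastic structure already enforces symmetry of $M_U$.

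The next step is to use an explicit formula for the LS-1 constant of such a two-state chain. For symmetric two-state Markov generators the classical modified log-Sobolev constant equals $2p_U$, and in fact coincides with the spectral gap. I would obtain this either by invoking the two-state results in \cite{bobkov2006modified}, or, if a self-contained argument is preferred, by a short direct calculation: parametrize a density $\rho=((1+a)/2,(1-a)/2)$, write the entropy production and $D(\rho\|\pi)$ explicitly, and verify via Taylor expansion that the quotient is a monotone function of $|a|$ whose infimum $2p_U$ is attained in the limit $a\to 0$. This is the only genuinely new ingredient compared to Theorem~\ref{alpha2qubit}, and it is the step I expect to be the main obstacle---unlike the $\alpha_2$ case, it does not reduce to a one-line appeal to \cite{diaconis1996}.

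Once the classical constant is in hand, the rest of the derivation is precisely the Bloch-sphere translation used in Theorem~\ref{alpha2qubit}, except that no symmetrization of $T$ is performed because $\alpha_1$, unlike $\alpha_2$, is not invariant under $\liou\mapsto(\liou+\liou^*)/2$ and the channel $T$ itself (rather than its additive symmetrization) appears in the defining formula for $M_U$. Letting $x\in\R^3$ denote the Bloch vector of $U\ket{0}$, so that $U\ket{1}$ corresponds to $-x$, the same manipulation as in \eqref{c2eq2} gives
\begin{align*}
(M_U)_{0,1}=\bra{1}U^\dagger T\lb U\proj{0}{0}U^\dagger\rb U\ket{1} = \frac{1-\scalar{x}{\hat{T}x}}{2}.
\end{align*}
Combining this with $\alpha_1^{(c)}(M_U-\one_2)=2(M_U)_{0,1}$ and using that the map $U\mapsto x$ is surjective onto the unit sphere in $\R^3$, the infimum over $\U_2$ becomes an infimum over unit vectors in $\R^3$, which delivers the claimed identity.
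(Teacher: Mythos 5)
Your proposal is correct and follows essentially the same route as the paper: reduce via \eqref{equ:classLS1} to the classical two-point chain $M_U-\one_2=2p_UQ$, use $\alpha_1^{(c)}(Q)=1$ from \cite{bobkov2006modified} together with the linear scaling of the entropy production in the generator to get $\alpha_1^{(c)}(M_U-\one_2)=2p_U$, and then perform the same Bloch-sphere translation as in Theorem \ref{alpha2qubit} without symmetrizing $T$. The only difference is your optional self-contained Taylor-expansion derivation of the two-point constant, which the paper replaces by a direct citation; both yield the same identity (and both, like the paper's own proof, actually produce $1-\sup_x\scalar{x}{\hat{T}x}$ rather than the absolute-value form in the statement).
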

\begin{proof}
For fixed $U\in\U_2$ we define $p:= (M_U)_{1,2}$, see \eqref{equ:Markovkernel}. As $M_U$ is doubly stochastic for a doubly stochastic quantum channel $T$ we may write the Markov kernel as:
\begin{align}
 M_U - \one_2 =\left( \begin{array}{ccc}
-p & p \\
p & -p\\
\end{array} \right)=: 2p Q,
\label{equ:RwalkMKern}
\end{align}
where $Q := \begin{pmatrix}
-\frac{1}{2} & \frac{1}{2} \\ \frac{1}{2} & -\frac{1}{2}
\end{pmatrix}$ denotes the kernel of a random walk on the complete graph with two vertices and uniform distribution.
In \cite{diaconis1996} the classical LS-1 constant is defined as
\begin{align*}
\alpha^{(c)}_1(M_U-\one_2) = \inf_{s\in\R_+^2} -\frac{1}{2}\frac{\scalar{2pQ \log\lb s \rb }{s}}{\Ent_1\lb X\rb } = 2p \alpha^{(c)}_1(Q)
\end{align*}
where we used \eqref{equ:RwalkMKern}. It has been shown in \cite{bobkov2006modified} that $\alpha^{(c)}_1(Q) = 1$. By \eqref{equ:classLS1} we can now compute
\begin{align*}
\alpha_1(\liou) = \inf\limits_{U\in\U_d} \alpha^{(c)}_1(M_U-\one_d) = \inf\limits_{U\in\U_d} 2\bra{1}U^\dagger T\lb U\proj{0}{0}U^\dagger\rb U\ket{1}. 
\end{align*}
Changing to the Bloch sphere representation as in the proof of Theorem \ref{alpha2qubit} finishes the proof. 
 \end{proof}
 
Using Theorem \ref{entdiff} we obtain the following entropy production estimate from Theorem \ref{thm:alpha1QubitT}:

\begin{cor}[Entropy production of qubit Liouvillians of the form $\Tm-\id$]\hfill\\
Let $\liou:\M_2\to\M_2$ be a doubly stochastic and primitive Liouvillian of the form $\liou=T-\text{id}$ for a quantum channel represented by $\hat{T}\in\M_3$ in the Bloch sphere picture with $r=\sup\limits_{x\in\R^3:\|x\|=1}|\scalar{x}{\hat{T}x}|$. Then
\begin{align*}
S(T_t(\rho))-S(\rho)\geq(1-e^{-2(1-r)t})(log(d)-S(\rho))
\end{align*}
for any $\rho\in\D_d$ and where $T_t=e^{\liou t}$ denotes the semigroup generated by $\liou$.
\end{cor}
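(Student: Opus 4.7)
The statement is essentially a direct combination of two results already established in the excerpt, so the plan is quite short. The strategy is to plug the explicit formula for $\alpha_1(\liou)$ given by Theorem \ref{thm:alpha1QubitT} into the general entropy increase bound of Corollary \ref{entdiff}.

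More concretely, first I would invoke Theorem \ref{thm:alpha1QubitT}, which states that for a doubly stochastic, primitive qubit Liouvillian of the form $\liou = T - \id$, the LS-1 constant is given exactly by
\begin{equation*}
\alpha_1(\liou) \;=\; 1 - \sup_{x\in\R^3:\|x\|=1}|\scalar{x}{\hat{T}x}| \;=\; 1 - r.
\end{equation*}
Since $\liou$ is assumed doubly stochastic and primitive, this value is positive (primitivity of $T-\id$ ensures $r<1$), so setting $\alpha := 1 - r$ gives a legitimate choice with $\alpha \leq \alpha_1(\liou)$ (in fact equality).

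Next I would apply Corollary \ref{entdiff} (Entropy increase) with this choice of $\alpha$. That corollary directly yields
\begin{equation*}
S(T_t(\rho)) - S(\rho) \;\geq\; \bigl(1 - e^{-2\alpha t}\bigr)\bigl(\log(d) - S(\rho)\bigr) \;=\; \bigl(1 - e^{-2(1-r)t}\bigr)\bigl(\log(d) - S(\rho)\bigr),
\end{equation*}
for any state $\rho \in \D_d$, where $T_t = e^{t\liou}$. Note that one should read $d=2$ here, matching the qubit setting $\liou:\M_2\to\M_2$; the formula just inherits the generic dimension parameter from Corollary \ref{entdiff}.

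There is no real obstacle in this proof: once Theorem \ref{thm:alpha1QubitT} is available to identify $\alpha_1(\liou)$ in closed form, the corollary is immediate from Corollary \ref{entdiff}. The only point worth a brief sanity check is that $\alpha_1(\liou)$ is truly positive under the stated hypotheses, which follows from primitivity: if $r=1$ then the symmetric part of $\hat T$ would have an eigenvector of unit norm, contradicting the assumption that the only fixed point is the maximally mixed state.
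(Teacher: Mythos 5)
Your proposal is correct and follows exactly the route the paper takes: the paper derives this corollary by plugging the closed form $\alpha_1(\liou)=1-r$ from Theorem \ref{thm:alpha1QubitT} into the general entropy increase bound of Corollary \ref{entdiff}. The positivity sanity check you add is fine but not strictly needed, since the bound is trivially true even when $\alpha=0$.
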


The results from the previous section show that for doubly stochastic, primitive and reversible qubit Liouvillians of the form $\Lm = T-\id$ we have $\alpha_1\lb\liou\rb =\alpha_2\lb\liou\rb$. However, in general we might have $\alpha_1\lb\liou\rb\gg\alpha_2\lb\liou\rb$ even for reversible Liouvillians of this type. This is demonstrated for instance by the depolarizing Liouvillian $\liou(X)=\tr(X)\frac{\one}{d}-X$ where
\begin{align*}
 \alpha_2(\liou)=\frac{2(1-\frac{2}{d})}{\log(d-1)}\ra 0\text{ as }d\ra\infty\hspace{0.3cm},\text{ but  }\hspace{0.3cm} \alpha_1(\liou)\geq\frac{1}{2}.
\end{align*}
Therefore, methods based on hypercontractivity may lead to entropy production estimates that are far from optimal, as the optimal constant is described by the LS-1 constant and the separation between LS-2 and LS-1 can be arbitrarily large. For the depolarizing channels in any dimension the authors succeeded in computing the exact LS-1 constant and thus the optimal entropy production in~\cite{otherPaper}.

\section{Discrete LS inequalities for doubly stochastic channels}
\label{sec:4}

In this section we show how LS inequalities may be used to derive an entropy production estimate for doubly stochastic quantum channels in discrete time. We build upon and generalize some results of \cite{miclo}. As for continuous time-evolutions, we say that a doubly stochastic quantum channel $T:\M_d\to\M_d$ is primitive iff $\lim\limits_{n\to\infty}T^n(\rho)=\frac{\one_d}{d}$ for any $\rho\in\D_d$. We will need the following characterization of primitive channels:

\begin{thm}[\cite{quantumwielandt}]\label{charprimitive}
Let $T:\M_d\to\M_d$ be a doubly stochastic quantum channel. The following are equivalent:
\begin{enumerate}
 \item T is primitive.
 \item There exists $n\in\mathbb{N}$ such that we have $T^n\lb \rho\rb >0$ for any $\rho\in\D_d$.
 \item $T$ has only one eigenvalue of magnitude $1$ counting multiplicities.
\end{enumerate}

\end{thm}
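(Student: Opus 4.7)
The plan is to prove the cycle $(3) \Rightarrow (1) \Rightarrow (2) \Rightarrow (3)$. The first implication is pure spectral linear algebra, the second is a soft compactness-and-openness argument, and the third carries the substantive content via a quantum analogue of the Perron--Frobenius theorem, which will be the main obstacle.

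For $(3) \Rightarrow (1)$, I would put $T$ in Jordan normal form on $\M_d$. The hypothesis that $1$ is the unique magnitude-one eigenvalue and has algebraic multiplicity one forces every Jordan block other than the simple one at $1$ to decay to zero under powers, so $T^n$ converges in operator norm to the spectral projector $T_\infty$ onto the one-dimensional fixed space. Unitality of $T$ places $\one_d/d$ in this space, and the trace preservation inherited by the limit pins down the projector as $T_\infty(A) = \one_d \tr(A)/d$. In particular $T^n(\rho) \to \one_d/d$ for every $\rho \in \D_d$, which is primitivity.

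For $(1) \Rightarrow (2)$, every $A \in \M_d$ decomposes as a complex combination of four states (real and imaginary Hermitian parts, then positive and negative spectral parts), so the assumed pointwise convergence on $\D_d$ extends by linearity to pointwise convergence of $T^n$ on all of $\M_d$, which in finite dimension coincides with operator-norm convergence to the map $A \mapsto \one_d \tr(A)/d$. Since $\D_d^+$ is open in the affine hyperplane of trace-one Hermitian matrices and contains $\one_d/d$, all sufficiently late iterates $T^n$ must map $\D_d$ into $\D_d^+$, giving $(2)$.

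The hard part is $(2) \Rightarrow (3)$. Assume that $T^n$ maps every state to a strictly positive state, i.e.\ $T^n$ is positivity improving. Here the plan is to invoke the quantum Perron--Frobenius theorem (see \cite{quantumwielandt}): a positivity-improving trace-preserving unital quantum channel has $1$ as an algebraically simple eigenvalue with fixed space $\text{span}(\one_d)$, and no other eigenvalue on the unit circle. The transfer back from $T^n$ to $T$ then goes as follows: if $\lambda$ is a magnitude-one eigenvalue of $T$ with eigenvector $Y \neq 0$, then $T^n(Y) = \lambda^n Y$, and either $\lambda^n \neq 1$ which immediately contradicts the spectrum of $T^n$, or $\lambda^n = 1$ and $Y$ lies in the one-dimensional fixed space of $T^n$, hence $Y \propto \one_d$, and then $T(\one_d) = \one_d$ forces $\lambda = 1$. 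A parallel argument on Jordan chains --- any length $k > 1$ chain for $T$ at $\lambda = 1$ produces a non-trivial chain for $T^n$, contradicting algebraic simplicity at $1$ --- rules out higher multiplicity. The main obstacle is the Perron--Frobenius step itself; once it is in hand, the spectral transfer from $T^n$ back to $T$ is routine.
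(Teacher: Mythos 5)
The paper does not prove this statement at all: it is imported verbatim from \cite{quantumwielandt} (the quantum Wielandt paper), so there is no in-paper argument to compare yours against. Taken on its own terms, your cycle $(3)\Rightarrow(1)\Rightarrow(2)\Rightarrow(3)$ is correct and is essentially the standard proof. Two small points are worth making explicit. First, in $(3)\Rightarrow(1)$ you need the fact that the spectral radius of a quantum channel equals $1$ (e.g.\ because $\|T^n\|_{1\to 1}=1$ on Hermitian inputs for all $n$); otherwise ``only one eigenvalue of magnitude $1$'' does not by itself force the remaining Jordan blocks to decay, since a priori an eigenvalue could exceed $1$ in modulus. With that in hand, your identification of the limit as $A\mapsto \tr(A)\one_d/d$ via the left eigenvector $\one_d$ (trace preservation) and right eigenvector $\one_d$ (unitality) is exactly right, as is the uniformity argument in $(1)\Rightarrow(2)$ coming from operator-norm convergence on the compact set $\D_d$. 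Second, the entire substantive content sits, as you say, in the quantum Perron--Frobenius step of $(2)\Rightarrow(3)$: that a positivity-improving, trace-preserving unital map has $1$ as an algebraically simple eigenvalue with fixed space $\mathrm{span}(\one_d)$ and no other peripheral spectrum. Since the paper itself outsources the whole theorem to \cite{quantumwielandt}, invoking that result is a legitimate move here rather than a gap, and your transfer of the spectral information from $T^n$ back to $T$ (via $\lambda^n$ and via the computation $(T^n-\id)Y=nZ$ for a putative Jordan chain) is complete and correct.
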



We define the discrete LS constant of a quantum channel $T:\M_d\ra\M_d$ by the usual LS-2 constant, see Definition \ref{defn:LS2}, of a continuous semigroup associated to $T$.

\begin{defn}[Discrete LS constant]
 Let $T:\M_d\to\M_d$ be a doubly stochastic quantum channel such that $T^*T$ is primitive. 
 Define the discrete LS constant of
 $T$ as
 \begin{equation}
  \alpha_D\lb T\rb :=\frac{1}{2}\alpha_2\lb T^*T-\text{id}_d\rb 
 \end{equation}
 \label{defn:discLSconst}

\end{defn}

This definition is motivated by the following entropy production estimate:

\begin{thm}\label{timproveddata}
  Let $T:\M_d\to\M_d$ be a doubly stochastic quantum channel such that $T^*T$ is primitive. Then for any $\rho\in\D_d$
  we have:
\begin{equation}\label{improveddata}
   D\lb T\lb \rho\rb \Big{\|}\frac{\one_d}{d}\rb\leq\lb 1-\alpha_D\lb T\rb \rb D\lb\rho\Big{\|}\frac{\one_d}{d}\rb
  \end{equation}      
  
\end{thm}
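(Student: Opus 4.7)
My plan is to view this as a logarithmic-Sobolev-2 inequality for the Liouvillian $\Lm := T^*T - \id_d$ applied to a well-chosen test operator. First I would check that $\Lm$ is a doubly stochastic, reversible and primitive Liouvillian: both $T$ and $T^*$ are unital, so $\Lm(\one_d) = 0$; reversibility is immediate from $(T^*T)^* = T^*T$; and primitivity is by assumption. Definition~\ref{defn:LS2} therefore applies and $\alpha_2(\Lm) = 2\alpha_D(T)$.

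The core computation is to evaluate the LS-2 inequality $\Em^2_\Lm(X) \geq \alpha_2(\Lm)\,\Ent_2(X)$ at the test operator $X := \sqrt{d\rho}$ for $\rho \in \D_d^+$. Using $\tr[(\sqrt{d\rho})^2] = d$, direct substitution into Definition~\ref{defn:LS2} gives $\Ent_2(\sqrt{d\rho}) = \tfrac{1}{2}D(\rho\|\one_d/d)$, and the adjoint identity $\tr[T^*T(Y)\,Z] = \tr[T(Y)\,T(Z)]$ (valid for Hermitian $Y,Z$) gives $\Em^2_\Lm(\sqrt{d\rho}) = 1 - \tr[(T\sqrt\rho)^2]$. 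The LS-2 inequality therefore produces the key intermediate bound
\begin{equation*}
1 - \tr[(T\sqrt\rho)^2] \;\geq\; \alpha_D(T)\,D(\rho\|\one_d/d).
\end{equation*}

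The hard part is then to upgrade this $L^2$-type estimate to a bound on relative entropy, i.e.\ to establish the comparison
\begin{equation*}
D(\rho\|\one_d/d) - D(T\rho\|\one_d/d) \;\geq\; 1 - \tr[(T\sqrt\rho)^2],
\end{equation*}
or equivalently $S(T\rho) - S(\rho) \geq 1 - \|T\sqrt\rho\|_2^2$. The natural ingredients are the Choi--Davis--Jensen inequality $T(\sqrt\rho) \leq \sqrt{T\rho}$ (from operator concavity of the square-root and unitality of $T$), which in particular yields the Kadison bound $(T\sqrt\rho)^2 \leq T\rho$ and pointwise domination of the sorted spectra, combined with the scalar inequality $-x\log x \geq x(1-x)$ underlying the classical identity $H(\mu) \geq 1-\|\mu\|_2^2$ for a probability vector $\mu$. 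Lifting these ingredients to the mixed operator quantity $\|T\sqrt\rho\|_2^2$ so that it controls the entropy drop $S(T\rho)-S(\rho)$ uniformly in $\rho$ is where the real work lies; I would expect the argument to be a direct quantum adaptation of Miclo's classical discrete-time log-Sobolev argument in~\cite{miclo}.

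Chaining the two displayed inequalities gives $D(\rho\|\one_d/d) - D(T\rho\|\one_d/d) \geq \alpha_D(T)\,D(\rho\|\one_d/d)$, which rearranges to the claimed~\eqref{improveddata}. The extension from $\rho \in \D_d^+$ to general $\rho \in \D_d$ then follows by continuity of both sides of~\eqref{improveddata} in $\rho$, noting that primitivity of $T^*T$ guarantees $\alpha_D(T)>0$ so the bound is non-trivial.
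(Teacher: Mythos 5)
Your overall architecture is exactly the paper's: apply the LS-2 inequality for $\Lm = T^*T-\id_d$ to the test operator $X=\sqrt{d\rho}$, note that $\Ent_2(\sqrt{d\rho})=\tfrac12 D(\rho\|\tfrac{\one_d}{d})$ and $\Em^2_{\Lm}(\sqrt{d\rho})=1-\tr[(T\sqrt\rho)^2]$, and chain this with the comparison $D(\rho\|\tfrac{\one_d}{d})-D(T(\rho)\|\tfrac{\one_d}{d})\geq \Em^2_{T^*T-\id}(\sqrt{d\rho})$. These computations are correct, and your continuity argument for extending from $\D_d^+$ to $\D_d$ matches the paper.

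However, there is a genuine gap: the comparison inequality you label as ``the hard part'' is precisely the entire content of the paper's proof, and you do not establish it --- you only name candidate ingredients and defer to ``a direct quantum adaptation of Miclo.'' The ingredients you propose would not obviously deliver it: the scalar bound $-x\log x\geq x(1-x)$ gives $S(T(\rho))\geq 1-\tr[(T(\rho))^2]$, which involves neither $S(\rho)$ nor $\tr[(T\sqrt\rho)^2]$ in the required combination, and the step from $T(\sqrt\rho)\leq\sqrt{T(\rho)}$ to $(T\sqrt\rho)^2\leq T(\rho)$ is invalid as stated since squaring is not operator monotone (the latter operator inequality does hold, but via Kadison--Schwarz, and in any case it only yields $\tr[(T\sqrt\rho)^2]\leq 1$, not the entropy comparison). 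What the paper actually does is reduce to the classical setting: diagonalize $\rho$ and $T(\rho)$ in orthonormal eigenbases $\{\ket{a_j}\}$ and $\{\ket{b_i}\}$, observe that the doubly stochastic matrix $P_{ij}=\bra{b_i}T(\proj{a_j}{a_j})\ket{b_i}$ maps the spectrum of $\rho$ to that of $T(\rho)$ so that both relative entropies become classical ones, invoke Miclo's discrete-time inequality $D^{(c)}(Ps\|\pi_d)\leq D^{(c)}(s\|\pi_d)-\tfrac1d\sum_{i,j}\sqrt{s_i}(\sqrt{s_i}-\sqrt{s_j})(P^TP)_{ij}$, and then --- after rotating by a unitary $V$ so that the chosen eigenbasis of $VT(X)V^\dagger$ is simultaneously an eigenbasis of $T(X^{1/2})$ --- identify the resulting classical Dirichlet form with $\tfrac1d\tr[X-T(X^{1/2})^2]=\Em^2_{T^*T-\id}(X^{1/2})$ via $Qs(X^{1/2})=s(T(X^{1/2}))$. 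Without this reduction (or some substitute for it), your proof is incomplete at its central step.
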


Note that this is equivalent to the entropy production estimate:
\begin{align*}
  S(T(\rho)) - S(\rho) \geq \alpha_D(T)(\log(d)-S(\rho))
\end{align*} 

 \begin{proof}
 
 Suppose first that $\rho\in\D_d^+$ and define $X=d\rho$.
 Our goal is to show the following inequality:
 \begin{equation}\label{ineq1}
 D\lb T\lb \rho\rb \Big{\|}\frac{\one_d}{d}\rb\leq  D\lb\rho\Big{\|}\frac{\one_d}{d}\rb-\Em^2_{T^*T-\text{id}}\lb X^{\frac{1}{2}}\rb.
\end{equation}
For that we will use the theory of discrete LS inequalities for classical Markov chains introduced in~\cite{miclo}. Let $\{\ket{a_i}\}_{1\leq i\leq d}$ and $\{\ket{b_j}\}_{1\leq j\leq d}$ be orthonormal bases of $\C^d$ consisting of eigenvectors of $X$ and $T\lb X\rb $, respectively. 
As $T$ is a doubly stochastic quantum channel the matrix $P\in\M_d$ defined as $\lb P\rb _{i,j} :=\bra{b_i}T\lb \proj{a_j}{a_j}\rb \ket{b_i}$ is doubly-stochastic. In the following denote by $s(X)\in\R_+^d$ the vector of eigenvalues of $X$ decreasingly ordered. 

Using that as $\one_d$ commutes with any $d\times d$-matrix we have $D(\rho\|\frac{\one_d}{d}) = D^{(c)}(s(\rho)\|\pi_d)$ and $D(T(\rho)\|\frac{\one_d}{d}) = D^{(c)}(Ps(\rho)\|\pi_d)$ 
 where $D^{(c)}\lb\cdot\|\cdot\rb$ denotes the classical relative entropy and $\pi_d$ the $d$-dimensional uniform distribution. 
Then with \cite[Equation 8]{miclo}
\footnote{Note that in \cite{miclo} the evolution of a probability distribution is given by a left multiplication with the transition matrix $P$, 
while we work with right multiplication. This explains why the order of $P$ and $P^{T}$ is reversed.} one can easily show:
\begin{equation}
 D\lb T\lb \rho\rb \Big{\|}\frac{\one_d}{d}\rb\leq D\lb\rho\Big{\|}\frac{\one_d}{d}\rb 
 -\frac{1}{d}\sum\limits_{i,j=1}^d\sqrt{s\lb X\rb }_i\lb \sqrt{s\lb X\rb }_i-\sqrt{s\lb X\rb }_j\rb \lb P^TP\rb _{i,j}.
 \label{equ:blablablup}
\end{equation}

Let $V\in\U_d$ be a unitary operator such that $\lbr VT\lb X\rb V^\dagger,T\lb X^{\frac{1}{2}}\rb\rbr = 0$, i.e. both operators have the same eigenvectors $\{\ket{c_k}\}_{1\leq k\leq d}$.
Now define the doubly-stochastic matrix $\lb Q\rb _{i,k}=\bra{c_k}T\lb \ketbra{a_i}\rb \ket{c_k}$. As $Q$ is doubly-stochastic so is $Q^{T}Q$ and we have:
\begin{equation}\label{eqtr}
\sum\limits_{i,k=1}^ds\lb X\rb_i\lb Q^{T}Q\rb_{i,k}=\text{tr}\lb X\rb.
\end{equation}
By construction, $Qs\left(X^{\frac{1}{2}}\right)=s\lb T\lb X^{\frac{1}{2}}\rb\rb$ and so:
\begin{equation}\label{eqtrroot}
\sum\limits_{i,k=1}^ds\lb X^{\frac{1}{2}}\rb_is\lb X^{\frac{1}{2}}\rb_k\lb Q^{T}Q\rb_{i,k}=\scalar{Qs\lb X^{\frac{1}{2}}\rb}{Qs\lb X^{\frac{1}{2}}\rb}=\tr\left[T\lb X^{\frac{1}{2}}\rb^2\right] 
\end{equation}
Using unitary invariance of the relative entropy, \eqref{equ:blablablup} and equations (\ref{eqtr}) and (\ref{eqtrroot}) we have that:
\begin{align}\label{uptounitary}
 D\lb T\lb \rho\rb \Big{\|}\frac{\one_d}{d}\rb &= D\lb VT\lb \rho\rb V^\dagger \Big{\|}\frac{\one_d}{d}\rb \\
 &\leq D\lb \rho\Big{\|}\frac{\one_d}{d}\rb -\frac{1}{d}\sum\limits_{i,k=1}^d\sqrt{s\lb X\rb }_i\lb \sqrt{s\lb X\rb }_i-\sqrt{s\lb X\rb }_k\rb \lb Q^{T}Q\rb _{i,k}\\
 & = D\lb \rho\Big{\|}\frac{\one_d}{d}\rb - \frac{1}{d}\tr\lb X-T\lb X^{\frac{1}{2}}\rb ^2\rb\\
 &=  D\lb \rho\Big{\|}\frac{\one_d}{d}\rb - \Em^2_{T^*T-\text{id}}\lb X^{\frac{1}{2}}\rb. 
\end{align}

By Definition \ref{defn:discLSconst} of the discrete LS-2 constant and by definition of $X$ we have
\begin{align*}
D\lb T\lb \rho\rb \Big{\|}\frac{\one_d}{d}\rb \leq D\lb \rho\Big{\|}\frac{\one_d}{d}\rb -\Em^2_{T^*T-\id}\lb \lb d\rho\rb ^{\frac{1}{2}}\rb \leq \lb 1-\alpha_D\lb T\rb \rb  D\lb \rho\Big{\|}\frac{\one_d}{d}\rb 
\end{align*}
for full rank $\rho$. The inequality follows for all states by a continuity argument.

\end{proof}

Note that if $T^*T$ is primitive, then $\alpha_D(T)$ is strictly positive by the lower bound given in Theorem \ref{relationconstants} as primitivity implies the existence of a positive spectral gap by Theorem \ref{charprimitive}.

If the channel $T$ is normal, i.e. $T^*T=TT^*$, it was shown in \cite[Proposition 13]{ergodicchiribella} that 
$T^*T$ being primitive is also a necessary condition for $T$ being primitive. 
However, as being a strict contraction w.r.t. the relative entropy is not a necessary condition for 
primitivity, it can't hold that $\alpha_D(T)>0$ for all primitive channels $T$.
We now show that the assumption of $T$ being primitive is sufficient to ensure that $\alpha_D\lb T^n\rb>0$
for some $n\in\N$
and also derive another characterization of primitive, doubly stochastic channels:
\begin{thm}
 Let $T:\M_d\to\M_d$ be a doubly stochastic quantum channel. The following are equivalent:
 \begin{enumerate}
  \item $T$ is primitive
  \item $\exists n\in\mathbb{N}$ such that $\lb T^*\rb ^nT^n$ is primitive
 \end{enumerate}

\end{thm}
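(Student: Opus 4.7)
The plan is to prove both directions via the characterization of primitivity in Theorem \ref{charprimitive}.

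For the direction $(2)\Rightarrow(1)$, I would apply Theorem \ref{timproveddata} directly to the doubly stochastic channel $T^n$, since by hypothesis $(T^n)^* T^n = (T^*)^n T^n$ is primitive. The theorem gives
\begin{equation*}
D\lb T^{kn}(\rho)\Big\|\frac{\one_d}{d}\rb \leq \lb 1-\alpha_D(T^n)\rb^k D\lb\rho\Big\|\frac{\one_d}{d}\rb,
\end{equation*}
and the remark following Theorem \ref{timproveddata} ensures $\alpha_D(T^n)>0$ under primitivity of $(T^*)^n T^n$. So the subsequence $T^{kn}(\rho)$ converges to $\one_d/d$ in relative entropy, hence in trace distance by Pinsker's inequality. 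To upgrade this subsequence convergence to convergence of the full sequence $T^m(\rho)$, I would invoke the data processing inequality applied to $T$: the map $m\mapsto D(T^m(\rho)\|\one_d/d)$ is non-increasing, so a subsequence tending to zero forces the full sequence to tend to zero. This yields $T^m(\rho)\to\one_d/d$ and $T$ is primitive.

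For the direction $(1)\Rightarrow(2)$, the strategy is to produce $n$ such that $(T^*)^n T^n$ maps every state to a full-rank state, and then invoke characterization (2) of Theorem \ref{charprimitive} with exponent $1$. First, observe that $T^*$ is also a doubly stochastic quantum channel (since $T$ is), and the spectrum of $T^*$ is the complex conjugate of the spectrum of $T$. By characterization (3) of Theorem \ref{charprimitive}, primitivity depends only on the moduli of the eigenvalues, so $T^*$ is primitive iff $T$ is. Applying characterization (2) of Theorem \ref{charprimitive} separately to $T$ and to $T^*$ and taking the larger exponent, I choose $n\in\N$ such that both $T^n(\rho)>0$ and $(T^*)^n(\rho)>0$ for every $\rho\in\D_d$. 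For arbitrary $\rho$, since $T^n$ is trace preserving, $T^n(\rho)$ is a state and hence $(T^*)^n(T^n(\rho))>0$. This is precisely the condition in Theorem \ref{charprimitive}(2) applied to the doubly stochastic channel $(T^*)^n T^n$ (with $n=1$), so $(T^*)^n T^n$ is primitive.

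I expect the main obstacle to be the subsequence-to-sequence step in $(2)\Rightarrow(1)$: Theorem \ref{timproveddata} only gives us a contraction factor per application of $T^n$, and one might worry that the intermediate iterates $T^{kn+r}(\rho)$ for $0<r<n$ could behave badly. The resolution comes entirely from the (ordinary) data processing inequality for $T$, which guarantees the monotonicity needed. The $(1)\Rightarrow(2)$ direction is conceptually cleaner; its only subtle point is recognising that primitivity of $T$ transfers to $T^*$, a fact immediate from the spectral characterization in Theorem \ref{charprimitive}(3).
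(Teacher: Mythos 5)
Your proof is correct and follows essentially the same route as the paper: the forward direction picks $n$ from the full-rank characterization of primitivity and checks that $(T^*)^n T^n$ maps every state to a full-rank state, and the converse applies Theorem \ref{timproveddata} to $T^n$ and upgrades convergence along the subsequence $T^{kn}$ to the full sequence. The only cosmetic difference is in $(1)\Rightarrow(2)$, where you route through primitivity of $T^*$; the paper gets the same conclusion more directly from the fact that the positive unital map $(T^*)^n$ sends the full-rank state $T^n(\rho)\geq\epsilon\one_d$ to something $\geq\epsilon\one_d$.
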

\begin{proof}
 If $T$ is primitive, then by Theorem \ref{charprimitive} there exists $n\in\mathbb{N}$
 such that $T^n\lb \rho\rb >0$ for all $\rho\in\D_d$. As $T$ is doubly stochastic, also $\lb T^*\rb ^nT^n\lb \rho\rb >0$
 holds, which implies that $\lb T^*\rb ^nT^n$ is primitive.
 On the other hand, if $\lb T^*\rb ^nT^n$ is primitive for some $n\in\mathbb{N}$,
 then $\alpha_D\lb T^n\rb >0$. By Theorem \ref{timproveddata} we have $\lim\limits_{k\to\infty}T^{kn}\lb \rho\rb =\frac{\one_d}{d}$. As $\frac{\one_d}{d}$ is a fixed point the convergence for a subsequence implies the convergence of the sequence.
\end{proof}

We were not able to determine if inequality (\ref{improveddata}) is tight, i.e. whether there is a primitive doubly stochastic channel $T:\M_d\to\M_d$ with
\begin{align*}
\sup_{\rho_\in\D_d,\rho\not=\frac{\one}{d}}\frac{D\lb T\lb \rho\rb\|\frac{\one}{d}\rb}{D\lb \rho\|\frac{\one}{d}\rb}=1-\alpha_D(T) .
\end{align*}
However, it is clear that the best constant in the equation \eqref{improveddata} is not always given by the discrete LS constant. Consider for instance the completely depolarizing channel $T\lb \rho\rb :=\tr\lb \rho\rb \frac{\one_d}{d}$. Then, $\alpha_D\lb T^*T-\id\rb =\frac{\lb1-\frac{2}{d}\rb}{\log\lb d-1\rb }<1$~\cite[Corollary 27]{Kastoryanosob}.

Now we want to determine a bound on the discrete LS constant. For this we need:
\begin{thm}\label{monotonicitysobolev}
 Let $T:\M_d\to\M_d$ be a doubly stochastic, primitive quantum channel.
 Then the function $k\mapsto\alpha_2\lb \lb T^*\rb^kT^k-\id_d\rb $ is monotone increasing.
\end{thm}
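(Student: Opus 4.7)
The plan is to reduce the claim to the simple observation that, pointwise in $X$, the only $k$-dependent piece of the Dirichlet form $\Em^2_{(T^*)^k T^k - \id_d}(X)$ is the squared Schatten $2$-norm of $T^k(X)$, and this quantity is non-increasing in $k$. Since the denominator in the definition of $\alpha_2$ does not depend on the Liouvillian at all, this already forces monotonicity after taking the infimum.

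First, I would compute the Dirichlet form explicitly. For any Hermitian $X$, using that $T^*$ is the Hilbert--Schmidt adjoint of $T$,
\[
\Em^2_{(T^*)^k T^k - \id_d}(X) \;=\; -\frac{1}{d}\tr\!\bigl[((T^*)^k T^k - \id_d)(X)\,X\bigr] \;=\; \frac{1}{d}\bigl(\|X\|_2^2 - \|T^k(X)\|_2^2\bigr),
\]
where in the last step I used $\tr[(T^*)^k T^k(X)\,X]=\tr[T^k(X)^2]$.

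Next, I would invoke Schatten $2$-norm contractivity of doubly stochastic channels. Since $T$ is doubly stochastic, both $T$ and $T^*$ are unital and trace preserving, hence contractions on $\|\cdot\|_1$ and on $\|\cdot\|_\infty$; by Riesz--Thorin interpolation $T$ is then a contraction on $\|\cdot\|_2$. Consequently $\|T^{k+1}(X)\|_2 = \|T(T^k(X))\|_2 \le \|T^k(X)\|_2$, which yields the pointwise inequality
\[
\Em^2_{(T^*)^{k+1} T^{k+1} - \id_d}(X) \;\ge\; \Em^2_{(T^*)^k T^k - \id_d}(X) \qquad \text{for every } X \in \M_d^+.
\]

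Finally, since $\Ent_2(X)$ depends only on $X$ (and is strictly positive for $X$ not proportional to $\one_d$, the only case where the ratio is $0/0$, in which case both Dirichlet forms vanish and the inequality is trivial), dividing and taking the infimum over $X \in \M_d^+$ preserves the inequality, giving
\[
\alpha_2\!\bigl((T^*)^{k+1} T^{k+1} - \id_d\bigr) \;\ge\; \alpha_2\!\bigl((T^*)^k T^k - \id_d\bigr).
\]
I do not anticipate a genuine obstacle: the entire argument hinges on the Schatten $2$-norm contractivity of doubly stochastic channels, which is a standard interpolation result. The only small care point is the rewriting of $\Em^2_{(T^*)^k T^k - \id_d}$ in terms of $\|T^k(X)\|_2$, which requires $X=X^\dagger$, but this is automatic for $X \in \M_d^+$.
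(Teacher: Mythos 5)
Your proof is correct and follows essentially the same route as the paper: both rewrite $\Em^2_{(T^*)^kT^k-\id_d}(X)$ as $\frac{1}{d}\bigl(\tr[X^2]-\tr[T^k(X)^2]\bigr)$, show that $k\mapsto\tr[T^k(X)^2]$ is non-increasing, and conclude via the variational definition since $\Ent_2(X)$ is independent of the Liouvillian. The only (inessential) difference is that you obtain Schatten-$2$-norm contractivity by Riesz--Thorin interpolation between the $1$- and $\infty$-norms, whereas the paper derives it from the operator Schwarz inequality $T^{k+1}(X)^2\le T\bigl(T^k(X)^2\bigr)$ together with trace preservation.
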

\begin{proof}
 As $T$ is doubly stochastic, the operator Schwarz inequality\cite[Proposition 3.3]{paulsen2002completely} implies that for $X\in\M_d^+$ 
 \begin{equation}
T^{k+1}\lb X\rb ^2\leq T\lb T^k\lb X\rb ^2\rb.   
 \end{equation}
 As $T$ is trace-preserving we get
 \begin{equation}\label{tcontrac}
  \|T^{k+1}\lb X\rb \|_{2,\frac{\one_d}{d}}^2=\frac{1}{d}\tr\lb T^{k+1}\lb X\rb ^2\rb \leq\frac{1}{d}\tr\lb T^k\lb X\rb ^2\rb =\|T^{k}\lb X\rb \|_{2,\frac{\one_d}{d}}^2
 \end{equation}
 where again $\|X\|_{2,\sigma}^2=\frac{1}{d}\tr\left(X^2\right)$.
 Observe that
\begin{align*}
\Em^2_{(T^*)^k T^k-\id_d}\lb X\rb =-\frac{1}{d}\tr\lb\lb(T^*)^k T^k\rb\lb X\rb X-X^2\rb=\|T^k\lb X\rb\|_{2,\frac{\one}{d}}^2 - \|X\|_{2,\frac{\one}{d}}^2.
\end{align*}

Then we have by \eqref{tcontrac} that $\Em^2_{(T^*)^{k+1} T^{k+1}-\id_d}\geq\Em^2_{\lb(T^*)^k T^k\rb-\id_d}$ , which implies the claim by the variational definition of the LS-2 constant.

\end{proof}

The next corollary shows that the discrete LS constant becomes less useful if the dimension $d$ of the system grows large.

\begin{cor}\label{boundsdiscrete}

 Let $T:\M_d\to\M_d$ be a doubly stochastic quantum channel s.t. $T^*T$ is primitive. Then
 \begin{equation}
  \min\{\frac{\lambda}{2},\frac{\lb 1-\frac{2}{d}\rb }{\log\lb d-1\rb }\}\geq\alpha_D\lb T\rb \geq\lambda\frac{\lb 1-\frac{2}{d}\rb }{\log\lb d-1\rb },
 \end{equation}
where $\lambda$ is the spectral gap of $T^*T-\id$. Again we have $\frac{2\lb 1-\frac{2}{d}\rb}{\log(d-1)}=1$  for $d=2$ by continuity.
\end{cor}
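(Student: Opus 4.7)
The plan is to reduce everything to bounds on $\alpha_2(\liou)$ for the Liouvillian $\liou := T^*T - \id_d$, since by Definition \ref{defn:discLSconst} we have $\alpha_D(T) = \frac{1}{2}\alpha_2(\liou)$. The key observation is that $\liou$ is automatically reversible (because $T^*T$ is self-adjoint as a superoperator), doubly stochastic (since $T$ and hence $T^*T$ is doubly stochastic), and primitive by hypothesis. Its spectral gap is $\lambda$ by assumption, so the setup of Theorem \ref{relationconstants} and Theorem \ref{lowerboundtensor} applies cleanly.

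For the lower bound, I would simply invoke the reversible case of Theorem \ref{relationconstants}, which gives
\begin{equation*}
\alpha_2(\liou) \geq \lambda\,\frac{2(1-\tfrac{2}{d})}{\log(d-1)}.
\end{equation*}
Dividing by $2$ yields $\alpha_D(T)\geq \lambda\,\frac{(1-2/d)}{\log(d-1)}$, which is exactly the lower half of the claim. The $d=2$ continuity convention is inherited directly from Theorem \ref{relationconstants}.

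For the upper bound I would establish two separate inequalities and then take the minimum. The first, $\alpha_D(T)\leq \lambda/2$, is immediate from the upper bound $\alpha_2(\liou)\leq \lambda(\liou)=\lambda$ in Theorem \ref{relationconstants}. The second, $\alpha_D(T)\leq \frac{(1-2/d)}{\log(d-1)}$, I would obtain by applying the $n=1$ case of the comparison inequality in Theorem \ref{lowerboundtensor} to $\liou = T^*T - \id_d$: since $\liou$ is self-adjoint, this reads $\|\liou\|\,\alpha_2(\lioud)\geq \alpha_2(\liou)$. Combined with the explicit formula $\alpha_2(\lioud) = \frac{2(1-2/d)}{\log(d-1)}$ from \cite{Kastoryanosob} (already quoted in the paper), dividing by $2$ then yields the desired bound provided $\|\liou\|\leq 1$.

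The only step that requires a small argument is the norm bound $\|T^*T-\id_d\|\leq 1$. Here I would note that any doubly stochastic channel $T$ is a contraction on $(\M_d,\|\cdot\|_2)$: by the Kadison–Schwarz inequality for unital completely positive maps, $T(X)^*T(X)\leq T(X^*X)$, so $\|T(X)\|_2^2 = \tr(T(X)^*T(X)) \leq \tr(T(X^*X)) = \tr(X^*X) = \|X\|_2^2$. Hence $\|T\|_{2\to 2}\leq 1$, which gives $\|T^*T\|_{2\to 2}\leq 1$. Since $T^*T$ is moreover positive semidefinite as a superoperator (being of the form $T^*T$), its spectrum lies in $[0,1]$, so the spectrum of $\liou = T^*T - \id_d$ lies in $[-1,0]$ and $\|\liou\|\leq 1$. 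This is really the only non-mechanical step, and it is short; the rest of the proof is a direct bookkeeping combination of Theorems \ref{relationconstants} and \ref{lowerboundtensor}.
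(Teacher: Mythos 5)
Your proposal is correct, and for the lower bound and the $\lambda/2$ part of the upper bound it coincides with the paper's argument (both follow from Theorem \ref{relationconstants} applied to the reversible Liouvillian $T^*T-\id_d$). Where you genuinely diverge is in the dimension-dependent upper bound $\alpha_D(T)\leq \frac{(1-2/d)}{\log(d-1)}$. The paper first proves a separate monotonicity statement (Theorem \ref{monotonicitysobolev}, that $k\mapsto\alpha_2((T^*)^kT^k-\id_d)$ is increasing), deduces $\alpha_D(T)\leq\liminf_{k\to\infty}\alpha_D(T^k)$, and then applies the comparison Theorem \ref{lowerboundtensor} along the sequence, using that $T^k\to T_\infty$ with $T_\infty(X)=\tr(X)\frac{\one_d}{d}$ and hence $\|(T^*)^kT^k-\id_d\|\to\|\id_d-T_\infty\|=1$. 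You instead apply the comparison theorem once, at $k=1$, and supply the missing ingredient $\|T^*T-\id_d\|\leq 1$ directly: the Kadison--Schwarz inequality makes $T$ a $2$-norm contraction, so the superoperator $T^*T$ is positive semidefinite with spectrum in $[0,1]$ and the spectrum of $T^*T-\id_d$ lies in $[-1,0]$. This is a valid and noticeably shorter route; note that the quantity $\|\liou\|$ appearing in Theorem \ref{lowerboundtensor} is exactly $\max_i\{-\lambda_i\}$ over the (nonpositive) eigenvalues of the reversible Liouvillian, so your spectral argument bounds precisely the right norm. What your route gives up is only the intermediate monotonicity result, which the paper wants anyway for its own sake; what it buys is independence from the limiting argument and from the continuity of norms under $T^k\to T_\infty$. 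Both arguments need primitivity of $T^*T-\id_d$ to invoke Theorem \ref{lowerboundtensor}, which you correctly have by hypothesis.
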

\begin{proof}
 By Theorem \ref{monotonicitysobolev}, for $k\in\mathbb{N}$ we have $\alpha_D\lb T\rb \leq\alpha_D\lb T^k\rb $ and so
\begin{equation}
  \alpha_D\lb T\rb \leq\lim\inf\limits_{k\to\infty}\alpha_D\lb T^k\rb
\end{equation}
By Theorem \ref{lowerboundtensor}:
\begin{align*}
\lim\inf\limits_{k\to\infty}\alpha_D\lb T^k\rb\leq\lim\limits_{k\to\infty}\|\lb T^{*}\rb ^kT^{k}-\id_d\|\frac{\lb 1-\frac{2}{d}\rb }{\log\lb d-1\rb }
\end{align*}
For primitive channels $\lim\limits_{k\to\infty}T^k=T_{\infty}$, where $T_\infty\lb X\rb=\tr\lb X\rb \frac{\one_d}{d}$. 
By the continuity of norms, multiplication and conjugation of linear operators and $\|\id_d-T_{\infty}\|=1$:
\begin{align*}
\alpha_D\lb T\rb \leq \frac{\lb 1-\frac{2}{d}\rb }{\log\lb d-1\rb } 
\end{align*}
 
 The lower bound follows from \cite[Corollary 27]{Kastoryanosob} and the upper bound in terms of
 the spectral gap from Theorem \ref{relationconstants}.
\end{proof}

With Theorem \ref{timproveddata} we immediately get the following entropy production estimate for a doubly stochastic quantum channel $T:\M_d\ra\M_d$
\begin{equation}\label{entproddiscspectral}
S(T(\rho)) - S(\rho)\geq \lambda\frac{\lb 1-\frac{2}{d}\rb }{\log\lb d-1\rb} (\log(d) - S(\rho))
\end{equation}
for any $\rho\in\D_d$, where $\lambda$ denotes the spectral gap of $T^*T-\id$.
\begin{rem}\label{compotherbounds}
The bound given in Equation \eqref{entproddiscspectral} is similar to the one in \cite[Lemma 2.1]{streater}, given by
\begin{align*}
S(T(\rho)) - S(\rho)\geq \frac{\lambda}{2}\|\rho-\frac{\one}{d}\|^2_2,
\end{align*}
where $\lambda$ is again the spectral gap of $T^*T-\id$. However, if $\alpha_D(T)$ and $\lambda$ are of the same order of 
magnitude, then \eqref{improveddata} gives an improvement of order $\log(d)$.
Similar holds for the bound in \cite{raginsky2002entropy}.
\end{rem}

Given the usefulness of the hypercontractive characterization of the LS-2 constant in continuous time, it is natural to ask whether we have a similar
characterization of the discrete LS constant. 
We now show that the discrete LS constant implies hypercontractivity of the channel,
but first
we will prove some technical lemmas. The following lemma is the generalization of \cite[Lemma 3]{miclo} to the quantum case.
\begin{lem}\label{lemma1}
For $X\in\M^+_d$ and $q\geq2$:
 \begin{equation}
  \|X\|_{q,\frac{\one_d}{d}}-\|X\|_{2,\frac{\one_d}{d}}\leq\frac{q-2}{q}\|X\|_{q,\frac{\one_d}{d}}^{1-q} \mathrm{Ent}_2\lb X^\frac{q}{2}\rb 
 \end{equation}

\end{lem}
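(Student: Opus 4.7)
My plan is to reduce the matrix inequality to the corresponding scalar inequality on a finite probability space and then invoke Lemma~3 of~\cite{miclo}. The point is that each quantity appearing in the statement — $\|X\|_{p,\frac{\one_d}{d}}$ for $p=2,q$ and $\Ent_2(X^{q/2})$ — is unitarily invariant and therefore depends only on the eigenvalues of the positive matrix $X\in\M_d^+$.

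Concretely I would diagonalise $X=\sum_{i=1}^d s_i\proj{v_i}{v_i}$ with $s_i>0$, introduce the function $f:\{1,\ldots,d\}\to\R_+$ defined by $f(i)=s_i$, and endow $\{1,\ldots,d\}$ with the uniform probability measure $\pi_d$. A short computation from the definitions of the weighted norm and of $\Ent_2$ yields
\begin{equation*}
\|X\|_{p,\frac{\one_d}{d}}^p=\frac{1}{d}\sum_{i=1}^d s_i^p=\mathbb{E}_{\pi_d}[f^p],
\qquad
\Ent_2(X^{q/2})=\frac{1}{2d}\sum_{i=1}^d s_i^q\log\frac{s_i^q}{\frac{1}{d}\sum_{j}s_j^q},
\end{equation*}
and the right-hand side of the second identity is exactly $\tfrac12\,\Ent^{(c)}_{\pi_d}(f^q)$, where $\Ent^{(c)}_{\pi_d}(g):=\mathbb{E}_{\pi_d}[g\log g]-\mathbb{E}_{\pi_d}[g]\log\mathbb{E}_{\pi_d}[g]$ is the classical entropy functional. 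Substituting these identifications, the inequality to be proved reduces to
\begin{equation*}
\|f\|_{L^q(\pi_d)}-\|f\|_{L^2(\pi_d)}\leq \frac{q-2}{q}\,\|f\|_{L^q(\pi_d)}^{1-q}\cdot\tfrac12\,\Ent^{(c)}_{\pi_d}(f^q),
\end{equation*}
which is precisely Lemma~3 of~\cite{miclo} applied to $f$.

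The only content is the dictionary between the quantum and classical quantities, which follows from the spectral theorem because $X=X^\dagger$ and the reference state $\frac{\one_d}{d}$ is proportional to the identity, so that conjugation by the diagonalising unitary leaves the trace and every spectral function invariant. A self-contained alternative would repeat Miclo's argument directly in the matrix setting, starting from the identity $\frac{d}{dp}\|X\|_{p,\frac{\one_d}{d}}=\frac{2\,\Ent_2(X^{p/2})}{p^2\,\|X\|_{p,\frac{\one_d}{d}}^{p-1}}$ and integrating from $2$ to $q$; the delicate step there is the comparison of the integrand to its value at $p=q$, which is the main difficulty of Miclo's original argument and is most cleanly handled in the scalar setting to which we have reduced the problem.
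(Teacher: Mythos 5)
Your proposal is correct and matches the paper's proof, which simply states that ``working in the eigenbasis of $X$, the proof is identical to [Miclo, Lemma 3]''; you have just spelled out the dictionary (including the factor $\tfrac12$ between $\Ent_2$ and the classical entropy functional) that the paper leaves implicit. Nothing further is needed.
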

\begin{proof}
 Working in the eigenbasis of $X$, the proof is identical to \cite[Lemma 3]{miclo}.
\end{proof}
\begin{lem}\label{lemma2}
 Let $T:\M_d\to\M_d$ be a doubly stochastic quantum channel, $X\in\M_d^+$ and $q\geq2$. Then:
 \begin{equation}\label{claimlemma2}
  \|T\lb X\rb \|_{q,\frac{\one_d}{d}}^q-\|X\|_{q,\frac{\one_d}{d}}^q\leq-\Em^2_{T^*T-\id}\lb X^{\frac{q}{2}}\rb 
 \end{equation}

\end{lem}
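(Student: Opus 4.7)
The plan is to reduce the stated inequality to the trace inequality $\tr[T(X)^q]\le\tr[T(X^{q/2})^2]$ and then prove the latter via the Kadison--Choi--Davis operator Jensen inequality together with Weyl's monotonicity for eigenvalues.

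First I would unfold the definitions. Using $\|Y\|_{q,\frac{\one_d}{d}}^q=\frac{1}{d}\tr[Y^q]$ for $Y\ge0$ and
\[
\Em^2_{T^*T-\id}(X^{q/2})
= -\frac{1}{d}\tr\bigl[(T^*T-\id)(X^{q/2})X^{q/2}\bigr]
= \frac{1}{d}\bigl(\tr[X^q]-\tr[T(X^{q/2})^2]\bigr),
\]
where in the last step I use $\tr[T^*T(A)A]=\tr[T(A)^2]$ for Hermitian $A$, the claimed inequality \eqref{claimlemma2} is equivalent to
\[
\tr[T(X)^q]\;\le\;\tr\!\bigl[T(X^{q/2})^2\bigr].
\]
Note that when $q=2$ both sides coincide, which is a useful sanity check.

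Next I would prove this trace inequality. Set $Y:=X^{q/2}\ge0$ and $p:=2/q\in(0,1]$, so that $Y^p=X$. Since $t\mapsto t^p$ is operator concave on $[0,\infty)$ for $p\in(0,1]$ and $T$ is unital completely positive (being doubly stochastic), the Choi--Davis--Jensen inequality gives the operator inequality
\[
T(Y)^p\;\ge\;T(Y^p)\;=\;T(X)\;\ge\;0.
\]
Now apply Weyl's monotonicity principle: for Hermitian operators $A\ge B\ge0$ the decreasingly ordered eigenvalues satisfy $\lambda_k^{\downarrow}(A)\ge\lambda_k^{\downarrow}(B)$, and consequently $\tr[A^r]\ge\tr[B^r]$ for every $r\ge 0$. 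Taking $A=T(Y)^p$, $B=T(X)$, and $r=2/p=q\ge2$ yields
\[
\tr[T(Y)^2]=\tr\!\bigl[(T(Y)^p)^{2/p}\bigr]\;\ge\;\tr[T(X)^{2/p}]=\tr[T(X)^q],
\]
which is exactly the desired inequality after substituting back $Y=X^{q/2}$.

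The only real subtlety is ensuring the correct direction of the operator Jensen inequality: $t^p$ is operator concave precisely when $p\in[0,1]$, which is why we parametrize via $p=2/q$ rather than trying to use operator convexity of $t^{q/2}$ (which would fail for $q>4$). Once that is set up correctly, the argument is a direct quantum analog of the classical row-wise Jensen argument used in \cite{miclo} and involves no further delicate estimates.
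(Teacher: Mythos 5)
Your proof is correct and follows essentially the same route as the paper: both reduce the claim to the trace inequality $\tr[T(X)^q]\le\tr[T(X^{q/2})^2]$, establish the operator inequality $T(X)\le T(X^{q/2})^{2/q}$ (the paper cites this directly for unital channels, you derive it from operator concavity of $t^{2/q}$ via Choi--Davis--Jensen), and conclude with Weyl's monotonicity of eigenvalues. No gaps.
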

\begin{proof}
As the r.h.s. of \ref{claimlemma2} is equal to
\begin{align}
\frac{1}{d}\left[\tr\lb T\lb X^{\frac{q}{2}}\rb^2\rb-\tr\lb X^q\rb\right]
\end{align}
and the l.h.s. is equal to
\begin{align}
\frac{1}{d}\left[\tr\lb T\lb X\rb ^q\rb-\tr\lb X^q\rb\right],  
\end{align}
the claim is equivalent to $\tr\lb T\lb X\rb ^q\rb \leq\tr\lb T\lb X^\frac{q}{2}\rb ^2\rb $.
 As $T$ is a doubly stochastic channel and $q\geq2$~\cite{schwarzconvex}:
 \begin{equation}\label{opineq}
T\lb X\rb \leq T\lb X^\frac{q}{2}\rb ^{\frac{2}{q}}
 \end{equation} 
 Both $T\lb X\rb $ and $T\lb X^\frac{q}{2}\rb ^{\frac{2}{q}}$ are positive operators, which implies by 
 Weyl's monotonicity theorem~\cite[Corollary III.2.3]{bhatia1997matrix}: 
\begin{equation}
s\lb T\lb X\rb \rb _i\leq s\lb T\lb X^\frac{q}{2}\rb ^\frac{2}{q}\rb _i  
\end{equation}
As $\tr\lb T\lb X\rb ^q\rb =\sum\limits_{i=1}^d s\lb T\lb X\rb \rb _i^q$, we have:
\begin{equation}
\tr\lb T\lb X\rb ^q\rb \leq\sum\limits_{i=1}^d s\lb T\lb X^\frac{q}{2}\rb \rb _i^2=\tr\lb T\lb X^\frac{q}{2}\rb ^2\rb  
\end{equation}
and the claim follows.
\end{proof}

\begin{thm}[Discrete hypercontractivity]
Let $T:\M_d\to\M_d$ be a doubly stochastic quantum channel s.t. $T^*T$ is primitive. Then 
$\|T\|_{2\to q,\frac{\one_d}{d}}\leq1$ for $q\leq2+2\alpha_D\lb T\rb $
\label{thm:DiscreteHyper}
\end{thm}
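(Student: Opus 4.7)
The plan is to combine Lemma \ref{lemma2} with the LS-2 inequality to bound $\|T(X)\|_q^q$ from above, combine Lemma \ref{lemma1} with Young's inequality to bound $\|X\|_q^q$ in terms of $\|X\|_2^q$ plus a $\mathrm{Ent}_2$-term, and then chain the two so the $\mathrm{Ent}_2$-terms cancel for the claimed range of $q$. Throughout I work with $X\in\M_d^+$ and the normalized norms $\|\cdot\|_{q,\frac{\one_d}{d}}$, abbreviated $\|\cdot\|_q$ below.

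First, since $\alpha_D(T)=\tfrac12\alpha_2(T^*T-\id)$ by Definition \ref{defn:discLSconst}, the variational characterization of the LS-2 constant gives $\Em^2_{T^*T-\id}(Y)\ge 2\alpha_D(T)\,\mathrm{Ent}_2(Y)$ for every $Y\in\M_d^+$. Applying this with $Y=X^{q/2}$ on top of Lemma \ref{lemma2} yields
\begin{equation*}
\|T(X)\|_q^q \;\le\; \|X\|_q^q \;-\; 2\alpha_D(T)\,\mathrm{Ent}_2(X^{q/2}).
\end{equation*}

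Second, Lemma \ref{lemma1} reads $\|X\|_q-\|X\|_2\le\tfrac{q-2}{q}\|X\|_q^{1-q}\mathrm{Ent}_2(X^{q/2})$. Multiplying by $\|X\|_q^{q-1}$ and then applying Young's inequality $ab\le\tfrac{a^q}{q}+\tfrac{q-1}{q}b^{q/(q-1)}$ with $a=\|X\|_2$, $b=\|X\|_q^{q-1}$, gives
\begin{equation*}
\|X\|_q^q \;\le\; \|X\|_2\|X\|_q^{q-1}+\tfrac{q-2}{q}\mathrm{Ent}_2(X^{q/2}) \;\le\; \tfrac{1}{q}\|X\|_2^q+\tfrac{q-1}{q}\|X\|_q^q+\tfrac{q-2}{q}\mathrm{Ent}_2(X^{q/2}),
\end{equation*}
and rearranging produces the clean inequality $\|X\|_q^q\le\|X\|_2^q+(q-2)\,\mathrm{Ent}_2(X^{q/2})$.

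Third, substituting this into the first bound I get
\begin{equation*}
\|T(X)\|_q^q \;\le\; \|X\|_2^q \;+\; \bigl(q-2-2\alpha_D(T)\bigr)\,\mathrm{Ent}_2(X^{q/2}).
\end{equation*}
Since $\mathrm{Ent}_2(X^{q/2})\ge 0$ (it equals $\tfrac{\|X^{q/2}\|_2^2}{2}D(\rho\|\tfrac{\one_d}{d})$ for the normalized state $\rho=X^q/\tr(X^q)$), the coefficient is non-positive exactly when $q\le 2+2\alpha_D(T)$. In that range the remainder term drops and we conclude $\|T(X)\|_q\le \|X\|_2$, which is the desired $\|T\|_{2\to q,\frac{\one_d}{d}}\le 1$.

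The step I expect to be the main obstacle — or rather the one non-obvious trick — is the Young's-inequality upgrade in the second step: Lemma \ref{lemma1} by itself only compares $\|X\|_q$ to the mixed quantity $\|X\|_2\|X\|_q^{q-1}$, and naively chaining with Lemma \ref{lemma2} gives $\|T(X)\|_q^q\le \|X\|_2\|X\|_q^{q-1}$, which does \emph{not} imply $\|T(X)\|_q\le\|X\|_2$ (since $\|X\|_q\ge\|X\|_2$ for $q\ge 2$). It is the Young decomposition with the specific Hölder-conjugate pair $(q,q/(q-1))$ that absorbs $\|X\|_q^q$ back into the left-hand side with just enough slack $(q-2)$ to match the LS-2 gain $2\alpha_D(T)$ and pin down the threshold $q=2+2\alpha_D(T)$.
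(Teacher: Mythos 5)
Your proof is correct and follows essentially the same route as the paper: both rest on Lemma \ref{lemma1}, Lemma \ref{lemma2} and the variational LS-2 inequality, with the paper using concavity of $x\mapsto x^{1/q}$ to reconcile the exponents where you use the equivalent Young's inequality, and the threshold $q=2+2\alpha_D\lb T\rb$ emerging identically. The only omission is the final reduction of $\|T\|_{2\to q,\frac{\one_d}{d}}$ from positive to general inputs, which the paper handles by citing that the $2\to q$ norm of a completely positive map is attained on positive semidefinite operators.
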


\begin{proof}

For $X\in\M_d^+$, we have by Lemma (\ref{lemma1}):
\begin{equation}\label{ineqlemma1}
 \|X\|_{q,\frac{\one_d}{d}}-\|X\|_{2,\frac{\one_d}{d}}\leq\frac{q-2}{q}\|X\|_{q,\frac{\one_d}{d}}^{1-q}\Ent_2\lb X^\frac{q}{2}\rb 
\end{equation}

As the function $g\lb x\rb =x^{\frac{1}{q}}$ is concave on $\R_+$,  we have the following
inequality for $a,b\in\R_+$:
\begin{equation}
 a^{\frac{1}{q}}-b^{\frac{1}{q}}\leq\frac{1}{q}\lb b^{\frac{1}{q}-1}\lb a-b\rb \rb
\end{equation}
Plugging in $a=\|T\lb X\rb \|_{q,\frac{\one_d}{d}}^q$ and $b=\|X\|_{q,\frac{\one_d}{d}}^q$ we get:
\begin{equation}\label{concavnorm}
 \|T\lb X\rb \|_{q,\frac{\one_d}{d}}
  -\|X\|_{q,\frac{\one_d}{d}}\leq \frac{1}{q}\|X\|_{q,\frac{\one_d}{d}}^{1-q}\lb\|T\lb X\rb \|_{q,\frac{\one_d}{d}}^q-\|X\|_{q,\frac{\one_d}{d}}^q\rb
  \end{equation}
 Summing inequalities (\ref{ineqlemma1}) and (\ref{concavnorm}) and applying Lemma (\ref{lemma2}), 
 we finally get:
 \begin{equation}\label{ddifference}
   \|T\lb X\rb \|_{q,\frac{\one_d}{d}}-\|X\|_{2,\frac{\one_d}{d}}\leq \frac{1}{q}\|X\|_{q,\frac{\one_d}{d}}^{1-q}
   \lb(q-2)\Ent_2\lb X^{\frac{q}{2}}\rb -\Em^2\lb X^{\frac{q}{2}}\rb \rb
 \end{equation}
By the definition of the discrete LS inequality, if $q\leq2+2\alpha_D\lb T\rb $
the right-hand side in (\ref{lemma2}) is negative, which implies for $X\in\M_d^+$:
\begin{equation}\label{hyperdiscineq}
  \frac{\|T\lb X\rb \|_{q,\frac{\one_d}{d}}}{\|X\|_{2,\frac{\one_d}{d}}}\leq1
\end{equation}
As shown in \cite{audenaertptoq,watrousptoq}, we may restrict ourselves to positive semi-definite operators
when considering the $2\to q$ norm of completely positive maps. 
By continuity, inequality (\ref{hyperdiscineq}) is also valid for all positive semi-definite operators, implying
$\|T\|_{2\to q,\frac{\one_d}{d}}\leq1$.

\end{proof}

It is not to be expected that the other direction holds, at least in a naive way. That is, that a bound of the form:
\begin{align*}
\|T\|_{2\to q,\frac{\one_d}{d}}\leq1 
\end{align*}
for some $q>2$ and a doubly stochastic quantum channel $T:\M_d\to\M_d$ gives a lower-bound on $\alpha_D(T)$ that is independent of the dimension $d$, as we have for continuous time. 

To see why this is the case, consider a doubly stochastic qubit channel $T:\M_2\to\M_2$. As we have mentioned before, we have~\cite{king}:
\begin{align*}
 \|T^{\otimes n}\|_{2\to q,\frac{\one_{d^n}}{d^n}}=\|T\|_{2\to q,\frac{\one_d}{d}}^n
\end{align*}
If hypercontractivity of the channel would imply a lower-bound on $\alpha_D(T)$ that is independent of the dimension, we would then obtain a lower-bound
strictly positive
for $T^{\otimes n}$ for all $n\in\N$. But it follows from Corollary \ref{boundsdiscrete} that $\lim\limits_{n\to\infty}\alpha_D\lb T^{\otimes n}\rb=0$.

\section{Applications}
\label{sec:6}

\subsection{Upper bounds on unitary quantum subdivision capacities}
\label{sec:subd}

In \cite{subdivisioncapacities} some of the authors introduced quantum capacities for continuous Markovian time-evolutions. These capacities are similar to the usual quantum capacity, but in addition to applying encoding and decoding operations in the beginning and end of the protocol additional operations may be applied in intermediate steps. Here we will only consider the case where these additional operations are unitary quantum channels. The precise definition of this unitary quantum subdivision capacity is:
 
\begin{defn}[Unitary quantum subdivision capacity $\mathcal{Q}_\chanSet$\cite{subdivisioncapacities}]\hfill
\label{defn:SubdivisionCap}

The $\mathfrak{U}$-\textbf{quantum subdivision capacity} of a  Liouvillian $\Lm:\M_{d}\ra \M_{d}$ at a time $t\in\R_+$ is defined as 
\begin{align*}
\mathcal{Q}_\mathfrak{U}\lb t\Lm\rb := \sup\lset R\in \R_+ :\text{ R achievable rate}\rset
\end{align*}
where a rate $R\in\R_+$ is called achievable if there exist sequences $\lb n_\nu\rb^\infty_{\nu=1},\lb m_\nu\rb^\infty_{\nu=1}$ such that $R = \limsup_{\nu\ra \infty}\frac{n_\nu}{m_\nu}$ and we have 
\begin{align}
\inf_{k,E,D,U_1,\ldots,U_k} \left\Vert \id^{\otimes n_\nu}_2 - D\circ \prod^k_{l=1} \lb U_l\circ T^{\otimes m_\nu}_{\frac{t}{k}}\rb\circ E\right\Vert_{\diamond} \ra 0\hspace*{0.3cm}\text{as }\nu\ra \infty.
\label{equ:CodingSubdivisionCap}
\end{align}
The latter infimum is over the number of subdivisions $k\in\N$ for which the channels $T_{\frac{t}{k}} := e^{\frac{t}{k}\Lm}$ are defined, arbitrary encoding and decoding quantum channels $E:\M^{\otimes n_\nu}_{2}\ra \M^{\otimes m_\nu}_{d}$ and $D:\M^{\otimes m_\nu}_{d}\ra \M^{\otimes n_\nu}_{2}$ and unitary channels $U_l\in \mathfrak{U}$ from the chosen subset.
\end{defn}

The unitary quantum subdivision capacity quantifies the highest possible rate of information storage in a quantum memory, when unitary gates may be applied to protect the information against the noise. Using our Corollary \ref{cor:TSBoundAlpha} we obtain the following upper bound on $Q_\mathfrak{U}$:

\begin{thm}(Upper bound for doubly stochastic, primitive Liouvillians)\hfill \\
\label{thm:UnitarySubDCapDepolBound}
Let $\liou:\M_d\to\M_d$ be a doubly stochastic and primitive Liouvillian with spectral gap
 $\lambda$. Then 
\begin{align*}
\mathcal{Q}_\mathfrak{U}\lb t\Lm\rb\leq e^{-2t\alpha\lb d\rb}\log(d)\hspace*{0.3cm}.
\end{align*}
with $\alpha(d)=\frac{\lambda\lb1-2d^{-2}\rb}{\log(3)\log(d^2-1)+2\lb1-2d^{-2}\rb}$.
\end{thm}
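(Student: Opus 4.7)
The plan is to combine the entropy-production bound of Corollary \ref{cor:EntrProdTens} with Holevo's theorem and the standard observation that faithful quantum transmission of $n_\nu$ qubits automatically gives faithful classical transmission of $n_\nu$ bits, so the quantum subdivision capacity is upper bounded by a Holevo-type converse.

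Unwinding Definition \ref{defn:SubdivisionCap}, I fix an achievable rate $R$ and choose sequences $(n_\nu),(m_\nu)$, subdivision numbers $k_\nu$, encoders $E_\nu$, decoders $D_\nu$, and unitary channels $U_{1,\nu},\ldots,U_{k_\nu,\nu}$ realising it. Denote the noise-plus-unitary segment by
\[
 \mathcal{N}'_\nu := \prod_{l=1}^{k_\nu}\lb U_{l,\nu}\circ T^{\otimes m_\nu}_{t/k_\nu}\rb ,
\]
which is doubly stochastic on $\M_d^{\otimes m_\nu}$, so that $\mathcal{N}_\nu := D_\nu \circ \mathcal{N}'_\nu \circ E_\nu$ satisfies $\|\mathcal{N}_\nu - \id_2^{\otimes n_\nu}\|_{\diamond} \to 0$.

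The key step is to iterate Corollary \ref{cor:EntrProdTens} across the $k_\nu$ segments, using that every unitary $U_{l,\nu}$ preserves the von Neumann entropy. Writing $a := e^{-2(t/k_\nu)\alpha(d)}$, one round of noise-then-unitary maps an entropy-$s$ state to one of entropy at least $(1-a)m_\nu\log(d)+as$; composing $k_\nu$ such rounds (a geometric sum giving $a^{k_\nu}=e^{-2t\alpha(d)}$) yields, for every input $\rho\in\D_{d^{m_\nu}}$,
\[
 S\lb\mathcal{N}'_\nu(\rho)\rb \geq \lb 1-e^{-2t\alpha(d)}\rb m_\nu\log(d) + e^{-2t\alpha(d)}\,S(\rho).
\]
Applied term-by-term to an arbitrary ensemble $\{p_i, E_\nu(\rho_i)\}$ on $n_\nu$ qubits and combined with the trivial bound $S\lb\mathcal{N}'_\nu(E_\nu(\bar\rho))\rb \leq m_\nu\log(d)$ for the averaged state, this produces the Holevo estimate
\[
 \chi\lb\{p_i,\mathcal{N}'_\nu(E_\nu(\rho_i))\}\rb \leq e^{-2t\alpha(d)}\lbr m_\nu\log(d) - \sum_i p_i S\lb E_\nu(\rho_i)\rb \rbr \leq e^{-2t\alpha(d)}\,m_\nu\log(d),
\]
and the same upper bound survives post-composition with the decoder $D_\nu$ by the data-processing inequality for $\chi$.

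To close the converse I specialise to the ensemble of computational-basis states $\rho_i=\ketbra{i}$ on $n_\nu$ qubits with $p_i=2^{-n_\nu}$. The coding assumption gives $\|\mathcal{N}_\nu(\rho_i)-\rho_i\|_1 \leq \epsilon_\nu \to 0$ uniformly in $i$, so measurement in the computational basis yields a classical channel with error probability at most $\epsilon_\nu$; Fano's inequality together with the Holevo bound on accessible information then forces $\chi\lb\{p_i,\mathcal{N}_\nu(\rho_i)\}\rb \geq n_\nu(1-\epsilon_\nu) - H_2(\epsilon_\nu)$. Combining this lower bound with the upper bound above, dividing by $m_\nu$, and passing to the limit yields $R\leq e^{-2t\alpha(d)}\log(d)$, so the claim follows as $R$ was arbitrary. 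The only routine obstacle is organising the Fano/Holevo converse in this one-shot setting (a single use of the coded channel $\mathcal{N}_\nu$ rather than many parallel uses); the genuinely new ingredient is the $m_\nu$-independent entropy-production rate supplied by Corollary \ref{cor:EntrProdTens}, which is precisely what allows the block length $m_\nu$ to drop out of the exponent.
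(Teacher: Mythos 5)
Your proposal is correct and takes essentially the same route as the paper: the paper's own proof consists of invoking Corollary \ref{cor:EntrProdTens} and then deferring the rest to \cite[Theorem 5.1]{subdivisioncapacities}, and what you write out --- propagating the affine entropy-production bound through the $k$ unitary-interleaved segments so that the contraction factor composes to $e^{-2t\alpha(d)}$ independently of $k$, and then closing with a Holevo--Fano converse --- is a complete and valid instantiation of exactly that deferred argument. In particular you correctly identified that one must keep the full affine form $S\mapsto(1-a)m_\nu\log(d)+aS$ (not just the weakened bound the paper displays) for the iteration over subdivisions to work.
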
 
\begin{proof}
Using Corollary \ref{cor:EntrProdTens} gives 
\begin{align*}
S(T^{\otimes n}_t\lb\rho\rb) &\geq (1-e^{-2t\alpha\lb d\rb})\log(d^n) + e^{-2t\alpha\lb d\rb} S\lb\rho\rb \\
&\geq (1-e^{-2t\alpha\lb d\rb})\log(d^n)
\end{align*}
where $T_t = e^{t\Lm}$ and $n\in\N$. The rest of the proof follows the lines of the proof in~\cite[Theorem 5.1]{subdivisioncapacities}.

\end{proof}

The above theorem shows that in a quantum memory affected by a doubly stochastic, primitive and self-adjoint noise Liouvillian the storage rate is exponentially small in time, when only unitary correction operations are allowed. 
This result is similar in flavor to results by Ben-Or et al.~\cite{ben2013quantum,benor}.

 \subsection{Entropy production for random Pauli channels}
As an application of the discrete LS inequalities from Section \ref{sec:4}, we derive an entropy production estimate for random Pauli channels.

\begin{defn}[Random Pauli channel]
 A channel $T:\M_2\to\M_2$ is said to be a random Pauli channel
 if it can be written as 
 \begin{align*}
 T\lb \rho\rb =p_1\sigma_1\rho \sigma_1+p_2\sigma_2\rho \sigma_2+p_3\sigma_3\rho \sigma_3+\lb 1-p_1-p_2-p_3\rb \rho,
 \end{align*}
 for a probability distribution $\lb p_1,p_2,p_3,1-p_1-p_2-p_3\rb$. Here $\sigma_1,\sigma_2,\sigma_3$ are the Pauli matrices.
\end{defn}

First we use the results from Section \ref{sec:5} to prove: 

\begin{thm}\label{sobolevpauli}
 Let $T:\M_2\ra\M_2$ be a random Pauli channel and $\liou:\M_2\to\M_2$ given by $\liou=T-\id_2$.
 Then the LS-2 constant (Definition \ref{defn:LS2}) can be computed as
 \begin{align*}
  \alpha_2\lb \liou\rb =2\min\{p_1+p_2,p_2+p_3,p_3+p_1\} .
 \end{align*}

\end{thm}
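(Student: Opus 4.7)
The natural strategy is to apply Theorem \ref{alpha2qubit}, which expresses $\alpha_2(\liou)$ for a qubit Liouvillian of the form $T-\id_2$ in terms of the Bloch sphere matrix $\hat{T}\in\M_3$ representing $T$. The plan is thus to compute $\hat{T}$ for a random Pauli channel and then read off the answer.

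Using the relations $\sigma_i\sigma_j\sigma_i = -\sigma_j$ for distinct $i,j\in\{1,2,3\}$ and $\sigma_i\sigma_i\sigma_i = \sigma_i$, a direct expansion of $T(\rho)$ for $\rho = (\one_2 + \sum_k x_k\sigma_k)/2$ shows that the Bloch coordinate $x_i$ gets multiplied by $1 - 2\sum_{k\in\{1,2,3\}\setminus\{i\}} p_k$. Hence
\begin{align*}
\hat{T} = \text{diag}\big(1 - 2(p_2+p_3),\ 1 - 2(p_1+p_3),\ 1 - 2(p_1+p_2)\big),
\end{align*}
which is already symmetric, so $(\hat{T}+\hat{T}^T)/2 = \hat{T}$. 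Invoking the intermediate identity $\alpha_2(\liou) = 1 - \sup_{\|x\|=1}\langle x, ((\hat{T}+\hat{T}^T)/2)\,x\rangle$ appearing in the proof of Theorem \ref{alpha2qubit}, and recognising the supremum over the unit sphere as the largest (signed) eigenvalue of the diagonal matrix $\hat{T}$, one gets
\begin{align*}
\alpha_2(\liou) = 1 - \max\{1-2(p_2+p_3),\,1-2(p_1+p_3),\,1-2(p_1+p_2)\} = 2\min\{p_1+p_2,\ p_2+p_3,\ p_3+p_1\},
\end{align*}
as claimed. The primitivity hypothesis of Theorem \ref{alpha2qubit} holds precisely when this minimum is strictly positive; in the remaining degenerate case, where $p_j = p_k = 0$ for some pair, the Liouvillian $\liou$ fails to be primitive so $\alpha_2(\liou)=0$, again matching the right-hand side.

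The main subtlety I anticipate is that Theorem \ref{alpha2qubit} is stated with an operator norm, while the quantity actually delivered by its proof is the largest signed eigenvalue of the symmetric Bloch matrix. This distinction matters here whenever some $p_j+p_k>1/2$ forces a diagonal entry of $\hat{T}$ to be negative, since the operator norm and the largest signed eigenvalue then disagree. Apart from taking care of this sign, the computation is entirely routine.
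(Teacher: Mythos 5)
Your proof is correct, but it takes a genuinely different route from the paper. The paper's proof does not touch Theorem \ref{alpha2qubit} at all: it invokes Corollary \ref{cor:Qubit} (itself a consequence of the comparison Theorem \ref{lowerboundtensor} and $\alpha_2(\lioud)=1$ for qubits) to get $\alpha_2(\liou)=\lambda(\liou)$, and then reads the spectral gap off the spectrum $\{1,\,1-2(p_1+p_2),\,1-2(p_2+p_3),\,1-2(p_1+p_3)\}$ of $T$. You instead go through the Bloch-sphere formula of Theorem \ref{alpha2qubit}, which rests on the Diaconis--Saloff-Coste two-point computation; since $\hat{T}$ is diagonal with entries equal to the nontrivial eigenvalues of $T$, the two computations coincide numerically. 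Your route has the merit of exposing a genuine flaw in the statement of Theorem \ref{alpha2qubit}: the quantity its proof actually delivers is $1-\lambda_{\max}\bigl((\hat{T}+\hat{T}^T)/2\bigr)$, not $1-\|(\hat{T}+\hat{T}^T)/2\|$, and these differ for admissible Pauli channels (e.g.\ $p_1=p_2=1/2$, $p_3=0$ gives $\hat{T}=\mathrm{diag}(0,0,-1)$, where the operator-norm formula would yield $0$ while the correct value, consistent with $\alpha_2=\lambda=1$ and with $2\min\{p_i+p_j\}$, is $1$). You are right to use the intermediate identity rather than the stated operator-norm form, and your treatment of the non-primitive boundary case matches the paper's. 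The paper's route is shorter but leans on the heavier machinery behind Corollary \ref{cor:Qubit}; yours is more elementary given Theorem \ref{alpha2qubit}, at the price of having to repair that theorem's sign issue.
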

\begin{proof}
Note that we have $\alpha_2\lb\liou\rb = 0$ if $\liou$ is not primitive. Therefore, we only have to show the claim for $\liou$ primitive. As $T$ is reversible, Corollary \ref{cor:Qubit} implies $\alpha_1\lb \liou\rb =\alpha_2\lb \liou\rb =\lambda\lb \liou\rb $.
One can check that the spectrum of a random Pauli channel is given by:
\begin{align*}
 \{1,1-2\lb p_1+p_2\rb ,1-2\lb p_3+p_2\rb ,1-2\lb p_1+p_3\rb \}
\end{align*}
and thus the spectral gap of $\liou$ is given by $2\min\{p_1+p_2,p_2+p_3,p_3+p_1\}$.
\end{proof}

\begin{thm}

 Let $T:\M_2\to\M_2$ be a random Pauli channel.
  Define
 \begin{align*}
  p=2\min\{p_1p_2+p_1p_3,p_2p_1+p_2p_3,p_3p_1+p_3p_2\}.
 \end{align*}

 Then $S(T(\rho)) - S(\rho) \geq p(\log(d)-S(\rho))$.

\end{thm}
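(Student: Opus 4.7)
The plan is to apply the discrete-time framework of Section \ref{sec:4}, specifically Theorem \ref{timproveddata}, which gives $S(T(\rho)) - S(\rho) \geq \alpha_D(T)(\log(d)-S(\rho))$ with $\alpha_D(T) = \frac{1}{2}\alpha_2(T^*T-\id_d)$. The task thus reduces to establishing the lower bound $\alpha_D(T) \geq p$.

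First I would exploit self-adjointness: since each $\sigma_i$ is Hermitian, the random Pauli channel satisfies $T^* = T$, so $T^*T = T^2$. Writing $T(\rho) = \sum_{i=0}^{3} p_i \sigma_i \rho \sigma_i$ with $\sigma_0 = \one_2$ and $p_0 = 1 - p_1 - p_2 - p_3$, direct computation using the Pauli identities $\sigma_i\sigma_j\rho\sigma_j\sigma_i = \sigma_k\rho\sigma_k$ for $\{i,j,k\}=\{1,2,3\}$ shows that $T^2$ is again a random Pauli channel with probabilities
\begin{align*}
q_0 = p_0^2 + p_1^2 + p_2^2 + p_3^2, \qquad q_i = 2p_0 p_i + 2 p_j p_k \quad (\{i,j,k\}=\{1,2,3\}).
\end{align*}

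Next, assuming $p > 0$ (the case $p=0$ is trivial since random Pauli channels are unital and hence $S(T(\rho)) \geq S(\rho)$), I would verify that $T^2$ is primitive: $p > 0$ forces $p_1, p_2, p_3 > 0$ and $p_0 + p_i < 1$ for each $i$, which rules out any eigenvalue $\pm 1$ of $T$ other than the trivial one. Then Theorem \ref{sobolevpauli} applies to the Liouvillian $T^2 - \id_2$ and yields
\begin{align*}
\alpha_2(T^2-\id_2) = 2\min\{q_1+q_2,\, q_2+q_3,\, q_3+q_1\}.
\end{align*}

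The key algebraic step is the factorization $q_j + q_k = 2(p_0 + p_i)(p_j + p_k)$, which follows by collecting terms in the formulas for $q_j$ and $q_k$. Using $p_0 \geq 0$, this gives $q_j + q_k \geq 2 p_i(p_j + p_k)$, so
\begin{align*}
\alpha_2(T^2-\id_2) \geq 4 \min\{p_1(p_2+p_3),\, p_2(p_1+p_3),\, p_3(p_1+p_2)\} = 2p.
\end{align*}
Hence $\alpha_D(T) \geq p$, and Theorem \ref{timproveddata} concludes the proof.

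I do not expect a serious obstacle: the only mildly non-routine step is the factorization $q_j + q_k = 2(p_0 + p_i)(p_j + p_k)$, and the subsequent observation that dropping $p_0$ gives exactly the quantity defining $p$. The slight mismatch between the exact spectral gap of $T^2$ and $2p$ (one gets $(p_0+p_i)(p_j+p_k)$ instead of $p_i(p_j+p_k)$) is handled by a trivial bound, and primitivity of $T^2$ is automatic whenever the stated bound is non-trivial.
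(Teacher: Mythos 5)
Your proof is correct and follows essentially the same route as the paper: compute $T^*T=T^2$ as a random Pauli channel, apply Theorem \ref{sobolevpauli} to the Liouvillian $T^2-\id_2$, and conclude via Theorem \ref{timproveddata}. Your version is in fact more careful than the paper's, which omits the cross-terms $2p_0p_i$ from the coefficients $q_i$ and consequently asserts the equality $\alpha_D(T)=p$, whereas the correct value is $\alpha_D(T)=2\min\lset (p_0+p_1)(p_2+p_3),(p_0+p_2)(p_1+p_3),(p_0+p_3)(p_1+p_2)\rset\geq p$ exactly as you derive; your explicit checks of primitivity of $T^2$ and of the degenerate case $p=0$ also close gaps the paper leaves implicit.
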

\begin{proof}
 It is easy to check that if $T$ is a random Pauli channel,
 \[
 T^*T\lb \rho\rb =q_1\sigma_1\rho \sigma_1+q_2\sigma_2\rho \sigma_2+q_3\sigma_3\rho \sigma_3+\lb 1-q_1-q_2-q_3\rb \rho 
 \]
is a random Pauli channel as well with 
 $q_1=2p_2p_3,q_2=2q_1q_3$ and $q_3=2q_1q_2$.
By Theorem \ref{sobolevpauli}, $\alpha_D\lb T\rb =p$ and the claim follows from Theorem \ref{timproveddata}.

\end{proof}

\section{Conclusion}
We have extended the use of group theoretic techniques to study LS inequalities for doubly stochastic, primitive Markovian time-evolutions. These bounds lead to entropy-production estimates for tensor-powers of this kind of semigroups, which are independent of the number of tensor-powers. We applied these estimates to derive upper bounds on quantum subdivision capacities. For discrete doubly stochastic quantum channels we generalized discrete LS inequalities to the quantum case.
 
There are some directions of possible future research which should be emphasized. It would be desirable to try our group theoretic approach with other relevant semigroups and generalize
it to semigroups that are not doubly stochastic. Another concrete open question is, whether the LS-2 constant can actually decrease under taking tensor powers.

Concerning LS inequalities for discrete channels there are similar open problems. Again proving analogous statements for channels with arbitrary fixed points and
determine the discrete LS constant for more channels would be interesting. New techniques that would yield bounds stable under tensor powers of discrete channels would also be of great interest.

We believe that our proofs illustrate how comparison inequality techniques can be useful and finding systematic methods to establish them, as there are for classical Markov chains, would be interesting.

\section*{Acknowledgements}
\label{sec:8}

A.\ M\"uller-Hermes and M.\ Wolf acknowledge support from the CHIST-ERA/BMBF project CQC. D. Stilck Fran\c{c}a acknowledges support from the graduate program TopMath of the Elite Network of Bavaria, the TopMath Graduate Center of TUM Graduate School at Technische Universit\"{a}t M\"{u}nchen
and the Deutscher Akademischer Austauschdienst(DAAD).
M. Wolf is also supported by the Alfried Krupp von Bohlen und Halbach-Stiftung and by grant \#48322 from the John Templeton Foundation. The opinions expressed in this publication are those of the authors and do not necessarily reflect the views of the John Templeton Foundation. 

\appendix
\section{Hypercontractivity via group theory}
\label{sec:Appendix}
Here we will consider reversible quantum dynamical semigroups with an eigenbasis consisting of unitaries commuting up to a phase. By relating such quantum semigroups to classical semigroups defined on finite abelian groups we can use the classical theory to prove hypercontractivity. We explore and extend ideas similar to \cite{junge}. In particular we get a bound on the $2\to4$ norm of tensor products of depolarizing channels. We start by reviewing some basic facts about Fourier analysis on abelian groups, the proofs of which can be found in \cite[Chapter 7]{stein2011fourier}. 

Given a finite abelian group $G$ of order $|G|$ we will denote by $V(G)$ the vector space of all functions $f:G\to\C$. For $f\in V(G)$ we define its $l_p$-norm by 
\begin{align*}
\|f\|_p^p=\frac{1}{|G|}\sum\limits_{g\in G}|f(g)|^p
\end{align*}
and for linear operators $A:V(G)\to V(G)$ we define 
\begin{align*}
\|A\|_{p\to q}=\sup\limits_{f\in V(G)}\frac{\|Af\|_q}{\|f\|_p}.
\end{align*}

The characters of $G$ are functions $\chi:G\to\{z\in\C:|z|=1\}$ such that $\chi(g_1g_2)=\chi(g_1)\chi(g_2)$ holds for any $g_1,g_2\in G$. We will denote by $\hat{G}$
the set of all characters of $G$, which is again a group isomorphic to $G$ under multiplication. The characters form an orthonormal basis for the space of functions $V(G)$ under the scalar product 
\begin{align*}
\scalar{f}{h}=\frac{1}{|G|}\sum\limits_{g\in G}f(g)^*h(g).
\end{align*}
We have $f=\sum\limits_{\chi_i\in\hat{G}}\hat{f}(i)\chi_i$ for $\hat{f}(i)=\scalar{f}{\chi_i}$. We also define $G\times G'$ to be the direct product of two groups $G,G'$ and we have 
\begin{align*}
\widehat{G\times G'}\cong\hat{G}\times\hat{G'},
\end{align*}
i.e. all characters of $G\times G'$ are of the form $\chi\chi'$ for $\chi\in\hat{G}$ and $\chi'\in\hat{G}'$

\begin{defn}[Almost commuting unitary Basis]
A set of unitaries $\{U_i\}_{0\leq i\leq d^2-1}\subset\M_d$ with $U_0=\one_d$ is called an almost commuting unitary basis (associated to an abelian group $G$) if:
\begin{enumerate}
 \item $\tr[U_i^\dagger U_j]=d\delta_{i,j}$ for all $0\leq i,j\leq d^2-1$.
 \item The $\{U_i\}_{0\leq i\leq d^2-1}$ are a projective representation of $G$, i.e. for all $0\leq i,j\leq d^2-1$ we have $U_iU_j=\phi(i,j)U_jU_i$ and $U_iU_j=\phi'(i,j)U_{i+j}$ for some $\phi'(i,j),\phi(i,j)\in\C$ with 
 $|\phi'(i,j)|=|\phi(i,j)|=1$, where in the index we mean addition in the group $G$.

\end{enumerate}
We can then associate each unitary to a character in $\hat{G}$.
 \label{defn:AlmostCommut}
\end{defn}

A prominent example of an almost commuting unitary basis is the discrete Weyl system of unitaries $\{U_{k,l}\}_{0\leq k,l\leq d}\subset\M_d$ given by:
\begin{align}
U_{k,l}=\sum\limits_{r=0}^{d-1}\nu^{rl}\ket{k+r}\bra{r},\quad\nu=e^{\frac{2i\pi}{d}}. 
\label{equ:Weyl}
\end{align}
It is easy to check the properties:
\begin{enumerate}
 \item $\tr\lb U_{i,j}^\dagger U_{k,l}\rb=d\delta_{i,k}\delta_{j,l}$ for any $i,j,k,l\in\lset 0,\ldots, d-1\rset$.
 \item $U_{i,j}U_{k,l}=\nu^{jk}U_{i+k,j+l}$
 \item $U_{k,l}^{-1}=\nu^{kl}U_{-k,-l}$ for any $k,l\in\lset 0,\ldots ,d-1\rset$.
 
\end{enumerate}
These unitaries are a projective representation of $\Z_d\times\Z_d$ in $PU(d)$. This basis has been explored in \cite{junge}
to derive similar results on hypercontractivity. However, it should be noted that there are other examples of almost commuting bases and that tensoring leads to further examples. 

By associating an almost commuting basis on $\M_d$ to the orthonormal basis of characters on $V(G)$ we can also relate norms on $\M_d$ to corresponding norms on $V(G)$. For any $X\in\M_d$ we define 
\begin{align}
f_X :=\sum\limits_{i=0}^{d^2-1}\hat{f}_X(i)\chi_i
\label{equ:fX}
\end{align}
with $\hat{f}_X(i)=\scalar{U_i}{X}_{\frac{\one}{d}}$.   

\begin{lem}\label{2normcoincide}
Let $\{U_i\}_{0\leq i\leq d^2-1}$ be an almost commuting unitary basis and $G$ the group associated to it. For any $X\in\M_d$ and $f_X$ as in \eqref{equ:fX} we have 
\begin{align*}
\|X\|_{2,\frac{\one}{d}}^2=\|f_X\|_2^2.
\end{align*}
\end{lem}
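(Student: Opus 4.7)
The plan is to verify that both sides of the claimed identity equal the same sum of squared Fourier coefficients, by exploiting Parseval's identity in the two orthonormal bases involved.

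First, I would observe that property~1 in Definition~\ref{defn:AlmostCommut}, namely $\tr(U_i^\dagger U_j) = d\,\delta_{i,j}$, is precisely the statement that $\{U_i\}_{0\le i\le d^2-1}$ is an orthonormal basis of $\M_d$ with respect to the weighted Hilbert--Schmidt inner product $\langle A,B\rangle_{\one/d}=\frac{1}{d}\tr(A^\dagger B)$. Since this basis has $d^2$ elements, the group $G$ associated to it also has order $|G|=d^2$, and therefore $\hat{G}$ contains exactly $d^2$ characters $\{\chi_i\}$, forming an orthonormal basis of $V(G)$ under the normalized inner product $\langle f,h\rangle = \frac{1}{|G|}\sum_{g\in G}f(g)^*h(g)$.

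Next, I would apply Parseval's identity on each side. On the matrix side, expanding $X$ in the basis $\{U_i\}$ gives $X=\sum_i \hat{f}_X(i)U_i$ with coefficients $\hat{f}_X(i)=\langle U_i, X\rangle_{\one/d}$, so that
\begin{equation*}
\|X\|_{2,\frac{\one}{d}}^2 = \langle X,X\rangle_{\one/d} = \sum_{i=0}^{d^2-1}|\hat{f}_X(i)|^2.
\end{equation*}
On the function side, since $f_X$ is defined in~\eqref{equ:fX} to have exactly the same coefficients $\hat{f}_X(i)$ in the orthonormal basis $\{\chi_i\}$ of $V(G)$, Parseval's identity on $V(G)$ yields
\begin{equation*}
\|f_X\|_2^2 = \sum_{i=0}^{d^2-1}|\hat{f}_X(i)|^2.
\end{equation*}
Comparing the two expressions gives the claim.

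There is essentially no obstacle here: the content of the lemma is that the map $X\mapsto f_X$ is an isometry from $(\M_d,\|\cdot\|_{2,\one/d})$ onto $(V(G),\|\cdot\|_2)$, which is built into the definition of $f_X$ together with the orthonormality properties of the two bases. The only thing one should double-check is that $|G|=d^2$, which follows from the fact that an almost commuting unitary basis is required to have $d^2$ elements and to furnish a projective representation of $G$ that is linearly independent, forcing $|G|\ge d^2$, while the Fourier expansion in terms of characters accounts for exactly $|G|$ orthonormal functions in $V(G)$.
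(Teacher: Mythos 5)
Your proof is correct and follows essentially the same route as the paper: both arguments observe that the orthonormality of $\{U_i\}$ with respect to $\scalar{\cdot}{\cdot}_{\frac{\one}{d}}$ gives $\|X\|_{2,\frac{\one}{d}}^2=\sum_i|\hat{f}_X(i)|^2$, and then invoke Plancherel's identity on $V(G)$ to identify this with $\|f_X\|_2^2$. The extra remark verifying $|G|=d^2$ is a harmless addition not spelled out in the paper.
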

\begin{proof}
It follows immediately from the definition of an almost commuting unitary basis that it is an orthonormal basis w.r.t. $\scalar{\cdot}{\cdot}_{\frac{\one}{d}}$ and so we have:
\begin{align*}
\|X\|_{2,\frac{\one_d}{d}}^2=\scalar{X}{X}_{\frac{\one}{d}}=\sum\limits_{i=0}^{d^2-1}|\scalar{U_i}{X}_{\frac{\one}{d}}|^2=\\
\sum\limits_{i=0}^{d^2-1}|\hat{f}(i)|^2=\|f\|^2
\end{align*}
as by Plancherel's identity $\|f\|_{2}^2=\sum\limits_{i=0}^{d^2-1}|\hat{f}(i)|^2$.

\end{proof}

In the following $\liou:\M_d\to\M_d$ will denote a unital, reversible Liouvillian with spectrum $\lset \lambda_i\rset_{0\leq i\leq d^2-1}\subset\R$ and with unitary eigenvectors $\{U_i\}_{0\leq i\leq d^2-1}$  forming an almost commuting unitary basis. To such a Liouvillian we associate a classical semigroup $P_t:V(G)\to V(G)$ defined as:
\begin{align}
P_tf :=\sum\limits_{i=0}^{d^2-1}e^{\lambda_i t}\hat{f}(i)\chi_i.
\label{equ:Pt} 
\end{align}     
With this definition we can state the following theorem:

\begin{thm}\label{upperboundclassical}
Let $\liou:\M_d\to\M_d$ be a unital, reversible Liouvillian with an almost commuting unitary eigenbasis and $P_t:V(g)\to V(g)$ the associated classical semigroup as in \eqref{equ:Pt}.
Then:
\begin{align*}
\|e^{t\liou}\|_{2\to4,\frac{\one}{d}}\leq\|P_t\|_{2\to4} 
\end{align*}
\end{thm}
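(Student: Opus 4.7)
Plan. I would rely on the Fourier-type isomorphism $W:\M_d\to V(G)$, $Y\mapsto f_Y$, induced by the almost commuting unitary basis $\{U_i\}$ and the character basis $\{\chi_i\}$. By Lemma \ref{2normcoincide} this map is an isometry for the 2-norms, and since by hypothesis $e^{t\Lm}$ and $P_t$ are diagonal in the respective bases with matching eigenvalues $e^{\lambda_it}$, one has the intertwining $P_t\circ W=W\circ e^{t\Lm}$. Combined with the 2-norm isometry $\|Y\|_{2,\frac{\one}{d}}=\|f_Y\|_2$, the theorem would follow from the pointwise estimate
\[
\|e^{t\Lm}(Y)\|_{4,\frac{\one}{d}}\;\leq\;\|P_tf_Y\|_4\qquad\text{for every }Y\in\M_d,
\]
since taking the supremum over $Y$ on both sides then yields $\|e^{t\Lm}\|_{2\to 4,\frac{\one}{d}}\leq\|P_t\|_{2\to 4}$.

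To reach this pointwise estimate I would apply the identities $\|A\|_{4,\frac{\one}{d}}^2=\|A^\dagger A\|_{2,\frac{\one}{d}}$ and $\|f\|_4^2=\||f|^2\|_2$, reducing a 4-norm comparison to a 2-norm comparison of the ``squared'' objects. Writing $Y=\sum_ic_iU_i$, setting $\tilde c_i=e^{\lambda_it}c_i$, and expanding $(e^{t\Lm}(Y))^\dagger e^{t\Lm}(Y)$ in the almost commuting basis and $|P_tf_Y|^2$ in characters, Parseval in each basis recasts the comparison as
\[
\sum_{k\in G}\Bigl|\sum_{i\in G}\overline{\tilde c_i}\,\tilde c_{i+k}\,\psi_k(i)\Bigr|^2 \;\leq\; \sum_{k\in G}\Bigl|\sum_{i\in G}\overline{\tilde c_i}\,\tilde c_{i+k}\Bigr|^2,
\]
where the phases $\psi_k(i)\in U(1)$ come from the cocycle $U_i^\dagger U_{i+k}=\psi_k(i)U_k$ of the almost commuting basis, while classically one has $\psi_k\equiv 1$ because characters commute strictly.

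The main obstacle is controlling these cocycle phases, as $|A_k|$ can in principle exceed $|B_k|$ for individual $k$ if the phases happen to align constructively. The key structural observation is that for an almost commuting basis $\psi_k$ is itself a character of $G$ in the variable $i$, so $A_k$ is the Fourier coefficient of $i\mapsto\overline{\tilde c_i}\tilde c_{i+k}$ evaluated at the character $\psi_k$, while $B_k$ is the Fourier coefficient at the trivial character. As $k$ ranges over $G$, the assignment $k\mapsto\psi_k$ traverses $\hat G$ bijectively up to an overall re-indexing (a consequence of Pontryagin duality for finite abelian groups), so the sum $\sum_k|A_k|^2$ should reindex into $\sum_k|B_k|^2$ via a Parseval-type identity. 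I expect the closing step to require further use of positivity: since $e^{t\Lm}$ is completely positive, the supremum in $\|e^{t\Lm}\|_{2\to 4,\frac{\one}{d}}$ is attained on $Y\geq 0$ \cite{audenaertptoq,watrousptoq}, and the resulting Hermiticity relation $c_j=\overline{c_{-j}}\phi''(-j)$ on the Fourier coefficients should force the phases $\psi_k(i)$ to organize into a pattern compatible with the reindexing, ensuring the inequality goes in the required direction. Making the Pontryagin reindexing rigorous and verifying that positivity locks in the correct sign is where I expect the bulk of the technical work.
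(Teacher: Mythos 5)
Your overall architecture --- transferring to $V(G)$ via the Fourier coefficients, invoking Lemma \ref{2normcoincide} for the $2$-norms, and the intertwining of $e^{t\Lm}$ with $P_t$ --- is exactly the paper's. The gap is in the $4$-norm step: the pointwise estimate $\|e^{t\Lm}(Y)\|_{4,\frac{\one}{d}}\leq\|P_tf_Y\|_4$, i.e.\ $\|Z\|_{4,\frac{\one}{d}}\leq\|f_Z\|_4$ for $Z=e^{t\Lm}(Y)$, is false, and restricting to positive $Y$ does not rescue it. Take $d=2$, the Pauli/Weyl basis with $G=\Z_2\times\Z_2$, and $Z=\one_2+t(\sigma_1+\sigma_2-\sigma_3)$ with $0<t<1/\sqrt{3}$: this is positive definite (and for small $t$ it is $e^{s\lioud}$ applied to a positive operator for any fixed $s$, so it genuinely arises in your supremum), yet $\|Z\|_{4,\frac{\one}{2}}^4=1+18t^2+9t^4$ while $\|f_Z\|_4^4=1+14t^2+13t^4$, so your inequality fails by $4t^2-4t^4>0$. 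Correspondingly, the Pontryagin reindexing of $\sum_k|A_k|^2$ cannot work: $A_k$ is the $\psi_k$-Fourier coefficient of the function $h_k(i)=\overline{\tilde c_i}\,\tilde c_{i+k}$, which itself changes with $k$, so there is no single function to which Parseval applies and no rearrangement identity relating $\sum_k|\widehat{h_k}(\psi_k)|^2$ to $\sum_k|\widehat{h_k}(\mathbf{1})|^2$; the counterexample confirms the target inequality is simply not true.

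The missing idea is the paper's triangle-inequality step, which deliberately gives up the phases rather than tracking them. One expands $\tr[X^\dagger XX^\dagger X]$ over the constraint $i_1+i_3=i_2+i_4$ in $G$, bounds each trace phase $\frac{1}{d}\tr[U_{i_1}^\dagger U_{i_2}U_{i_3}^\dagger U_{i_4}]$ by its modulus, and recognizes the resulting sum of absolute values as $\|f'_X\|_4^4$ for the \emph{modified} function $f'_X=\sum_i|\hat f_X(i)|\chi_i$. The weaker pointwise bound $\|X\|_{4,\frac{\one}{d}}\leq\|f'_X\|_4$ still closes the argument because $\|f'_X\|_2=\|f_X\|_2=\|X\|_{2,\frac{\one}{d}}$ and the multipliers $e^{\lambda_i t}$ are positive, so that $f'_{e^{t\Lm}X}=P_tf'_X$ and hence $\|e^{t\Lm}X\|_{4,\frac{\one}{d}}\leq\|P_tf'_X\|_4\leq\|P_t\|_{2\to4}\|X\|_{2,\frac{\one}{d}}$. (In the example above $\|f'_Z\|_4^4=1+18t^2+24t^3+21t^4\geq\|Z\|_{4,\frac{\one}{2}}^4$, as it must.) Your reduction via $\|A\|_{4}^2=\|A^\dagger A\|_{2}$ can be retained, but the cocycle phases must be discarded by passing to absolute values of the Fourier coefficients; neither duality nor positivity of $Y$ will make them cooperate.
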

\begin{proof}
 Let $\{\lambda_i\}_{0\leq i\leq d^2-1}$ denote the spectrum of $\liou$ and $\{U_i\}_{0\leq i\leq d^2-1}$ the almost commuting unitary eigenbasis. Any $X\in\M_d$ can be written as $X=\sum_{i=0}^{d^2-1}\hat{f}(i)U_i$ with $\hat{f}_X(i)=\scalar{U_i}{X}_{\frac{\one}{d}}$ and we can also define $f_X$ as in \eqref{equ:fX}. Then we have:
 
\begin{align*}
\|X\|_{4,\frac{\one}{d}}^4&=\frac{1}{d}\tr[X^\dagger XX^\dagger X]=\frac{1}{d}\sum_{i_1,i_2,i_3,i_4}\hat{f}_X\lb i_1\rb^*\hat{f}_X\lb i_2\rb\hat{f}_X\lb i_3\rb^*\hat{f}_X\lb i_4\rb\tr[U^\dagger_{i_1}U_{i_2}U^\dagger_{i_3}U_{i_4}]  
\end{align*}
The unitaries $\lset U_i\rset^{d^2-1}_{i=0}$ commute and form a group up to a phase. Also by the orthogonality condition we have $\tr(U_i)=0$ for any $i\neq 0$, which implies
\begin{align*}
\frac{1}{d}|\tr[U^\dagger_{i_1}U_{i_2}U^\dagger_{i_3}U_{i_4}]|=\delta_{i_2+i_4-i_3-i_1,0}.
\end{align*}
By the triangle inequality we have:
\begin{align}\label{expression4norm2}
\|X\|_{4,\frac{\one}{d}}^4\leq\sum\limits_{i_2+i_4-i_3-i_1=0}|\hat{f}_X\lb i_1\rb\hat{f}_X\lb i_2\rb\hat{f}_X\lb i_3\rb\hat{f}_X\lb i_4\rb| 
\end{align}
Now define $f'_X\in V(G)$ as $f'_X=\sum_{i=0}^{d^2-1}|\hat{f}_X\lb i\rb|\chi_i$ and note that
\begin{align*}
\|f_X\|^4_4 = \|f'_X\|_4^4&=\frac{1}{d^2}\sum\limits_{g\in G}\left|\sum\limits_{i=0}^{d^2-1}\left|\hat{f}_X\lb i\rb\right|\chi_i\lb g\rb\right|^4\\
&=\frac{1}{d^2}\sum\limits_{g\in G}\sum_{i_1,i_2,i_3,i_4}\left|\hat{f}_X\lb i_1\rb\hat{f}_X\lb i_2\rb\hat{f}_X\lb i_3\rb\hat{f}_X\lb i_4\rb\right|\chi_{i_2+i_4-i_3-i_1}(g).
\end{align*}
where we have used the identities $\chi_i(g)^*=\chi_{-i}(g)$ and $\chi_{i_1}(g)\chi_{i_2}(g)=\chi_{i_1+i_2}(g)$ for any $g\in G$.
By the orthogonality of characters we have $\sum_{g\in G}\chi_{a}(g)=d^2\delta_{a0}$ and therefore
\begin{align*}
\|f_X\|_4^4=\sum\limits_{i_2+i_4-i_3-i_1=0}|\hat{f}_X\lb i_1\rb\hat{f}_X\lb i_2\rb\hat{f}_X\lb i_3\rb\hat{f}_X\lb i_4\rb|.  
\end{align*}
It then follows from \eqref{expression4norm2} that $\|X\|_{4,\frac{\one}{d}}^4\leq\|f_X\|_4^4$.

The semigroup $e^{t\liou}$ acts as a multiplication operator in the almost commuting unitary eigenbasis $\{U_i\}^{d^2}_{i=0}$. By construction the classical semigroup $P_t$ from \eqref{equ:Pt} associated to $e^{t\liou}$ acts in the same way in the basis of characters. This implies:
\begin{align}
\|e^{t\liou}X\|_{4,\frac{\one}{d}}\leq\|P_tf_X\|_4\leq\|P_t\|_{2\to4}\|f_X\|_2=\|P_t\|_{2\to4}\|X\|_{2,\frac{\one_d}{d}}  
\end{align}
using Lemma \ref{2normcoincide} for the last equality.

\end{proof}

Note that the proof of the previous theorem can be used for any $2\to q$ norm with $q$ an even integer~\cite{junge}.

Consider two unital, reversible Liouvillians $\liou_1:\M_{d_1}\to\M_{d_1}$ and $\liou_2:\M_{d_2}\to\M_{d_2}$ with spectra $\lset\lambda_i\rset^{d^2_1 -1}_{i=0}$ and $\lset\mu_j\rset^{d_2^2 -1}_{j=0}$ and almost commuting unitary eigenbases $\{U^1_i\}_{0\leq i\leq d_1^2-1}$ and $\{U^2_j\}_{0\leq j\leq d_2^2-1}$ associated to abelian groups $G_1$ and $G_2$. Now we can apply the above theorem to the tensor product semigroup $e^{t\Lm} = e^{t\liou_1}\otimes e^{t\liou_2}$ generated by $\Lm = \liou_1\otimes\id_{d_2}+\id_{d_1}\otimes\liou_2$. It can be verified easily that $\{U^1_i\otimes U^2_j\}_{0\leq i\leq d_1^2-1,0\leq j\leq d_2^2-1}$ is an almost commuting unitary eigenbasis for $\Lm$ associated to the abelian group $G_1\times G_2$. Let $Q_t$ denote the classical semigroup acting on $V(G_1)\otimes V(G_2)\cong V(G_1\times G_2)$ associated to $e^{t\Lm}$ as in \eqref{equ:Pt}. Also let $P^1_t$ and $P^2_t$ denote the classical semigroups associated in the same way to $e^{t\Lm_1}$ and $e^{t\Lm_2}$ respectively. Note that for any $\chi_{i,j}\in\widehat{G_1\times G_2}$ we have 
\begin{align*}
Q_t\chi_{i,j}=Q_t\chi_i\chi_j=e^{\lambda_i t}e^{\mu_j t}\chi_i\chi_j=P^1_t\otimes P^2_t\chi_{i,j} 
\end{align*}   
and hence $Q_t = P^1_t\otimes P^2_t$ as the characters (in $\widehat{G_1\times G_2}$) form a basis of $V(G_1\times G_2)$. This proves the following corollary:

\begin{cor}\label{uppertensors}
Let $\liou_1:\M_{d_1}\to\M_{d_1}$ and $\liou_2:\M_{d_2}\to\M_{d_2}$ be unital, reversible Liouvillians with almost commuting unitary eigenbases associated to abelian groups $G_1$ and $G_2$. Furthermore, let $P^1_t$ and $P^2_t$ be the associated classical semigroups as in \eqref{equ:Pt} acting on $V(G_1)$ and $V(G_2)$. Then:
\begin{align*}
\|e^{t\liou_1}\otimes e^{t\liou_2}\|_{2\to4,\frac{\one}{d_1d_2}}\leq\|P^1_t\otimes P^2_t\|_{2\to4} 
\end{align*}
\end{cor}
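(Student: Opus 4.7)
The plan is to reduce this corollary directly to Theorem \ref{upperboundclassical} by verifying that the hypotheses of that theorem are satisfied by the tensor product Liouvillian and then identifying the associated classical semigroup explicitly as the tensor product $P^1_t\otimes P^2_t$.

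First, I would let $\Lm:=\liou_1\otimes\id_{d_2}+\id_{d_1}\otimes\liou_2$ so that $e^{t\Lm}=e^{t\liou_1}\otimes e^{t\liou_2}$, and check the structural properties needed to invoke Theorem \ref{upperboundclassical}. Since $\liou_1,\liou_2$ are unital and reversible, so is $\Lm$ (unitality follows from $\Lm(\one_{d_1}\otimes \one_{d_2})=0$; reversibility follows because the Hilbert--Schmidt adjoint of a tensor is the tensor of adjoints). Next I would check that the family $\{U^1_i\otimes U^2_j\}_{i,j}$ is an almost commuting unitary eigenbasis for $\Lm$ in the sense of Definition \ref{defn:AlmostCommut}, associated with the abelian group $G_1\times G_2$: orthogonality with respect to the normalized Hilbert--Schmidt inner product follows from multiplicativity of traces on tensor products, and the projective-representation condition $(U^1_i\otimes U^2_j)(U^1_k\otimes U^2_\ell)=\phi_1(i,k)\phi_2(j,\ell)\,U^1_{i+k}\otimes U^2_{j+\ell}$ inherits the cocycle from each factor. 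By linearity $\Lm(U^1_i\otimes U^2_j)=(\lambda_i+\mu_j)(U^1_i\otimes U^2_j)$, so this is indeed an eigenbasis with eigenvalues $\lambda_i+\mu_j$.

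Second, I would identify the classical semigroup $Q_t$ associated with $\Lm$ via \eqref{equ:Pt}. Using the canonical identification $\widehat{G_1\times G_2}\cong \hat{G_1}\times\hat{G_2}$, every character on $G_1\times G_2$ is a product $\chi_i\chi_j$ with $\chi_i\in\hat{G_1}$, $\chi_j\in\hat{G_2}$. Plugging in the eigenvalues of $\Lm$ gives
\begin{align*}
Q_t(\chi_i\chi_j)=e^{(\lambda_i+\mu_j)t}\chi_i\chi_j=(e^{\lambda_i t}\chi_i)(e^{\mu_j t}\chi_j)=(P^1_t\chi_i)(P^2_t\chi_j).
\end{align*}
Since such product characters form a basis of $V(G_1\times G_2)\cong V(G_1)\otimes V(G_2)$, the equality extends linearly to $Q_t=P^1_t\otimes P^2_t$.

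Finally, applying Theorem \ref{upperboundclassical} to $\Lm$ yields
\begin{align*}
\|e^{t\liou_1}\otimes e^{t\liou_2}\|_{2\to 4,\frac{\one}{d_1 d_2}}=\|e^{t\Lm}\|_{2\to 4,\frac{\one}{d_1 d_2}}\leq \|Q_t\|_{2\to 4}=\|P^1_t\otimes P^2_t\|_{2\to 4},
\end{align*}
which is the claim. I do not expect a serious obstacle here; the main bookkeeping step is the compatibility of the projective-representation phases under tensoring (which is immediate factor by factor) and the clean identification of $\widehat{G_1\times G_2}$ with product characters so that the classical semigroup literally factorizes as $P^1_t\otimes P^2_t$.
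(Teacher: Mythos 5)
Your proposal is correct and follows essentially the same route as the paper: the paper also forms $\Lm=\liou_1\otimes\id_{d_2}+\id_{d_1}\otimes\liou_2$, notes that $\{U^1_i\otimes U^2_j\}$ is an almost commuting unitary eigenbasis associated to $G_1\times G_2$, identifies the associated classical semigroup with $P^1_t\otimes P^2_t$ by evaluating on product characters, and then invokes Theorem \ref{upperboundclassical}. You merely spell out a few verifications (orthogonality, the projective-representation phases, reversibility) that the paper leaves as ``easily verified.''
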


\bibliographystyle{alpha}
\bibliography{mybibliography1}

\end{document}